%% file: main.tex
\documentclass[a4paper,USenglish,cleveref, autoref, thm-restate]{lipics-v2021}

\usepackage{graphicx} 
\usepackage{algorithm}
\usepackage{algpseudocodex}

\newtheorem{question}{Question}

\title{Distributed Triangle and Simplex Enumeration in Hypergraphs} 

\author{Duncan Adamson}{School of Computer Science, University of St Andrews, UK}{duncan.adamson@st-andrews.ac.uk}{0000-0003-3343-2435}{}

\author{Will Rosenbaum}{School of Computer Science and Informatics, University of Liverpool, UK}{w.rosenbaum@liverpool.ac.uk}{0000-0002-7723-9090}{}

\author{Paul G. Spirakis}{School of Computer Science and Informatics, University of Liverpool, UK}{P.Spirakis@liverpool.ac.uk}{0000-0001-5396-3749}{}

\authorrunning{Duncan Adamson, Will Rosenbaum, and Paul G. Spirakis} 

\Copyright{Duncan Adamson, Will Rosenbaum, and Paul G. Spirakis} 

\category{} 

\keywords{Distributed Graph Algorithms, Hypergraphs, Triangle Enumeration, Simplex Enumeration, CONGEST Model, CLIQUE Model, Lower Bounds}

\ccsdesc[500]{Theory of computation~Distributed algorithms}
\ccsdesc[500]{Theory of computation~Graph algorithms analysis}
\ccsdesc[300]{Networks~Network algorithms}
\ccsdesc[500]{Mathematics of computing~Hypergraphs}
\ccsdesc[500]{Mathematics of computing~Graph enumeration}
\ccsdesc[300]{Mathematics of computing~Information theory}

\relatedversion{} 

\nolinenumbers 

\newcommand{\dft}[1]{\emph{\textbf{#1}}}
\newcommand{\paren}[1]{\left(#1\right)}

\newcommand{\set}[1]{\left\{#1\right\}}
\newcommand{\sqb}[1]{\left[#1\right]}
\newcommand{\mset}[1]{\left\{\!\middle\{#1\middle\}\!\right\}}
\newcommand{\sucht}{\,\middle|\,}
\newcommand{\rank}{\mathrm{rank}}

\newcommand{\KT}{\mathrm{KT}}

\newcommand{\abs}[1]{\left|#1\right|}

\DeclareMathOperator{\dist}{dist}
\DeclareMathOperator{\tr}{tr}
\DeclareMathOperator{\poly}{poly}

\newcommand{\tmix}{\tau_{\mathrm{mix}}}

\newcommand{\E}{\mathbb{E}}

\newcommand{\eps}{\varepsilon}

\newcommand{\calT}{\mathcal{T}}
\newcommand{\calE}{\mathcal{E}}
\newcommand{\calA}{\mathcal{A}}
\newcommand{\calS}{\mathcal{S}}

\DeclareMathOperator{\supp}{supp}

\newcommand{\bfe}{\mathbf{e}}
\newcommand{\bfw}{\mathbf{w}}
\newcommand{\bfX}{\mathbf{X}}
\newcommand{\bfY}{\mathbf{Y}}
\newcommand{\bfZ}{\mathbf{Z}}

\newcommand{\peel}{\textnormal{\textsc{Peel}}}
\newcommand{\ppeel}{\textnormal{\textsc{ParallelPeel}}}
\newcommand{\flood}{\textnormal{\textsc{Flood}}}
\newcommand{\enumeratetriangles}{\textnormal{\textsc{EnumerateMotifs}}}

\algblockdefx{ParFor}{EndParFor}[1]{\textbf{for} #1 \textbf{in parallel do}}{\textbf{end for}}

\begin{document}

\maketitle

\begin{abstract}
	In the last decade, subgraph detection and enumeration have emerged as central problems in distributed graph algorithms. This is largely due to the problems' theoretical challenges and practical applications. In this paper, we initiate the systematic study of distributed sub-hypergraph enumeration in hypergraphs. To this end, we (1)~introduce several computational models for hypergraphs that generalize the CONGEST model for graphs and evaluate their relative computational power, (2)~devise algorithms for distributed triangle and simplex enumeration in our computational models and prove their optimality in two such models by showing  matching lower bounds, (3)~introduce classes of sparse and ``everywhere sparse'' hypergraphs and describe efficient distributed algorithms for triangle and simplex enumeration in these classes, and (4)~describe general techniques that we believe to be useful for designing efficient algorithms in our hypergraph models.
\end{abstract}

\input{intro.tex}
\input{technical-overview.tex}
\input{preliminaries.tex}
\input{clique.tex}

\input{clique-ub.tex}

\input{clique-lb.tex}
\input{general-hypergraphs.tex}
\input{sparse-hypergraphs.tex}
\input{conclusion.tex}

\bibliography{bib}

\end{document}

%% file: intro.tex
\section{Introduction}
\label{sec:intro}


In distributed graph algorithms, a network is represented by a graph $G = (V, E)$. The set $V$ of vertices represents computing entities (nodes) and edges $e \in E$ represent communication links between vertices. Initially, vertices have little or no global information about the network $G$. Yet the goal is to find an efficient procedure for vertices to exchange messages and perform local computations such that the respective local outputs of the vertices satisfy some global specification. Some well-studied problems in this area include local symmetry breaking (e.g., finding maximal independent sets, maximal matchings, and proper colorings), global optimization problems (e.g., minimum spanning trees), identifying local structures (e.g. triangle/clique detection and enumeration), and approximate global optimization (e.g., maximum matchings and stable matchings). 


Fundamental to the study of these networks is the \emph{communication model}, the restrictions under which information may be shared between nodes. Our focus is on synchronous models, where executions proceed in synchronous rounds. In each round, vertices may perform unbounded local computation, send messages to neighboring vertices, and receive messages sent from neighboring nodes. The main measure of complexity is the number of rounds necessary and sufficient to perform a given task.
Three well-studied communication models introduced by~\cite{Linial1992Locality,Lotker2005MinimumWeight,Peleg2000Distributed} are:
\begin{description}
	\item[LOCAL]~\cite{Linial1992Locality} no restriction on message sizes, but communication is only between adjacent vertices
	\item[CONGEST]~\cite{Peleg2000Distributed} adjacent pairs of vertices can exchange messages of length $B$ bits (typically $B = O(\log n)$, where $n = \abs{V}$ is the number of vertices in the graph)
	\item[CLIQUE]~\cite{Lotker2005MinimumWeight} (or CONGESTED CLIQUE) and the related massively parallel computation (MPC) model~\cite{Karloff2010Model}, where each vertex can exchange short ($B$-bit) messages with all other vertices in the network.
\end{description}
These models each emphasize one or more ``dimensions'' of computational complexity for distributed systems, while abstracting away details such as local computational cost. Complexity in the LOCAL model measures how far in the local network each vertex must see in order to produce its local output. In contrast, complexity in the CLIQUE model measures how many bandwidth-restricted communication rounds are required to perform the task, independent of locality in the underlying graph $G$. Finally, complexity in the CONGEST model gives hybrid measure of complexity with restricted communication and locality in $G$. Thus, for a given task, understanding its computational complexity in these different models provides contrasting views on the aspects of distributed computing that make the problem easy or hard to solve.


The computational models above, however, abstract away many features of distributed systems that may be relevant to understanding the computational power of real-world systems. Two such shortcomings of these models is that they are homogenous (all vertices and edges have precisely the same computational power) and they do not model ``higher order'' interactions that involve more than two vertices. Indeed, many large-scale networks are hybrid or hierarchical in nature. The internet is best viewed as a ``network of networks'' with low-latency, high bandwidth connections between devices on a local area network, and potentially costlier connectivity between devices in different local networks~\cite{Atzori2010Internet,Peterson2011Computer}; data centers consist of thousands of servers, each with dozens or hundreds of cores on a single device~\cite{Benson2010Network}; social networks consist of many individuals that share different communication channels for different social groups to which each individual might belong~\cite{Kivela2014Multilayer}. In all of these settings, both the structure and the computational capabilities of the system are, arguably, better modelled as a hypergraph rather than a traditional graph~\cite{Battiston2020Networksa,Benson2016Higherorder,Bretto2013Hypergraph}.

A \dft{hypergraph} $H = (V, E)$ consists of a set, $V$, of vertices and a set, $E$, of edges that model connections between vertices. Unlike traditional graphs, each edge $e \in E$ may contain more than two vertices, thereby modelling higher order interactions between vertices. Hypergraphs offer greater descriptive power in defining and/or analyzing network structure compared to graphs. For example, specifying that a set of 5 devices share the same local network is more informative than saying that the 5 devices can communicate pair-wise with each other. Similarly, utilizing a shared email list or group chat is potentially both more efficient for planning social events, and the structure of such groups is more informative about the structure of a social network than pair-wise connectivity alone.

Given the generality and flexibility of hypergraphs in modelling networks, they have become an increasingly popular model for complex networks in data mining and machine learning~\cite{Antelmi2024Survey,Huang2015Scalable,Lee2025Survey}. Yet most of the work in this area focuses on centralized and parallel algorithms applied to hypergraphs, rather than modelling the underlying computational system as a hypergraph.

We believe that viewing underlying computational network as a hypergraph is illuminating in many cases. Indeed, such a view is natural for hybrid networks as described above, where the inclusion of hyperedges is used to model faster or more efficient communication between tightly-connected local networks, and the large-scale network structure is defined by the hypergraph. Our main motivation in this work is to initiate the systematic investigation of distributed hypergraph algorithms with congestion by analogy with the CLIQUE and CONGEST models for graphs. We focus attention on two generalizations of the triangle enumeration problem to hypergraphs: 3-cycle enumeration (hereafter referred to as hypergraph triangle enumeration) and simplex enumeration. Our reasons for considering these problem are twofold. On the theoretical side, triangle enumeration is in some sense the ultimate problem for understanding congestion in a network, as the problem requires a polynomial number of rounds the the CLIQUE and CONGEST models~\cite{Censor-Hillel2025Distributed,Chang2021Nearoptimal,Dolev2012Tri,Izumi2017Triangle,Pandurangan2021Distributed} while it can trivially be solved in a single round of the LOCAL model. On the applied side, the distribution of triangles (in both graphs and hypergraphs) and simplices are well-studied parameters that can reveal significant structural information about the network~\cite{Benson2018Simplicial,Jha2015SpaceEfficient,KoohiEsfahani2022LOTUS,LaRock2023Encapsulation,Sotiropoulos2021Triangleaware,Tsourakakis2011Counting,Chakrabarti2022Counting}.

\subsection{Our Contributions}\label{sec:contributions}

Our main contributions in this paper are to build a formal and conceptual framework for distributed algorithms in hypergraphs under congestion, and to analyze the complexity of the distributed triangle and simplex enumeration tasks within this framework. The primary contributions and results are as follows.

In Section~\ref{sec:computational-models}, we introduce six different models of for synchronous distributed computation that are all generalizations of the CONGEST model for graphs. Very briefly, these models are
\begin{itemize}
	\item PRIMAL CONGEST (PC) which is equivalent to the CONGEST model applied to the ``primal graph'' (i.e., the graph obtained by connecting each pair of vertices in each hyperedge with an edge)
	\item EDGE CLIQUE (EC) in which each hyperedge can simulate a single round of the CLIQUE model in each round
	\item EDGE BROADCAST (EB) in which each vertex in each hyperedge can broadcast a single message to all other vertices in the hyperedge
	\item EDGE UNICAST (EU) in which each vertex can send a single message to one neighboring vertex in each  incident edge in each round
	\item EDGE SOLOCAST (ES) in which each hyperedge specifies a single vertex to broadcast a message to an edge each round, and
	\item EDGE PAIRCAST (EP) in which each hyperedge specifies a single pair of incident vertices that can exchange messages each round.
\end{itemize}
We provide simulation results that characterize the relative power of these models (Propositions~\ref{prop:model-relationships} and~\ref{prop:pc-relationship}).

In Section~\ref{sec:clique}, we analyze the complexity of triangle and simplex enumeration in hypergraphs in the CLIQUE, PC, and EC models. In the triangle enumeration problem, each vertex should output a set of triangles (i.e., simple 3-cycles in $H$) such that every triangle in $H$ is output by precisely one vertex. Similarly, the simplex enumeration task requires that each simplex in the hypergraph is output by exactly one vertex.Recall that the \dft{rank} of a hypergraph $H = (V, E)$ is the largest cardinality of any hyperedge $e \in E$. We establish the following results.

\begin{theorem}[Informal]\label{thm:clique-inf}
	Consider the triangle and simplex enumeration tasks in hypergraphs of rank $r$. Then:
	\begin{enumerate}
		\item Triangle enumeration can be performed in $O(n^{r-5/3}/\log n)$ rounds in the CLIQUE model.
		\item Triangle enumeration requires $\Omega(n^{r-5/3}/\log n)$ rounds in the CLIQUE and PC models, and $\Omega(n^{1/3}/\log n)$ in the EC model.
		\item Triangle enumeration can be performed in $\widetilde{O}(n^{r-5/3})$ rounds in expectation in the PC and EC models.
		\item Simplex enumeration can be performed in $\Theta(n^{r - 2 + 1/(r+1)})$ rounds in the CLIQUE model, and this bound is tight.
	\end{enumerate}
\end{theorem}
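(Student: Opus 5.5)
We prove the four items separately. Upper bounds use the standard partition-based enumeration strategy (as in Dolev, Lenzen and Peled for graphs). Lower bounds use an information-theoretic counting argument (as in Izumi and Le Gall) adapted to rank-$r$ hypergraphs.

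\textbf{Upper bound in the CLIQUE model (item 1).} A hypergraph triangle is a triple of vertices $u,v,w$ together with three distinct hyperedges $e_1\ni u,v$, $e_2 \ni v,w$, $e_3\ni w,u$. We partition $V$ into $n^{1/3}$ parts of size $n^{2/3}$. Each of the $n$ vertices is assigned one triple of parts, and it must learn every hyperedge that meets at least two of its three parts. We then bound the number of such hyperedges. A hyperedge is determined by its vertex set, and a rank-$r$ hyperedge containing a fixed pair of vertices from two chosen parts ranges over at most $O((n^{2/3})^2 \cdot n^{r-2})$ possibilities. This gives $O(n^{r-2/3})$ bits of relevant incidence information per vertex. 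Each vertex has bandwidth $n\log n$ bits per round, so a Lenzen-style routing step delivers everything in $O(n^{r-5/3}/\log n)$ rounds. The point needing care is that every hyperedge is held by a unique owner; each vertex therefore sends the relevant indicator bits of the hyperedges it owns, and the total load is balanced over senders.

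\textbf{Lower bounds (item 2).} We consider a random rank-$r$ hypergraph in which each $r$-set is a hyperedge independently with probability $1/2$. A vertex that outputs $t$ triangles must have learned enough about the hyperedges involved. Following Izumi--Le Gall, we show via a Kruskal--Katona/Shearer type inequality that any set of $t$ output triangles "covers" at least roughly $t^{2/3}$ pairs, and hence forces entropy of about $t^{2/3} n^{r-2}$ bits. Some vertex must output $\Omega(n^{2})$ triangles on average, because there are $\Theta(n^3)$ triangles in total. That vertex then needs $\Omega(n^{4/3}\cdot n^{r-2})=\Omega(n^{r-2/3})$ bits, but it receives only $O(n\log n)$ bits per round, which gives $\Omega(n^{r-5/3}/\log n)$. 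PC is simulated by CLIQUE, so the same bound transfers to it. For EC, the vertex's incident edges may deliver far more bandwidth, so we instead place the hard instance on a structure where a cut between two vertex halves carries little capacity, and this yields the weaker $\Omega(n^{1/3}/\log n)$. I expect the main obstacle to be the entropy-covering inequality for hyperedge triangles, since the three hyperedges may overlap in more than two vertices.

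\textbf{PC/EC upper bounds and simplices (items 3 and 4).} For item 3 we simulate the CLIQUE algorithm on the primal graph using known expander-decomposition-based CONGEST triangle-enumeration techniques (Chang et al.), which cost $\widetilde O(\cdot)$ overhead in expectation. By Proposition~\ref{prop:model-relationships}, EC can simulate PC. For item 4, a simplex is determined by $r+1$ vertices. We partition $V$ into $n^{1/(r+1)}$ parts and assign each vertex an $(r+1)$-tuple of parts, so that each vertex learns the hyperedges inside its tuple. This amounts to $O((n^{r/(r+1)})^{r})$ bits, and it gives $O(n^{r-2+1/(r+1)})$ rounds. The matching lower bound repeats the entropy argument using the generalized Loomis--Whitney inequality for $r$-subsets of $(r+1)$-sets.
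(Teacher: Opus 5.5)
\textbf{The gap is in the triangle lower bound of item~2.} You count $\Theta(n^3)$ triangles, pigeonhole to $\Omega(n^2)$ per vertex, get $\Omega(n^{4/3})$ covered pairs, and then multiply by $n^{r-2}$. But the objects being enumerated are hypergraph triangles $(v_0,e_0,v_1,e_1,v_2,e_2)$. For $r\ge 3$, $H(n,r,1/2)$ has $\Theta(n^{3r-3})$ of them, and the task allows the $\Theta(n^{3r-6})$ triangles on a single vertex triple to be split among many output vertices. The information lemma (Lemma~\ref{lem:triangle-information}) charges a vertex only for hyperedges that actually occur in its own output. Covering a pair says nothing about the other hyperedges through that pair. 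For example, a vertex that outputs one triangle on each triple of a set of $n^{2/3}$ vertices, using one fixed hyperedge per pair, covers $\Theta(n^{4/3})$ pairs but certifies only $\Theta(n^{4/3})\ll n^{r-2/3}$ hyperedges. Pigeonholing over vertex triples does not prevent the selected vertex from looking like this.

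This is exactly the step you flag as the ``main obstacle,'' and it is left unresolved. The missing idea has two parts. First, pigeonhole over hypergraph triangles, so that some $\bfw$ outputs $t=\Omega(n^{3r-4})$ triangles with constant probability. Second, prove a Rivin-type bound that accounts for multiplicities:
\begin{itemize}
\item pass to the induced multigraph $G_{\set{H}}$, which has one copy of $\set{u,v}$ per hyperedge containing both, so edge multiplicities are at most $\mu=\binom{n-2}{r-2}$ and $m$ hyperedges give $m'=\binom{r}{2}m$ multigraph edges;
\item with $A$ its adjacency matrix, you have $6t\le\tr(A^3)$, $\tr(A^2)=\sum_{i,j}a_{ij}^2\le 2\mu m'$, and $(\tr(A^3))^2\le(\tr(A^2))^3$.
\end{itemize}
Hence $t=O\paren{(m n^{r-2})^{3/2}}$, i.e.\ $\abs{E(\calT_\bfw)}=\Omega(t^{2/3}/n^{r-2})=\Omega(n^{r-2/3})$ (Theorem~\ref{thm:triangle-edge-bound}). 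From there your entropy-per-round accounting goes through.

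\textbf{Your EC lower bound is too vague to check.} In the hard instance every pair of vertices shares $\Theta(n^{r-2})$ hyperedges, so there is no low-capacity cut. You neither specify a modified instance nor show how a cut argument yields exactly $n^{1/3}/\log n$. The paper gets this bound by simulation instead. One EC round is simulated by $\Delta_2=O(n^{r-2})$ PC rounds, hence by that many CLIQUE rounds (Proposition~\ref{prop:pc-relationship}, Corollary~\ref{cor:clique-ec}). Dividing the CLIQUE bound by $n^{r-2}$ gives $\Omega(n^{1/3}/\log n)$.

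\textbf{Items 1, 3, 4 follow the paper's route, with two details to fix.}
\begin{itemize}
\item In item~3, state that whenever Chang et al.'s algorithm ships a primal edge $\set{u,v}$, you ship all hyperedges containing $u,v$ (or an $O(n^{r-2})$-bit indicator vector). This is the source of the $n^{r-2}$ factor. Running their algorithm on the primal graph alone only finds vertex triples.
\item In item~4, your $O(n^{r-1+1/(r+1)})$ bits per vertex at $n\log n$ bits per round give $O(n^{r-2+1/(r+1)}/\log n)$ rounds. That is what actually matches the $\Omega(n^{r-2+1/(r+1)}/\log n)$ entropy bound; as written, your two bounds differ by a $\log n$ factor. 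Your shadow/Loomis--Whitney inequality is a fine substitute for the paper's Kruskal--Katona step.
\end{itemize}
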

The upper bounds of Theorem~\ref{thm:clique-inf} follow from natural generalizations the algorithms of Dolev, Lenzen, and Peled~\cite{Dolev2012Tri} (for the CLIQUE model) and Chang et al.~\cite{Chang2021Nearoptimal} for triangle enumeration in the PC and EC models. Our lower bounds for both triangle and simplex enumeration generalize the argument of Izumi and Le Gall~\cite{Izumi2017Triangle}. The main technical contribution in this section is the derivation of a combinatorial bound on the minimum number of (hyper)edges required to support a fixed number of triangles in hypergraphs (Theorem~\ref{thm:triangle-edge-bound}), which we believe to be novel.

While the upper and lower bounds of Theorem~\ref{thm:clique-inf} are tight in the CLIQUE and PC models, the exponential dependence on $r$ shows that these results may be impractical for hypergraphs with large rank. Thus, in Section~\ref{sec:sparse} we turn our focus to families of sparse hypergraphs. We consider two notions of sparseness. The first is ``global sparseness'' in which the sum of degrees of the vertices is $O(n)$. The second notion is ``everywhere sparseness'' in which the density of every sub-hypergraph is bounded by a parameter $\mu$. This notion generalizes the (asymptotically equivalent) notions of bounded arboricity and bounded degeneracy in graphs, though we believe its definition for hypergraphs to be novel. For sparse and everywhere sparse hypergraphs, we prove the following result.
\begin{theorem}[Informal]\label{thm:sparse-inf}
	The following bounds hold for triangle and simplex enumeration in sparse hypergraphs.
	\begin{itemize}
		\item Triangle and simplex enumeration in sparse hypergraphs on $n$ vertices can be performed in $O(n)$ rounds of the PC and EB models, and for every $\eps > 0$, $\Omega(n^{1-\eps})$ rounds are required for the PC model.
		\item In hypergraphs with maximum density $\mu$, triangle and simplex enumeration can be performed in $O(\mu r)$ rounds in the PC and EB models in hypergraphs with rank $r$, and the value of $\mu$ need not be known in advance. For every $\eps > 0$, $\Omega(\mu^{1-\eps})$ rounds are required for triangle and simplex enumeration in the PC model on hypergraphs with maximum density $\mu$ and constant rank $r$.
	\end{itemize}
\end{theorem}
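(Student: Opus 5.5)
The informal statement has two upper bounds and two lower bounds. I will derive the upper bounds from a single peeling routine and obtain the lower bounds by embedding known graph lower bounds into low-rank hypergraphs.

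\textbf{Upper bounds.} I would use a distributed peeling procedure (\peel/\ppeel) that generalizes the bounded-arboricity orientation technique for graphs. If every sub-hypergraph has density at most $\mu$, then every sub-hypergraph contains a vertex of degree $O(\mu)$. More strongly, a constant fraction of its remaining vertices have degree at most $2\mu$, by averaging the degree sum.

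In each phase, every remaining vertex of current degree at most $2\mu$ (or $2\hat\mu$ for a guessed $\hat\mu$) does two things. First, it \emph{floods} (\flood) its entire incident hyperedge list to its neighbours. Second, it removes itself. Each incident hyperedge has at most $r$ vertex names, so the list has $O(\mu r)$ words. This costs $O(\mu r)$ rounds in PC: the vertex sends its list over each primal edge. In EB it costs the same, since one broadcast per hyperedge per round suffices.

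Any triangle or simplex is then enumerated by the first of its vertices to be peeled. That vertex knows its own edges and receives the edge lists of all later-peeled neighbours. Hence it can detect every triangle or simplex through itself and output it exactly once, with ties broken by vertex ID.

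To avoid $O(\log n)$ phases multiplying the cost, I would charge each vertex's flood only once. The aim is a total bound of $O(\mu r)$ rather than $O(\mu r\log n)$. One route is to pipeline: a vertex sends its edge list once, at its peeling time, and the phases overlap. If that fails, I would accept the guessing scheme with $\hat\mu$ doubled: try $\hat\mu=1,2,4,\dots$, abort if peeling stalls, and restart. This gives a geometric sum of $O(\mu r)$ per attempt up to the phase count. It also makes $\mu$ unnecessary to know in advance.

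For globally sparse hypergraphs, where the degree sum is $O(n)$, the density of any sub-hypergraph is at most $O(n)$ trivially. However, $r$ may be large, so I would instead use direct pipelined flooding. Each vertex learns all $O(n)$ incidences within its neighbourhood, which takes $O(n)$ words over each link and hence $O(n)$ rounds. It then enumerates locally, with the lowest-ID member of each triangle or simplex outputting it.

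\textbf{Lower bounds.} I would reduce from graph triangle enumeration in CONGEST on graphs of arboricity or degeneracy $\alpha$. Known bounds give $\widetilde\Omega(\alpha)$ or $\Omega(\alpha^{1-\eps})$ via Izumi--Le Gall-type information arguments, applied to a random graph of density $\alpha$ embedded in $n$ vertices. A graph is a rank-2 hypergraph whose primal graph is itself, and its density $\mu$ is $\Theta(\alpha)$. PC on it is exactly CONGEST, so the graph lower bound transfers directly.

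For the sparse case, take a graph with $O(n)$ edges containing a dense part on $\sqrt{n}$ vertices. This alone gives only $\sqrt n$, so instead I would use random graphs on $n^{1-\eps}$-vertex blocks, or the Izumi--Le Gall bound parameterized by the number of triangles per node. The main obstacle is exactly this: realizing $\Omega(n^{1-\eps})$ under linear total degree. My first attempt would be hyperedges of larger rank. Replacing a dense graph on $k$ vertices by hyperedges that each cover many pairs keeps the degree sum small while the primal graph stays dense. Combining this with the edge-bound of Theorem~\ref{thm:triangle-edge-bound} should show that a vertex must receive $\Omega(n^{1-\eps})$ bits of information over $O(\log n)$-bit links.
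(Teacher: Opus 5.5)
\textbf{There is a genuine gap in the lower-bound half.} Your upper-bound plan is close to the paper's, but the lower bounds do not go through.

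For the density bound you reduce from a rank-$2$ CONGEST bound of $\widetilde\Omega(\alpha)$ or $\Omega(\alpha^{1-\eps})$ for graph triangle enumeration. No such bound exists. The Izumi--Le Gall argument on an embedded $G(k,1/2)$ with $k=\Theta(\alpha)$ gives only $\Omega(\alpha^{1/3}/\log n)$, which is exactly the paper's Corollary~\ref{cor:maximum-density-lb} with $r=2$. Moreover, on dense graphs, where $\alpha=\Theta(n)$, Chang et al.\ enumerate in $\widetilde O(n^{1/3})=\widetilde O(\alpha^{1/3})$ rounds.

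The same misconception runs through your sparse attempts. A dense graph on $\sqrt n$ vertices gives about $n^{1/6}$, not $\sqrt n$. Random graphs on $n^{1-\eps}$ vertices have $\Theta(n^{2-2\eps})$ edges, so they are not sparse.

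Your instinct to raise the rank is correct, but the hardness does not come from making the primal graph dense with few hyperedges; that \emph{lowers} the entropy a vertex must learn. It comes from many hyperedges sharing each primal link, which PC cannot exploit. The construction you are missing is the following.
\begin{itemize}
\item Take $r=\lceil 2/\eps\rceil$ and $n'=n^{1/r}$, and plant $H(n',r,1/2)$ on $[n']$. It has about $(n')^r\approx n$ edges, hence $O(rn)$ incidences, but pair-degree $\Theta((n')^{r-2})$.
\item Pad it with a hub $n'+1$ joined to each $v\in[n']$ by a rank-$2$ edge, plus one edge on $\{n'+1,\dots,n\}$.
\item A PC round on the whole hypergraph is simulated by two CLIQUE rounds on $[n']$. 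So Theorem~\ref{thm:clique-lb} gives $\Omega((n')^{r-5/3}/\log n)=\Omega(n^{1-5/(3r)}/\log n)$.
\item For the density bound, plant the same distribution with $n'=\mu^{1/(r-1)}$. A rank-$r$ clique on $n'$ vertices has density $\Theta((n')^{r-1})$, which yields $\Omega(\mu^{1-O(1/r)}/\log n)$.
\end{itemize}
Your closing sentence also conflates bits with rounds. The round bound comes from dividing the $\Omega((n')^{r-2/3})$ bits the busiest vertex must learn (Corollary~\ref{cor:triangle-edge-bound}) by the $O(n'\log n)$ bits it can receive per round.

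\textbf{Upper bounds.} Your sparse algorithm is fine and avoids the edge IDs the paper uses: each vertex sends the vertex lists of its incident edges, which is $O(n)$ words since $\sum_e\abs{e}=O(n)$.

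For the density algorithm, two points need repair. First, a peeled vertex must send only its incident edges that still contain active vertices (at most $2\hat\mu$ of them), not its entire list; otherwise the cost is $O(\Delta r)$.

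Second, and more seriously, ``abort if peeling stalls, and restart'' is not a local operation. A vertex that has already peeled cannot learn that some distant vertex is stuck without $\Omega(D)$ rounds. Also, restarting attempts that each flood over $\lceil\log n\rceil$ phases does not remove the $\log n$ factor.

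The paper handles both issues by decoupling the two costs:
\begin{itemize}
\item Peeling needs only one bit per link per round, so all guesses $2^i$ run in parallel (\ppeel).
\item Each vertex keeps its smallest successful guess $\alpha_v$.
\item \flood{} takes the maximum over the vertex's $\lceil\log n\rceil$-hop ball, giving $\beta_v=O(\mu)$.
\item Lemma~\ref{lem:local-density} shows that the nonuniform run of $\peel(\beta_v)$ terminates.
\item Only then does each vertex broadcast its at most $2\beta_v$ upward edges, once, for $O(\mu r+\log n)$ rounds in total.
\end{itemize}
Your pipelining remark is essentially this decoupling and would work. What you still need to supply is the local mechanism that handles unknown $\mu$.
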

In particular, we believe the maximum density parameter $\mu$ and associated algorithmic techniques in Section~\ref{sec:everywhere-sparse} to be of interest for other distributed computational problems on hypergraphs.

Finally, in Section~\ref{sec:conclusion}, we discuss other computational problems in the models introduced in Section~\ref{sec:computational-models}. We provide some generic simulation techniques that we believe will be helpful in generalizing CONGEST algorithms to hypergraphs, especially in the weaker hypergraph models, such as EDGE UNICAST (EU).

\subsection{Related Work}\label{sec:related-work}

The models we introduce in Section~\ref{sec:computational-models} are generalizations of the CONGEST model, formalized by Peleg~\cite{Peleg2000Distributed}. In addition to formalizing the CONGEST model, Peleg's text describes fundamental techniques and solutions to classic problems in distributed computing. The related CLIQUE (a.k.a.\ CONGESTED CLIQUE) model was introduced by Lotker et al.\ in~\cite{Lotker2005MinimumWeight}.

Dolev, Lenzen and Peled~\cite{Dolev2012Tri} devised an algorithm for triangle enumeration in graphs in the CLIQUE model in $O(n^{1/3}/\log n)$ rounds. Our CLIQUE algorithms for hypergraph triangle and simplex enumeration in Section~\ref{sec:clique-ub} (Theorems~\ref{thm:s-clique-ub} and~\ref{thm:clique-ub}) are generalizations of their algorithm. Pandurangan, Robinson, and Scquizzato~\cite{Pandurangan2021Distributed} proved a lower bound of $\Omega(n^{1/3} / \log^3 n)$ for (graph) triangle enumeration in the CLIQUE model, which was subsequently improved by Izumi and Le Gall~\cite{Izumi2017Triangle} to a tight lower bound of $\Omega(n^{1/3}/\log n)$. Our lower bounds for the CLIQUE and PC models generalize the argument of Izumi and Le Gall, which employs an information theoretic framework to bound the communication complexity of triangle enumeration. Izumi and Le Gall's argument employs a combinatorial lemma due to Rivin~\cite{Rivin2002Counting} that provides a lower bound on the number of edges in a graph containing a given number of triangles. This combinatorial lemma does not generalize to hypergraphs in the ``obvious'' way, but we employ Rivin's techniques to obtain a suitable generalization (Theorem~\ref{thm:triangle-edge-bound}) sufficient to derive our main CLIQUE lower bound (Theorem~\ref{thm:clique-lb}). Our simplex enumeration lower bound for the CLIQUE model follows a similar argument, instead relying upon a version of the Kruskal-Katona theorem (cf.~\cite{Jukna2011Extremal}) for a suitable bound on the minimum number of hyperedges required to contain a fixed number of simplices.

A nearly optimal algorithm for triangle enumeration in graphs in the CONGEST model was first described by Chang et al.~\cite{Chang2021Nearoptimal}. This result was achieved by combining two main components. The first component is to show that triangle enumeration can be performed in expected time $\widetilde{O}(n^{1/3} \tmix)$, where $\tmix$ is the mixing time of the graph. The main idea here is to simulate an analogue of Dolev et al.'s algorithm for the CLIQUE model using an expander routing technique developed by Ghaffari, Kuhn and Su~\cite{Ghaffari2017Distributed} and Ghaffari and Li~\cite{Ghaffari2018New}. The second component of Chang et al.'s algorithm is to develop an efficient distributed procedure for expander decomposition, whereby the graph is partitioned into components, each of which is an expander graph, by removing a small fraction of edges. Triangle enumeration can be performed efficiently on each component of the decomposition. The overall algorithm works by then recursively performing triangle enumeration on the remaining edges in the graph. Subsequent work has generalized and improved these results across several dimensions. We refer the reader to the survey of Censor-Hillel~\cite{Censor-Hillel2025Distributed} for recent developments on subgraph detection and enumeration. Our general upper bound for triangle enumeration in hypergraphs in the EC and PC models (Theorem~\ref{thm:ec-ub}) follows by showing that the techniques of Chang et al.\ generalize to our setting in a straightforward manner.

Our definition of the ``maximum density'' of a hypergraph (Definition~\ref{dfn:max-density}) is a hypergraph analogue of the characterization of graph arboricity given by Nash-Williams~\cite{Nash-Williams1964Decomposition}. For graphs, Barenboim and Elkin~\cite{Barenboim2008Sublogarithmic,Barenboim2013Distributed} showed that the vertices of graphs of bounded arboricity can be partitioned into $O(\log n)$ layers where each vertex has few neighbors in higher layers. They then use this result to derive efficient distributed algorithms for several fundamental problems in distributed computing. In Section~\ref{sec:everywhere-sparse} and show that an analogous decomposition result for hypergraphs can be applied to obtain an algorithm for hypergraph triangle and simplex enumeration parameterized by the hypergraph's maximum density (Theorem~\ref{thm:maximum-density-enumeration}).

The central role of triangles in the (graph) structure of social networks was first popularized by the work of Granovetter~\cite{Granovetter1973Strength}. The distribution of triangles and other ``motifs'' in networks has subsequently become foundational to network science~\cite{Milo2002Network,Newman2003Structure,Watts1998Collective}. Hypergraphs are becoming an increasingly popular tool for modeling complex networks. We refer the reader to the text of Bretto~\cite{Bretto2013Hypergraph} for general background and applications, as well as recent surveys by Lee et al.~\cite{Lee2025Survey} and Antelmi et al.~\cite{Antelmi2024Survey} for applications to data mining and machine learning (respectively). The distribution of triangles in hypergraphs modelling real-world networks has been studied, for example,  by Benson et al.~\cite{Benson2016Higherorder} and LaRock and Lambiotte~\cite{LaRock2023Encapsulation}.


In the other direction, distributed algorithms for hypergraphs have become an increasingly well-studied area, particularly concerning \emph{symmetry breaking} \cite{adamson2023distributed,adamson2025distributed,Balliu2023-distributed,Kutten2014Distributed,kuhn2018efficient}, a set of problems concerned with ensuring more even access to resources by splitting vertices either into colour classes \cite{adamson2023distributed,Kutten2014Distributed} or by assigning a local leader in the form of maximal independent sets \cite{adamson2025distributed,Balliu2023-distributed,kuhn2018efficient}. We note that in the above examples, communication is most commonly done in the LOCAL or CONGEST models on the primal graph. Indeed, it is open as to how the round complexity for undertaking such symmetry breaking under our more restrictive communication models, such as EDGE SOLOCAST or EDGE PAIRCAST.






%% file: technical-overview.tex
\subsection{Technical Overview}

Here, we provide a more detailed overview of our main results and techniques. Again, our main algorithmic problems are the triangle and simplex enumeration tasks. In a hypergraph $H = (V, E)$, we define a triangle to be a closed path of length $3$, i.e., a sequence of the form $T = (v_0, e_0, v_1, e_1, v_2, e_2)$ where $v_0, v_1, v_2 \in V$, $e_0, e_1, e_2 \in E$ and for $i = 0, 1, 2$, we have $v_{i} \in e_{i-1} \cap e_{i}$ where indices are computed modulo $3$. Such a triangle is \dft{simple} if $v_0, v_1, v_2$ and $e_0, e_1, e_2$ are pair-wise distinct, and \dft{induced} (a.k.a.~\dft{open}) if $v_i \notin e_{i+1}$ for $i = 1, 2, 3$. For the \dft{triangle enumeration task}, each vertex $v \in V$ must output a set $\calT_v$ of triangles such that the $\calT_v$ are pair-wise disjoint and $\calT = \bigcup_{v \in V} \calT_v$ contains all triangles in $H$. The algorithms we define can be adapted to output either all simple triangles or all induced (open) triangles in the same running time, so we do not specify if $\calT$ should be all simple triangles or all induced triangles in the task definition. Further, our lower bounds hold for both variants of the triangle enumeration task.

A hypergraph $H = (V, E)$ is \dft{$r$-uniform} if every edge $e \in E$ has cardinality $r$. In an $r$-uniform hypergraph, \dft{simplex} of dimension $r$ is a set $S = \set{v_0, v_1, \ldots, v_r}$ of $r+1$ vertices such that every subset of $S$ of cardinality $r$ is an edge in $E$. The \dft{simplex enumeration task} requires each vertex $v$ to output a set $\calS_v$ of simplices such that the $\calS_v$ are pair-wise disjoint and $\calS = \bigcup_{v \in V} \calS_v$ contains all simplices in $H$.

\subsubsection{Triangle and Simplex Enumeration in CLIQUE}

Our first main result (Theorems~\ref{thm:s-clique-ub} and~\ref{thm:clique-ub}, respectively) show that simplex and triangle enumeration in the CLIQUE model can be performed in $O(n^{r - 2 + 1/(r+1)})$ and $O(n^{r - 5/3}/\log n)$ rounds, respectively in $r$-uniform hypergraphs. Both of these results generalize a result of Dolev, Lenzen and Peled~\cite{Dolev2012Tri}, which corresponds to these two results when $r = 2$. 
In the case of simplex enumeration, we partition the set of $(r+1)$-tuples of distinct vertices in $V$, $V^{r+1}$, into $n$ parts, $T_1, T_2, \ldots, T_n$. Each vertex $v \in [n]$ is responsible for outputting all simplices contained in $T_v$. To this end, each $v$ must learn all the edges in the \emph{support} of $T_v$---the family of sets of vertices of cardinality $r$ whose elements appear in some tuple in $T_v$.

The partition is constructed in such a way that each $T_v$ is supported by $O(n^{r-1-1/(r+1)})$ edges, and each edge is in the support of $O(n^{1/(r+1)})$ parts $T_v$. Since each vertex has degree $O(n^{r-1})$, each vertex must send and receive at most $O(n^{r-1+1/(r+1)})$ edges in order to determine its output. Edges can be encoded with $O(r \log n)$ bits, so all messages can be routed in $O(r n^{r - 2 + 1/(r+1)})$ rounds deterministically using the routing algorithm of Lenzen~\cite{Lenzen2013Optimal}. We then improve this running time to $O(n^{r - 2 + 1/(r+1)} / \log n)$ by using bit-vectors to encode the (non)-existence of potential edges in $H$, thereby proving Theorem~\ref{thm:s-clique-ub}.

Theorem~\ref{thm:clique-ub} is proven similarly. Using the same partition $T_1, T_2, \ldots, T_n$ of \emph{triples} of vertices in $V$ from the simplex algorithm, vertex $v$ is responsible for outputting all triangles of the form $(v_0, e_0, v_1, e_1, v_2, e_2)$ where $(v_0, v_1, v_2) \in T_v$. To this end, it suffices for $v$ to learn all edges containing any two vertices from $\set{v_0, v_1, v_2}$. The partition guarantees that each vertex sends and receives $O(n^{r - 2/3})$ edges, so applying Lenzen routing and bit-vectors as before, we obtain a running time of $O(n^{r - 5/3}/\log n)$ rounds.

\subsubsection{Triangle Enumeration in PC and EC Models}

For triangle enumeration in the PC and EC models, we first observe that the EC model can trivially simulate the the PC model, hence it suffices to solve triangle enumeration in the PC model. In the PC model, our algorithm is an adaption of Chang et al.'s algorithm for triangle enumeration on graphs in the CONGEST model~\cite{Chang2021Nearoptimal}. The high level strategy of Chang et al.\ is to use a distributed expander decomposition to decompose the network into components, each of which has large conductance, by removing only a small fraction of edges. Then within each component, triangle enumeration is performed using a procedure that exploits the rapid mixing of high conductance graphs. Remaining triangles consisting only of edges between components are then dealt with recursively. For the rapidly mixing components of the expander decomposition, Chang et al.\ generalize the algorithmic techniques of Dolev et al.~\cite{Dolev2012Tri}. Specifically, they partition vertices according to their degree, and compute a derived partition of possible triangles (i.e., triples of vertices) such that each vertex is responsible for outputting all triangles consisting of triples of vertices assigned to it. This assignment is chosen such that each vertex must receive a number of edges proportional to its degree. Similarly, each vertex must send a number of messages (corresponding to its incident edges) that is proportional to its degree as well. The messages are then routed using an adaptation of the expander routing scheme of Ghaffari et al.~\cite{Ghaffari2017Distributed,Ghaffari2018New}. Under Chang et al.'s partition of triangles, each vertex $v$ is the source and destination of $O(n^{1/3}\deg(v))$ messages, and the message routing is performed in $\widetilde{O}(n^{1/3})$ rounds.

In order to adapt this procedure to hypergraph triangle enumeration in the PC model, we observe that the PC model is precisely the CONGEST model for the primal graph of $H  = (V, E)$---i.e., the graph $G = (V, E')$ where $\set{u, v} \in E'$ if and only if there exists an edge $e \in E$ such that $u, v \in E$. Thus, the different components of Chang et al.'s algorithm on $G$ can be simulated in $H$. In order to do this, we observe that hypergraph triangle enumeration in $H$ is equivalent to triangle enumeration in a multigraph $G_{\set{H}}$ that includes one copy of each edge $\set{u, v} \in E'$ for each (hyper)edge $e \in E$ that contains $u$ and $v$. Thus, the machinery of Chang et al.\ can be applied to $G$ with a simple adaptation: if $u$ sends an incident edge $\set{u, v}$ to a vertex $w$ in Chang et al.'s algorithm applied to $G$, $u$ instead sends all hyperedges $e$ containing both $u$ and $v$ to $w$ in our simulation. Since the number of such hyperedges is at most $\binom{n-2}{r-2} = O(n^{r-2})$, the message length and running time grows by a factor of $O(n^{r-2})$ compared to that of Chang et al.\ applied to $G$. Thus, the overall running time of our simulation is $\widetilde{O}(n^{1/3} \cdot n^{r - 2}) = \widetilde{O}(n^{r - 5/3})$.

\subsubsection{Lower Bounds}

In Section~\ref{sec:clique-lb}, we prove that simplex and triangle enumeration in the CLIQUE model (and hence the PC model as well) require $\Omega(n^{r-2+1/(r+1)})$ and  $\Omega(n^{r-5/3}/\log n)$ rounds, respectively thereby showing the algorithms above are optimal. Our argument generalizes the argument of Izumi and Le Gall~\cite{Izumi2017Triangle}, who prove the analogous lower bound for triangle enumeration in graphs. Specifically, we consider hypergraphs sampled from $G(n, r, 1/2)$, which is the distribution of $r$-uniform hypergraphs where each possible edge (i.e., subset of $r$ vertices) is included independently with probability $1/2$. We show that any algorithm that performs simplex or triangle enumeration for graphs sampled from this distribution with sufficiently high probability require $\Omega(n^{r-2+1/(r+1)})$ and $\Omega(n^{r - 5/3}/\log n)$ rounds (respectively) in expectation.

Our lower bound argument follows the information theoretic framework developed by Izumi and Le Gall. The basic idea is that in order for a vertex $v$ to output a set $\calS_v$ of simplices (respectively a set $\calT_v$ of triangles), the information content of the messages received by $v$ must be at least the information content of the (existence of) the edges contained in the simplices (triangles) in $\calS_v$ ($\calT_v$). This statement is formalized in Lemma~\ref{lem:triangle-information}, and its proof is identical to that of the analogous statement in~\cite{Izumi2017Triangle}. We can then obtain round lower bounds by observing that the the total message length received must be at least the expected entropy of the messages received. Since each vertex can receive only $O(n \log n)$ bits each round, an entropy lower bound of $h$ yields a round lower bound of $h / n \log n$.

For triangle enumeration, in order to prove a lower bound on the maximum entropy of messages received by some vertex $v$, we argue as follows. For each $v \in V$, let $\calT_v$ denote the set of triangles output by $v$, and take $\calT = \bigcup \calT_v$ be the set of all triangles in $H$ sampled from $G(n, r, 1/2)$. We first show that with large constant probability, $\abs{\calT} = \Omega(n^{3r -3})$ (for fixed constant $r$). Let $\bfw = \arg \max_v \abs{\calT_v}$ be the vertex that outputs the most triangles. By the pigeonhole principle, $\abs{\calT_{\bfw}} = \Omega(n^{3r-4})$ with constant probability.

Next, we show that the triangles in $\calT_\bfw$ must contain many edges. To this end, we establish the following bound (cf.~Theorem~\ref{thm:triangle-edge-bound}): if $H$ is an $r$-uniform hypergraph with $n$ vertices and $m$ edges that contains $t$ triangles, then
\begin{equation}\label{eqn:triangle-bound}
	t \leq \frac{\sqrt{2}}{3}\paren{m \binom{n-3}{r-2} \binom{r}{2}}^{3/2}.
\end{equation}
Taking $t = \Omega(n^{3r - 4})$ and solving for $m$ gives $m = \Omega{n^{r-2/3}}$ for constant $r$. Thus in outputting $\calT_\bfw$, $\bfw$ must witness $\Omega(n^{r-2/3})$ edges.

The bound~(\ref{eqn:triangle-bound}) generalizes a bound of Rivin~\cite{Rivin2002Counting}, which corresponds to~(\ref{eqn:triangle-bound}) in the case of graphs (i.e., $r = 2$). Our proof of~(\ref{eqn:triangle-bound}) essentially follows the argument of Rivin, where the crucial insight again is to associate triangles in a hypergraph $H$ with triangles in the multigraph $G_{\set{H}}$ introduced above. In $G_{\set{H}}$, the maximum multiplicity of an edge is $\mu \leq \binom{n-2}{r-2} = O(n^{r-2})$. By applying standard tools in algebraic graph theory, we show that for such a multigraph with $t$ triangles and $m$ edges, the number of triangles satisfies $(6t)^2 \leq (2 \mu m)^3$, whence~(\ref{eqn:triangle-bound}) follows.

In order to prove the lower bound for simplex enumeration, we again must argue that simplices $\calS_\bfw$ output by some vertex $\bfw$ must contain many edges in $H$ sampled from $G(n, r, 1/2)$. We show that $H$ contains $\Theta(n^{r+1})$ simplices with constant probability, hence some $\bfw$ must output $\Theta(n^r)$ simplices with constant probability. We then invoke a version of the Kruskal-Katona theorem (Theorem~\ref{thm:kruskal-katona}) in order to conclude that the simplices output by $\bfw$ must contain $\Omega(n^{r-1+1/(r+1)})$ edges.

Finally, we conclude our proof of the main simplex and triangle counting lower bounds by following the argument of Izumi and Le Gall. We take $\calE = (\bfe_1, \bfe_2, \ldots, \bfe_m)$ to be the indicator vector for the existence of each of the $m = \binom{n}{r}$ possible edges in $H$, and we let $\pi_\bfw$ denote the transcript of messages received by $\bfw$. By applying standard tools from information theory, along with Lemma~\ref{lem:triangle-information}, we argue that the entropy of $\bfw$ satisfies $H(\pi_\bfw) \geq \E[\abs{E(\calT_\bfw)}] - H(\rho_\bfw)$, where $E(\calT_\bfw)$ is the set of edges contained in the triangles output by $\bfw$ and $H(\rho_\bfw)$ is the entropy of the initial state of $\bfw$. The main lower bound then follows by applying the bounds above for $\abs{E(\calT_\bfw)}$ and $\abs{E(\Sigma_\bfw)}$ along with the observation that $H(\rho_\bfw) = O(n^{r-2})$.

\subsubsection{Sparse Hypergraphs}

We say that a hypergraph $H = (V, E)$ is \dft{sparse} if $\sum_{e \in E} \abs{e} = \sum_{v \in V} \deg(v) = O(n)$. We show that if $H$ has maximum degree $\Delta$, then simplex and triangle enumeration can be performed in $O(\Delta)$ rounds of the EDGE BROADCAST (EB) and PRIMAL CONGEST (PC) models (Theorem~\ref{thm:bounded-degree-ub}), and in particular, for sparse hypergraphs, this running time is $O(n)$. The algorithm is simple: we simply compute unique edge IDs, and each vertex broadcasts the IDs of its incident edges to its neighbors. It is straightforward to show that this is sufficient to solve simplex and triangle enumeration. To compute short ($O(\log n)$ bit) unique IDs, we take the ID of an edge $e$ to be the concatenation of the smallest vertex ID of a vertex $v$ in $e$ concatenated with $v$'s port number associated to the edge $e$.

We show that this $O(n)$ upper bound for sparse graphs is essentially tight in the PC model by reduction to the CLIQUE lower bound (Theorem~\ref{thm:sparse-lb}). Specifically, for any $\eps > 0$, we can find a rank $r$ and integer $n' \leq n$ such that the CLIQUE lower bound of Theorem~\ref{thm:clique-lb} for $H' \sim G(n', r, 1/2)$ is $\Omega(n^{1 - \eps})$, and $H'$ contains $O(n)$ edges (for constant $r$). We then form a distribution of hypergraphs $H$ on $n$ vertices that contain copies of $H'$ that are sparse and such that we can apply the CLIQUE lower bound in the PC model.

\subsubsection{Everywhere Sparse Hypergraphs}

For a hypergraph $H = (V, E)$, we define the \dft{density} of a subset $U \subseteq V$ of vertices to be
\begin{equation*}
	\mu_H(U) = \frac{1}{\abs{U}} \sum_{e \in E[U]} \abs{e},
\end{equation*}
where $E[U] = \set{e \cap U \sucht e \in E, \abs{e \cap U} \geq 2}$ is the set of induced edges $U$. We define the \dft{maximum density} of $H$ to be $\mu(H) = \max_{U} \mu_H(U)$. We show that maximum density has properties analogous to that of arboricity and degeneracy of graphs. Specifically, we show that if $H$ has maximum density $\alpha$, then $V$ can be partitioned into $\ell = O(\log n)$ layers $V = L_1 \cup L_2 \cup \cdots \cup L_\ell$ such that each vertex $v \in L_i$ has degree at most $2 \alpha$ in $H$'s restriction to $V_i = L_i \cup L_{i+1} \cup \cdots \cup L_\ell$. In Section~\ref{sec:everywhere-sparse}, we show that this structural result for hypergraphs with bounded maximum density can be applied to give efficient algorithms for simplex and triangle enumeration. The algorithmic techniques mirror those of Barenboim and Elkin's work~\cite{Barenboim2008Sublogarithmic,Barenboim2013Distributed} on arboricity bounded graphs.

We first define a distributed algorithm $\peel(\alpha)$ for the EB model that takes an estimate $\alpha$ of the maximum density as input. If the maximum density of $H$ is at most $\alpha$, then the algorithm produces a layered decomposition of the vertices as above, where each vertex $v \in V$ learns which layer $L_i$ in which it is contained. The algorithm follows a simple greedy approach. All vertices are initially active. In the first round, all vertices with degree at most $2 \alpha$ become inactive, join $L_1$, and broadcast this fact to their neighbors. In subsequent rounds, each vertex keeps track of its incident edges still containing active vertices. If in round $i$, $v$ is incident to at most $2 \alpha$ edges containing active vertices, then $v$ becomes inactive and joins $L_i$. Thus, if each vertex is given a correct upper bound $\alpha$ on the maximum density, the vertices can compute a layered decomposition in $O(\log n)$ rounds of the EB model. Conversely, we show that if some vertex $v$ fails to become inactive after $\log n$ rounds, then the maximum density of $H$ is at least $\alpha / 2$.

Given layered decomposition $L_1, L_2, \ldots, L_\ell$ with $\ell = O(\log n)$, let $V_i = L_i \cup L_{i+1} \cup \cdots \cup L_\ell$.  We can perform triangle and simplex enumeration as follows. In rounds $i = 1, 2, \ldots, \ell$, each vertex $v \in L_i$ broadcasts each nonempty $e \cap V_i$ for each of $v$'s incident edge $e$ to all of its neighbors. We show that after performing this broadcast, each vertex $v \in L_i$ has sufficient information to enumerate all of its incident triangles and simplices containing vertices in $V \setminus V_{i+1}$. Thus, after $O(\log n)$ phases, all triangles and simplices are enumerated. Each phase, each vertex sends at most $2 \alpha$ edges, each containing at most $r$ vertices, so the overall running time is $O(\alpha r)$.

The approach above assumes that the vertices are given some \emph{a priori} upper bound $\alpha$ on the maximum density. We next devise a procedure that computes local estimates of $\alpha$ that are sufficient to solve triangle and simplex enumeration in time $O(\mu(H) r + \log n)$ without the assumption that a faithful estimate $\alpha \geq \mu(H)$ is given, so long as some \emph{a priori} upper bound on the maximum density $M = n^{O(1)}$ is known to all vertices. The procedure works as follows. Each vertex runs the ``peeling'' procedure for computing a layered decomposition described above for geometrically increasing estimates $\alpha = 2, 4, 8,\ldots,M$. These can all be run in parallel using $O(\log M)$ bit messages. Each vertex $v$ then records the smallest estimate $\alpha_v$ for which $v$ terminated after at most $\log n$ rounds with $\alpha = \alpha_v$. We prove that $\max_v \alpha_v$ is a constant factor approximation of the true maximum density $\mu(H)$. Further, we show that if each vertex sets $\beta_v$ to be the maximum value of $\alpha_u$ for $u$ in $v$'s distance $\log n$ neighborhood, then the triangle enumeration procedure above can be implemented where each vertex $v$ uses its local estimate $\beta_v$ of the maximum density. In the end, we obtain an algorithm that solves triangle and simplex enumeration in $O(\mu(H)r + \log n)$ rounds in the EB and PC models.

Finally, we show that the dependence on $\mu(H)$ is tight for the PC model. Specifically, for every $\eps > 0$, there is a distribution of hypergraphs with maximum density at most $\mu$ which require $\Omega(\mu^{1 - \eps}/\log n)$ rounds in the PC model. Again this argument is via reduction to the CLIQUE lower bound construction. We show that for suitably chosen parameters, $n'$ and $r = O(1)$, $H \sim G(n', r, 1/2)$ has maximum density $\Theta(\mu^{1 - \eps})$ with high probability, and the PC lower bound follows from our general CLIQUE lower bound, Theorem~\ref{thm:clique-lb}.

%% file: preliminaries.tex
\section{Preliminaries}
\label{sec:preliminaries}

\subsection{Hypergraph Definitions}

A \dft{hypergraph} $H = (V, E)$ consists of a set $V$ of \dft{vertices} and a set $E$ of \dft{(hyper)edges}, where each $e \in E$ is a non-empty subset of $V$. The \dft{rank} of $H$ is $r(H) = \sup_{e \in E} \set{\abs{e}}$. We say that $H$ is \dft{$r$-uniform} if every edge $e \in E$ satisfies $\abs{e} = r$.

The \dft{degree} of a vertex $v$, denoted $\deg(v)$ is the number of edges containing $v$:
\begin{equation}
	\deg(v) = \abs{\set{e \in E \sucht v \in e}}.
\end{equation}
Throughout this paper, we will use $\Delta = \max_v \deg(v)$ to denote the maximum degree of a vertex in $H$. Similarly, for any pair of vertices $u, v \in V$, we define the \dft{pair degree}
\begin{equation}
	\deg(u, v) = \abs{\set{e \in E \sucht u, v \in e}}
\end{equation}
to be the number of edges containing both $u$ and $v$, and we denote $\Delta_2 = \max_{u, v} \deg(u, v)$. Recall that a hypergraph $H$ is \dft{linear} if for every pair of edges $e \neq e'$, we have $\abs{e \cap e'} \leq 1$. Thus, $H$ is linear if and only if $\Delta_2 \leq 1$.

The \dft{neighborhood} of $v$, denoted $N(v)$ is the set of vertices that share an edge with $v$:
\begin{equation}
	N(v) = \bigcup_{e\,:\,v \in e} e.
\end{equation}


Given a hypergraph $H = (V, E)$, a \dft{path} $P$ is a sequence $P = (v_0, e_0, v_1, e_1, v_2, e_2, \ldots, e_{\ell - 1}, v_\ell)$, where each $e_i \in E$ and $v_i \in e_{i-1}, e_i$. Here, $\ell$ is the \dft{length} of the path. We say that a path $P$ is\dots
\begin{itemize}
	\item[\dots] \dft{vertex simple} if for all $i \neq j$, $v_i \neq v_j$;
	\item[\dots] \dft{edge simple} if for all $i \neq j$, $e_i \neq e_j$;
	\item[\dots] \dft{simple} if it is both vertex simple and edge simple;
	\item[\dots] \dft{induced} if it is simple and for all $i$ and $j \neq i-1, i$ we have $v_i \notin e_j$.
\end{itemize}
A \dft{cycle} is a path $P = (v_0, e_0, \ldots, v_\ell)$ such that $v_\ell = v_0$. Cycles can be vertex simple, edge simple, simple, and induced, defined analogously as for paths above.

\begin{example}
	\label{ex:triangle}
	Consider a 4-uniform hypergraph $H$ where $V = \set{1, 2, \ldots, 9}$. Then $H$ contains a $3$-cycle with vertex set $T = \set{1, 2, 3}$ $v_0 = 1, v_1 = 2, v_2, = 3$ for the following edge sets:
	\begin{itemize}
		\item $e_0 = e_1 = e_2 = \set{1, 2, 3, 4}$
		\item $e_0 = \set{1, 2, 3, 4}$, $e_1 = \set{2, 3, 5, 6}$, $e_2 = \set{1, 3, 7, 8}$
		\item  $e_0 = \set{1, 2, 4, 5}$, $e_1 = \set{2, 3, 4, 6}$, $e_2 = \set{1, 3, 4, 7}$
		\item $e_0 = \set{1, 2, 4, 5}$, $e_1 = \set{2,3,6,7}$, $e_2={1,3,8,9}$
	\end{itemize}
	The first triangle is not (edge)-simple, while the second is simple, but not induced. The last two triangles are induced.
\end{example}

We use the following definitions of hypergraph triangles throughout the remainder of the paper.

\begin{definition}
	\label{dfn:triangle}
	A \dft{triangle} $T = (v_0, e_0, v_1, e_1, v_2, e_2, v_0)$ is a simple cycle of length three. $T$ is an \emph{induced triangle} if $T$ is an induced cycle.
\end{definition}

The results we provide for the remainder of the paper apply to both the tasks of enumerating (not necessarily induced) triangles and induced triangles. Thus, in formal statements we do not specify if all triangles should be enumerated or if only induced triangles should be enumerated.

\begin{definition}\label{dfn:simplex}
	If $H = (V, E)$ is an $r$-uniform hypergraph, then an \dft{$r$-dimensional simplex} is a set of $r+1$ vertices, $S = \set{v_0, v_1, \ldots, v_{r+1}}$, such that for every subset $S' \subseteq S$ of size $r$, we have $S' \in E$.
\end{definition}

We note that in the case of $2$-uniform hypergraphs, i.e., graphs, both definitions~\ref{dfn:triangle} and~\ref{dfn:simplex} reduce to the standard definition of triangle in a graph. Thus, both concepts can be seen as generalizations of triangles in hypergraphs.

We say that a hypergraph $H = (V, E)$ is \dft{connected} if for every pair of vertices $u, w \in V$ there is a path $P = (u, e_0, v_1, e_1, \ldots e_{\ell_1}, w)$ from $u$ to $w$. The \dft{distance} between vertices $u$ and $w$, denoted $\dist_H(u, v)$ is the length of the shortest such path between $u$ and $v$, and $\dist_H(u, v) = \infty$ if there is not such path. The \dft{diameter} of a hypergraph $D(H)$ is defined by $D(H) = \sup_{u, v \in V} \dist_H(u, v)$.

Given a hypergraph $H = (V, E)$ we define the \dft{primal graph} $G(H)$ of $H$ to be the graph $G = (V, E')$, where $\set{u, v} \in E'$ if and only if there exists an edge $e \in E$ with $u, v \in e$. That is, $G(H)$ is obtained by replacing each (hyper)edge $e \in E$ with a clique in $G$.

\subsection{Computational Models}
\label{sec:computational-models}

We view each hypergraph as a communication network by analogy with distributed graph algorithms. For a hypergraph $H = (V, E)$, we assume that each vertex $v \in V$ has a unique ID of $O(\log n)$ bits, where $n = \abs{V}$ is the number of vertices in the hypergraph. We assume that each vertex's ID is initially known to that vertex, and possibly its neighboring vertices as well.

We focus on synchronous message passing models that are analogous to the (congested) CLIQUE, LOCAL, and CONGEST models. Throughout our discussion of computational models, we use the parameter $B$ as the number of bits communicated per interaction per communication round. Traditionally (e.g., for the CONGEST model), we take $B = O(\log n)$, but for the sake of generality we treat $B$ as a parameter in defining our models. We assume that initially each vertex knows its own ID, degree, as well as a list of neighboring vertices in each incident edge. Further, we assume that each vertex initially holds a port numbering of its incident edges so that it can refer unambiguously to its first, second, etc., incident edges.

One approach to generalizing the LOCAL and CONGEST models to hypergraphs is simply to use the primal graph $G(H)$ as the underlying communication network. In each case, we assume that the input of each vertex contains an ordered list of its adjacent edges. We formalize the communication restrictions in these models as follows:
\begin{description}
	\item[CLIQUE] each round, each vertex $v \in V$ can send (distinct) $B$-bit messages to all other vertices $w \in V$. This is equivalent to the ``congested clique'' model for standard graphs.
	\item[PRIMAL CONGEST] (PC) each round, each vertex $v \in V$ can send (distinct) $B$-bit messages to each adjacent vertex $w \in N(v)$.
	\item[PRIMAL LOCAL]  (PL) each round, each vertex $v \in V$ can send (distinct) messages of unbounded size to each adjacent vertex $w \in N(v)$.
\end{description}
Note that the PC and PL models are equivalent to the CONGEST and LOCAL models on the primal graph $G(H)$.

While the PC model may be a ``natural'' generalization of the CONGEST model for hypergraphs in some contexts, it ignores the structure of (hyper)edges in the network, and reduces all communication to pair-wise adjacency. Depending on what the edges in the hypergraph are meant to model, this may afford either too much or too little computational power to the model. In order scenarios in which each (hyper)edge represents a shared communication channel, we define five generalizations of the CONGEST model to hypergraphs. The power of these models all differ as a function of the rank $r$ of the hypergraph, though for fixed constant rank they are equivalent to the (PRIMAL) CONGEST model.


We define our distributed hypergraph models as follows:

\begin{description}
	\item[EDGE CLIQUE] (EC) each round, for each vertex $v \in V$ and each incident edge $e \ni v$, $v$ can send (distinct) $B$ bit messages to each other vertex $w \in e$. This is equivalent to each (hyper)edge acting as in the congested clique model for standard graphs.
	\item[EDGE BROADCAST] (EB) each round, for each vertex $v$ and incident edge $e \ni v$, $v$ can broadcast a single $B$-bit message to all vertices $w \in e$. Each $v$ may broadcast different messages to different incident edges. This is equivalent to each (hyper)edge acting as in the ``broadcast congested clique'' model for standard graphs.
	\item[EDGE UNICAST] (EU) each round, for each vertex $v$ and incident edge $e \ni v$, $v$ may choose a single incident vertex $w \in e$ and send a single $B$-bit message to $w$. Thus, if $v$ has degree $d(v)$, $v$ sends (at most) $d(v)$ distinct messages to (at most) $d(v)$ neighboring vertices.
	\item[EDGE SOLOCAST] (ES) each round $t$, each edge $e$ specifies a unique vertex $v \in e$ that broadcasts a single $B$ bit message that is received by all other vertices $w \in e$ in the round.
	\item[EDGE PAIRCAST] (EP) each round $t$, each edge $e$ specifies a unique pair of incident vertices $u, v \in e$ that can exchange $B$ bit messages in the round.
\end{description}

EC, EB, EU, ES, and EP can each be viewed as a generalization of the standard CONGEST model for graphs, as each eas equivalent to CONGEST for $2$-uniform hypergraphs (i.e., graphs). We note that EC is at least as strong as PC, but the power of the remaining models' computational power are generally incomparable to PC due to the possible existence of parallel edges between pairs of vertices.

Following Awerbuch et al.~\cite{Awerbuch1990Tradeoff}, we can quantify the initial information known to each vertex $v \in V$ as follows. We assume in all cases that $v$ knows $\deg(v)$, i.e., its number of incident edges. For each non-negative integer $k$, let $N_k(v) = \set{u \sucht \dist_H(u, v) \leq k}$ be the set of vertices within distance $k$ of $v$. Similarly, we define the set of edges $E_k(v) = \set{e \in E \sucht \exists u \in e,\, \dist_H(v, u) \leq k - 1}$. The \dft{$k$-neighborhood} of $v$ contains all vertices in $N_k(v)$ and all edges in $E_k(v)$. In the \dft{initial knowledge} model $\KT_k$, each vertex $v \in V$ initially knows it $k$-neighborhood. Note that in the case of $\KT_0$, $v$ does not initially know the vertices in $N(v)$, hence it must refer to edges only by some (fixed) port numbering.

For the remainder of the paper, we assume initial knowledge $\KT_1$. Indeed, $\KT_0$ seems unnecessarily restrictive for many applications, while triangle listing is trivial in $\KT_2$. We further assume that $B = \Theta(\log n)$ bit messages are used for each interaction.

The following proposition quantifies relationships between the computational power of the five generalizations of CONGEST defined above. 

\begin{proposition}\label{prop:model-relationships}
	Suppose $H = (V, E)$ is a hypergraph of rank $r$. Then under initial knowledge $\KT_1$ we have the following reductions:
	\begin{itemize}
		\item A single round in the EDGE CLIQUE model can be simulated by $O(r)$ rounds in the EDGE BROADCAST and EDGE UNICAST models and $O(r^2)$ rounds of the EDGE SOLOCAST and EDGE PAIRCAST models.
		\item A single round in the EDGE BROADCAST model can be simulated in a single round of EDGE CLIQUE, in $O(r)$ rounds of EDGE UNICAST and EDGE SOLOCAST, and in $O(r^2)$ rounds of EDGE PAIRCAST.
		\item A single round in the EDGE UNICAST model can be simulated in $O(1)$ rounds in the EDGE CLIQUE and EDGE BROADCAST models and $O(r)$ rounds of EDGE SOLOCAST and EDGE PAIRCAST.
		\item A single round of EDGE SOLOCAST can be simulated by a single round of EDGE CLIQUE and EDGE BROADCAST, $O(\log r)$ rounds of EDGE UNICAST, and $O(r)$ rounds of EDGE PAIRCAST.
		\item A single round of EDGE PAIRCAST can be simulated by a single round of all the other models.
	\end{itemize}
	Therefore, if a given task has complexities $T_{EC}$, $T_{EB}$, $T_{EU}$, $T_{ES}$, and $T_{EP}$ in the EC, EB, EU, ES and EP models, respectively, then we have:
	\begin{align}
		T_{EC} & = O(T_{EU}), O(T_{EB}), O(T_{ES}), O(T_{EP})                                      \\
		T_{EB} & = O(r \cdot T_{EC}), O(T_{EU}), O(T_{ES}), O(T_{EP})                              \\
		T_{EU} & = O(r \cdot T_{EC}), O(r \cdot T_{EB}), O(\log r \cdot T_{ES}), O(r \cdot T_{EP}) \\
		T_{ES} & = O(r^2 \cdot T_{EC}), O(r \cdot T_{EB}), O(r \cdot T_{EU})                       \\
		T_{EP} & = O(r^2 \cdot T_{EC}), O(r^2 \cdot T_{EB}), O(r \cdot T_{EU}), O(r \cdot T_{ES})
	\end{align}
	These relationships are depicted in Figure~\ref{fig:model-relationships}.
\end{proposition}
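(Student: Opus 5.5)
The plan is to prove each of the twenty simulation claims by a direct round-by-round construction. The complexity relations then follow immediately: a $T$-round algorithm in one model becomes an algorithm in another model with running time $T$ times the simulation overhead. Throughout we use $\KT_1$. Every vertex in an edge $e$ knows all members of $e$ and their IDs, so the vertices of $e$ share a canonical ordering $w_1,\dots,w_{|e|}$ of $e$ (by ID). Each vertex can therefore compute, locally and consistently, any deterministic schedule that depends only on $e$. This shared schedule is what makes the weak models, in which each edge chooses a sender or a pair, implementable without extra coordination.

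We then treat the models by the shape of a single round. In EC, each $v$ sends up to $|e|-1\le r-1$ distinct messages inside each incident $e$. These are simulated as follows. In EB, $v$ broadcasts its $r-1$ messages one per round, tagged with the recipient's index; the tag costs $O(\log r)$ bits, which fits in $B=\Theta(\log n)$. This takes $O(r)$ rounds. In EU, $v$ delivers one message per round per edge, again taking $O(r)$ rounds. In ES, the edge lets $w_1,\dots,w_{|e|}$ speak in turn, and each speaker needs $r-1$ rounds, for $O(r^2)$ rounds total. In EP, the edge cycles through all $\binom{|e|}{2}=O(r^2)$ pairs. For EB, the round is a single broadcast per vertex per edge. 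EC does this in one round by sending identical copies. EU needs $O(r)$ rounds, one recipient per round. ES takes the $|e|$ speakers in turn, $O(r)$ rounds. EP again cycles through the $O(r^2)$ pairs. For EU, the round is one point-to-point message per vertex per edge. EC does this directly, and EB does it with a broadcast tagged by the recipient index. ES and EP need $O(r)$ rounds: in ES the speakers take turns, and in EP the $|e|$ senders are scheduled one per round.

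For ES, the round is a single broadcast in $e$ by a designated vertex. One subtlety needs care here: the choice of speaker is made by the algorithm being simulated. Since that choice is a deterministic function of the states of the vertices of $e$, we must make it implementable. I would interpret the model as saying that the speaker is determined by the simulated algorithm's common knowledge, or is otherwise known to the speaker. Under that reading, EC and EB simulate the round in one round. EP does it in $O(r)$ rounds, with the speaker paired successively with each other member of $e$. EU does it in $O(\log r)$ rounds by doubling dissemination: in each round, every informed vertex forwards the message to a distinct uninformed member of $e$ through its unicast slot on $e$. The informed set then doubles each round and covers $e$ after $\lceil\log_2 r\rceil$ rounds. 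The assignment of informed to uninformed vertices comes from the canonical ordering, using a binomial-tree schedule rooted at the speaker. Finally, EP's single pair exchange on $e$ is simulated in one round in every other model. In EC, EB, and EU this is immediate. In ES, one round lets one endpoint speak. But an exchange is two messages, so the second message needs care. I would either charge this as $O(1)$ rounds, which is all the statement needs asymptotically, or note that the "single round" claim for ES should be read as $O(1)$.

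The main obstacle is this bookkeeping around who chooses the active vertex or pair in ES and EP, and what a simulating vertex knows. The step I would be most careful with is the $O(\log r)$ EU simulation of ES. There I must check two things. First, each vertex uses at most one unicast per incident edge per round. Second, since the simulation is run independently on every edge in parallel, there are no conflicts across different edges; this holds because EU budgets are per edge. Once every bullet is established, the displayed inequalities follow by composition. For example, $T_{EU}=O(\log r\cdot T_{ES})$ comes from simulating each ES round in EU at cost $O(\log r)$, and the other entries are read off the bullets in the same way.
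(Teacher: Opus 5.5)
Your proposal is correct and follows the paper's proof. The paper calls every simulation trivial except the $O(\log r)$ simulation of EDGE SOLOCAST in EDGE UNICAST, and handles that one with the same doubling broadcast over the ID ordering of $e$ that you describe. Your binomial tree rooted at the speaker, obtained by cyclically rotating that ordering, is the precise form of the paper's ``without loss of generality $v = v_0$''.

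Your caveat about EDGE SOLOCAST is also right. Only one vertex of $e$ may speak per ES round, so a two-way EP exchange between vertices that share only $e$ needs two ES rounds. That part of the last bullet therefore holds only up to a constant factor, which affects none of the displayed complexity relations.
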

\begin{proof}
	All of the simulations above are trivial, except for simulating a round of ES with $O(\log r)$ rounds of EU. The basic idea is as follows. Suppose a vertex $v$ wishes to broadcast a message $M$ to its incident edge $e = \set{v_0, v_1, \ldots, v_{r-1}}$. Without loss of generality, assume that $v = v_0$ and that the IDs of $v_0, v_1, \ldots, v_{r-1}$ are strictly increasing. Thus, under the $\KT_1$ assumption, each vertex $w \in e$ knows the index $i$ of each vertex in $v_i \in e$. To broadcast $M$, $v_0$ initially sends $M$ to $v_1$. In each subsequent round $t = 1, 2, \ldots, \log r$, each vertex $v_i$ with $i < 2^t$ send $M$ to vertex $v_{i + 2^{t-1}}$. The correctness of this procedure follows from a straightforward induction argument.
\end{proof}

\begin{figure}
	\begin{center}
		\begin{tikzpicture}
			\node (EC) at (2, 3) [draw, circle] {EC};
			\node (EB) at (2, 1.5) [draw, circle] {EB};
			\node (ES) at (0, 0) [draw, circle] {ES};
			\node (EU) at (2, 0) [draw, circle] {EU};
			\node (EP) at (4, 0) [draw, circle] {EP};
			\draw[->, thick] (EB) -- node[right] {$r$} (EC);
			\draw[->, thick] (EU) -- node[right] {$r$} (EB);
			\draw[<->, thick] (EP) -- node[below] {$r$} (EU);
			\draw[->, thick, bend left=30] (ES) to node[above] {$r$} (EU);
			\draw[->, thick, bend left=30] (EU) to node[above] {$\log r$} (ES);
			\draw[<->, thick, bend left=50] (EP) to node[below] {$r$} (ES);
			\draw[->, thick, bend left=30] (ES) to node[left] {$r$} (EB);
			\draw[->, thick, bend right=30] (EP) to node[right] {$r$} (EB);
		\end{tikzpicture}
	\end{center}
	\caption{A figure demonstrating the relationships between models given in Proposition~\ref{prop:model-relationships}. Directed edges correspond to simulation results, where the edge label gives the multiplicative overhead of the simulation. The vertical position of each model in the figure indicates the power of the model, with EC and EB being able to simulate all lower models without computational overhead.\label{fig:model-relationships}}
\end{figure}
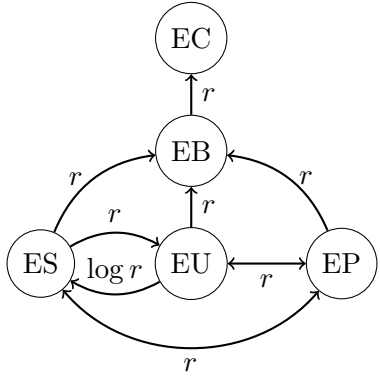

According to Proposition~\ref{prop:model-relationships}, the five models EC, EB, EU, ES and EP are equivalent up to a factor of $O(r^2)$. In particular, in hypergraphs with bounded rank (i.e., $r = O(1)$) they are asymptotically equivalent.

As noted above, the computational power of EC, EB, EU, ES, and EP relative to CLIQUE and PC depends on parameters of the hypergraph. For example, if vertices $u, v \in V$ are contained in $k$ distinct (hyper)edges then the former models allow for $u$ and $v$ to potentially exchange $\Theta(k B)$ bits in a single round: $B$ bits per shared edge. The maximum 2-degree $\Delta_2$ of $H$ gives an upper bound on the number of parallel messages that can be sent between any adjacent pair of vertices in $H$. Thus, we can formalize the relationship between, for example, PC and EC, as follows.

\begin{proposition}\label{prop:pc-relationship}
	Let $H = (V, E)$ be a hypergraph with maximum $2$-degree $\Delta_2$. Then a single round of computation in the PC model can simulated in a single round of EC, and a single round of computation in the EC model can be simulated in $\Delta_2$ rounds of the PC model. Thus, if a task has computational complexity $T_{PC}$ in the PC model and $T_{EC}$ in the EC model, then
	\begin{equation}
		T_{EC} \leq T_{PC} \leq \Delta_2 T_{EC}.
	\end{equation}
\end{proposition}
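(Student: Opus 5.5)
The statement has two simulations, and the plan is to prove each directly from the model definitions, then read off the inequality between round complexities.

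\emph{PC by EC.} Fix a round of a PC algorithm. In that round each vertex $v$ sends at most one $B$-bit message $M_{v,w}$ to each $w \in N(v)$. For each such pair, $v$ chooses one edge $e \ni v,w$; one exists by the definition of $N(v)$. Under $\KT_1$, $v$ knows the vertex set of every incident edge, so it can pick deterministically, for instance the incident edge of smallest port number containing $w$. It then sends $M_{v,w}$ to $w$ over $e$ in the EC model. EC lets $v$ send distinct $B$-bit messages to every other vertex of every incident edge in one round, so all these messages go out simultaneously. Each recipient $w$ recovers the sender's identity from the channel, so it receives exactly what it would have received in PC. One EC round therefore simulates one PC round.

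\emph{EC by PC.} Fix a round of an EC algorithm. For each $w$ and each incident edge $e \ni v,w$, vertex $v$ wants to send a message $M_{v,e,w}$. For a fixed ordered pair $(v,w)$ there are at most $\deg(v,w) \le \Delta_2$ such messages, one per shared edge.
\begin{itemize}
\item Vertex $v$ orders the edges containing both $v$ and $w$ canonically. Both endpoints know these edges under $\KT_1$, so a natural order is lexicographic order of the vertex-ID sets of the edges.
\item In PC round $j$, $v$ sends to $w$ the message belonging to the $j$-th shared edge. Since $w \in N(v)$, the PC link exists.
\item After $\Delta_2$ rounds every message has been delivered. The receiver knows which edge each message belongs to because it uses the same canonical order. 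Hence $w$ ends with exactly its EC-round state.
\end{itemize}

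\emph{The inequality and the main subtlety.} Composing these simulations with an optimal algorithm in the other model gives $T_{EC} \le T_{PC}$ and $T_{PC} \le \Delta_2 T_{EC}$. The only delicate point is that the receiver must be able to attribute messages to edges without extra labels; otherwise $O(\log n)$-bit edge identifiers would have to be appended. The shared canonical ordering available under $\KT_1$ resolves this. Alternatively, appending an index of $O(\log \Delta_2)$ bits only changes $B$ by a constant factor, since $\Delta_2 \le n^{r-2}$ gives $\log \Delta_2 = O(\log n)$ for constant $r$, and either approach suffices.
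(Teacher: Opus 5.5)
Your proof is correct and follows the paper's reasoning. The paper gives no formal proof, only the preceding remark that EC carries one $B$-bit message per shared edge (so at most $\Delta_2$ per ordered pair) while a PC message needs just one shared edge; your canonical ordering of shared edges under $\KT_1$ makes explicit how the receiver attributes each message to its edge. One detail is worth adding when you chain rounds to get $T_{PC}\le \Delta_2 T_{EC}$: vertices do not know the global $\Delta_2$, so let each vertex advance once it has received the $\deg(u,v)$ expected (padded) messages from every neighbor $u$; an easy induction then shows that simulated round $k$ completes everywhere by PC round $k\Delta_2$.
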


By combining the conclusions of Propositions~\ref{prop:model-relationships} and~\ref{prop:pc-relationship}, we can obtain analogous relationships between PC and the other models listed in Proposition~\ref{prop:model-relationships}. Combining Proposition~\ref{prop:pc-relationship} with the observation that a single round of PC can be simulated in a single round of the CLIQUE model, we obtain the following.

\begin{corollary}\label{cor:clique-ec}
	Suppose a computational task as complexity $T_C(n, r)$ in the CLIQUE model for hypergraphs with $n$ vertices and rank $r$. Then its computational complexity $T_{EC}(n, r)$ in the EC model satisfies
	\begin{equation}
		T_{EC}(n, r) = \Omega\paren{\frac{1}{n^{r-2}} T_C(n, r)}.
	\end{equation}
\end{corollary}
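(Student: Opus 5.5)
The plan is to chain two simulations: a CLIQUE algorithm can be built out of any EC algorithm with at most a factor $n^{r-2}$ slowdown. Then a CLIQUE lower bound $T_C(n,r)$ immediately gives $T_{EC}(n,r) \geq T_C(n,r)/O(n^{r-2})$.

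\emph{Step 1: simulate EC by PC.} Fix an EC algorithm $\calA$ running in $T_{EC}(n,r)$ rounds on rank-$r$ hypergraphs. By Proposition~\ref{prop:pc-relationship}, each EC round can be simulated in $\Delta_2$ rounds of PC. To make this simulation run on the same input and produce the same outputs, I will check three things.
\begin{itemize}
\item Under $\KT_1$, every vertex knows its incident edges. So for each neighbor $w$, a vertex $v$ knows the set of edges containing both $v$ and $w$, and can fix a canonical ordering of them, for example lexicographic by vertex-ID lists.
\item In the EC round, $v$ sends at most one $B$-bit message to $w$ per shared edge, so at most $\deg(v,w) \leq \Delta_2$ messages in total.
\item In PC, $v$ sends these messages to $w$ one per round in the canonical order. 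The receiver $w$ uses the same ordering to attribute each message to the correct edge.
\end{itemize}
After $\Delta_2$ PC rounds, every vertex holds exactly the information it would hold after one EC round.

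\emph{Step 2: bound $\Delta_2$ and pass to CLIQUE.} Every edge containing both $u$ and $v$ is $\{u,v\}$ together with at most $r-2$ other vertices. Hence
\[
\Delta_2 \leq \sum_{j \leq r-2}\binom{n-2}{j} = O(n^{r-2})
\]
for constant $r$. Each PC round is trivially a CLIQUE round, since CLIQUE permits messages along every pair, including all primal-graph edges. Composing the two simulations gives a CLIQUE algorithm with $T_C(n,r) \leq O(n^{r-2})\,T_{EC}(n,r)$, and rearranging yields the corollary.

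\emph{Main obstacle.} The only delicate point is the bookkeeping in Step 1. The initial knowledge in CLIQUE must be at least $\KT_1$ of $H$, so that vertices can reconstruct edge identities. The edge-attribution order must also be agreed on without any communication. Both follow from the $\KT_1$ assumption together with unique IDs.

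A secondary caveat is the constant hidden in the $O(\cdot)$ bound on $\Delta_2$, which depends on $r$. This is harmless when $r$ is constant. If $r$ is allowed to grow, I will state the bound with $\binom{n-2}{r-2}$ in place of $n^{r-2}$, which still gives the stated form up to constants depending only on $r$.
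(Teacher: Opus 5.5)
Your proposal is correct and is essentially the paper's proof: chain Proposition~\ref{prop:pc-relationship} ($T_{PC} \le \Delta_2 T_{EC}$) with the trivial one-round simulation of PC by CLIQUE and the bound $\Delta_2 \le \sum_{j=0}^{r-2}\binom{n-2}{j}$.

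One small correction to your secondary caveat. For non-uniform hypergraphs of large rank, this sum is not $O\bigl(\binom{n-2}{r-2}\bigr)$; for example, it equals $2^{n-2}$ when $r=n$. You do not need that replacement anyway, since $\sum_{j=0}^{r-2}\binom{n-2}{j} \le (n-1)^{r-2} \le n^{r-2}$ holds for every $r \ge 2$. This gives $T_{EC} \ge T_C/n^{r-2}$ with no $r$-dependent constant at all.
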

\begin{proof}
	The conclusion follows from Proposition~\ref{prop:pc-relationship} together with the observation that a single round of PC can be simulated in a single round of the CLIQUE model, and the fact that for any hypergraph $H$ with rank $r$ and $n$ vertices, we have $\Delta_2 \leq \sum_{k = 0}^{r-2} \binom{n-2}{k} = O(n^{r-2})$.
\end{proof}



\subsection{Obtaining Small Unique Edge IDs}\label{sec:edge-ids}

We will now briefly describe a scheme to assign short, unique IDs to edges in addition to unique vertex IDs.
Below, we describe a scheme by which the vertices can assign unique IDs to edges in $O(1)$ rounds of the EC, EB, and ES models, $O(\log r)$ rounds in the EU model and $O(r)$ rounds in the EP model, assuming each vertex knows the identities of all vertices in each of its incident edges (i.e. the $\KT_1$ assumption above). Then we can assign unique IDs of size $\log n$ as follows:

\begin{itemize}
	\item For an edge $e$, let $v$ be the vertex in $e$ with minimal ID, and let $p$ be the port number that $v$ associates with $e$. That is, $e$ is $v$'s $p$-th incident edge in some arbitrary (but fixed) ordering of $v$'s incident edges. Then assign the ID $(v, p)$ to $e$.
	\item Since other vertices $w \in e$ need not know $p$, this port number can be disseminated to all vertices in $e$ in a single round of the ES model, and hence in $O(1)$ rounds of EC, EB, $O(\log r)$ rounds of EU, and $O(r)$ rounds of EP.
\end{itemize}

%% file: clique.tex
\section{Simplex and Triangle Enumeration in the CLIQUE Model}
\label{sec:clique}

In this section, we consider the general \dft{simplex enumeration} and more restricted \dft{triangle enumeration} problems in the CLIQUE model. In this task, each vertex $v$ should output a set of triangles (cycles of length 3), $S_v$, such that the $S_v$ are pair-wise disjoint and $S = \bigcup_{v \in V} S_v$ contains all triangles in $H$.

%% file: clique-ub.tex
\subsection{Simplex and Triangle Enumeration Algorithm}
\label{sec:clique-ub}

In this section, we prove our main upper bound result for simplex and triangle enumeration in the CLIQUE model. Throughout the section, we assume that $H = (V, E)$ is a hypergraph with $V = [n]$ and that the rank of $H$ is $r$. Our main results are as follows:

\begin{theorem}\label{thm:s-clique-ub}
	For any fixed constant $r$, the $r$-dimensional simplex enumeration problem can be solved in $O(n^{r-2+1/(r+1)}/\log n)$ rounds in the CLIQUE model for $r$-uniform hypergraphs.
\end{theorem}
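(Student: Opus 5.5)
We generalize the Dolev--Lenzen--Peled partition. Fix $k = \lceil n^{1/(r+1)} \rceil$ and partition $V = [n]$ into $k$ blocks $B_1, \ldots, B_k$ of size $\lceil n/k \rceil = O(n^{r/(r+1)})$. The $(r+1)$-tuples of blocks (as multisets, i.e., nondecreasing index sequences $i_0 \le \cdots \le i_r$) number $O(k^{r+1}) = O(n)$. We assign these tuples injectively to vertices; after padding, each vertex $v$ receives at most one tuple $(i_0, \ldots, i_r)$, so that $T_v$ is the set of vertex $(r+1)$-sets meeting the blocks in that pattern. Every potential simplex $S$ lies in exactly one $T_v$: take the multiset of block indices of its vertices. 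Vertex $v$ will then output exactly the simplices in $T_v$, giving disjointness and coverage.

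The support of $T_v$ consists of the $r$-subsets of $B_{i_0} \cup \cdots \cup B_{i_r}$ with the appropriate block pattern. Their number is at most $\binom{(r+1)n/k}{r} = O((n/k)^r) = O(n^{r - r/(r+1)}) = O(n^{r-1+1/(r+1)})$. For the sending side, an edge $e$ whose $r$ vertices lie in a block multiset $\{j_1, \ldots, j_r\}$ is needed by every tuple obtained by adding one further block index, so it is needed by at most $k = O(n^{1/(r+1)})$ vertices. We make each potential edge the responsibility of one of its members, say its minimum vertex. A vertex is then responsible for at most $\binom{n-1}{r-1} = O(n^{r-1})$ potential edges, and it must send information about $O(n^{r-1+1/(r+1)})$ (edge, destination) pairs.

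Next we encode this information as bits rather than as explicit edges. For each destination $v$ and each source $u$, the potential edges that $u$ owns and that lie in $v$'s support form a set of positions that both endpoints can compute from the public partition. Here we use $\KT_1$ and the fact that $u$ knows its incident edges, so it also knows which of its owned potential edges are present. The vertex $u$ sends one bit per such position, indicating presence. Each vertex therefore sends $O(n^{r-1+1/(r+1)})$ bits and receives $O(n^{r-1+1/(r+1)})$ bits in total, since its support size bounds the number of bits destined for it. We pack these bits into messages of $B = \Theta(\log n)$ bits. Each vertex is then the source and destination of $O(n^{r-1+1/(r+1)}/\log n)$ messages, and Lenzen's routing algorithm~\cite{Lenzen2013Optimal} delivers all of them in $O(n^{r-2+1/(r+1)}/\log n)$ rounds. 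Afterwards $v$ knows the full edge set restricted to its support and outputs every simplex in $T_v$ by local computation.

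The main obstacle is the packing step. A single source--destination pair may carry few bits, and if each pair used at least one whole message, then $n$ destinations times a rounding loss could dominate. This is harmless when each pair's bit count is $\Omega(\log n)$. Otherwise we concatenate all bits a source sends across all destinations into one stream with positions known to everyone. Equivalently, we first route per-block aggregates: for each $r$-multiset of blocks, the set of present edges is a bit-vector of length $O((n/k)^r)$, distributed among owners, and it is delivered to the $k$ vertices needing it. The per-vertex loads stay balanced up to constants for fixed $r$, so the $1/\log n$ savings survive. The remaining checks are the verification that loads are balanced and that Lenzen routing applies to $O(n \cdot L)$ messages with per-node load $L$.
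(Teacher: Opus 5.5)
Your proposal is correct and is essentially the paper's argument: the Dolev--Lenzen--Peled partition generalized to $(r+1)$-tuples of $n^{1/(r+1)}$ blocks, one bit per potential edge, and Lenzen routing. The differences are only in bookkeeping. You index parts by block multisets, which makes disjointness immediate, where the paper uses base-$p$ digit tuples plus a lexicographic tie-break. You also give each potential edge a single owner, so you need only the total per-node bit loads, whereas the paper has every member of an edge send and uses Claims~3--4 of its lemma to cut each $b_{v,i}$ into $O(n)$ equal-length messages per node. The packing step you flag as the main obstacle is not one: rounding up costs at most one extra $\Theta(\log n)$-bit message per source--destination pair, hence at most $n$ extra messages per node and $O(1)$ extra rounds. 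So your stream-concatenation fix is unnecessary, and it would not be routable as stated anyway, since each routed message needs a single destination.
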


\begin{theorem}\label{thm:clique-ub}
	For any fixed constant $r$, the triangle enumeration problem can be solved in $O(n^{r-5/3}/\log n)$ rounds in the CLIQUE model for hypergraphs of rank $r$.
\end{theorem}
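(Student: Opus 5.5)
We prove Theorem~\ref{thm:clique-ub} by generalizing the Dolev--Lenzen--Peled partition. Partition $V = [n]$ into $k = n^{1/3}$ blocks $B_1, \ldots, B_k$, each of size $n^{2/3}$. There are exactly $k^3 = n$ ordered triples of blocks $(B_a, B_b, B_c)$, so we assign one such triple to each vertex $v$. Vertex $v$ is then responsible for outputting every triangle $(v_0, e_0, v_1, e_1, v_2, e_2)$ with $v_0 \in B_a$, $v_1 \in B_b$, and $v_2 \in B_c$.

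To do this, $v$ needs exactly the edges $e$ with $\abs{e \cap B_a} \geq 1$ and $\abs{e \cap B_b} \geq 1$, together with the analogous edges for the pairs $(B_b, B_c)$ and $(B_c, B_a)$. Each ordered triple of vertices lies in exactly one cell, so every triangle is output by exactly one vertex. If we want simple or induced triangles only, the vertex filters locally; this is valid because the check for $v_i \notin e_j$ uses only the edges $e_0, e_1, e_2$ themselves.

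\emph{Routing.} Say the edge $e$ "contains the pair $(u, w)$" if $u, w \in e$. Assign the responsibility for sending $e$ to its minimum-ID vertex $\min e$; under $\KT_1$ this vertex knows $e$. An edge $e$ must reach the vertices whose triples include blocks $B_a, B_b$ with $e$ meeting both. The number of block pairs met by $e$ is at most $\binom{r}{2} + r = O(1)$, and each pair of blocks appears in $O(k) = O(n^{1/3})$ triples. So each edge goes to $O(n^{1/3})$ destinations.

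Vertex $u$ is the minimum of at most $\sum_{j<r}\binom{n}{j} = O(n^{r-1})$ edges, so it sends $O(n^{r-1} \cdot n^{1/3}) = O(n^{r-2/3})$ edge-messages. On the receiving side, $v$ needs edges meeting two blocks of size $n^{2/3}$, that is, $O(n^{2/3} \cdot n^{2/3} \cdot n^{r-2}) = O(n^{r-2/3})$ edges. Each edge has $O(r\log n)$ bits, so Lenzen's routing delivers everything in $O(n^{r-2/3}\cdot r\log n / (n\log n)) = O(n^{r-5/3})$ rounds.

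\emph{Saving the $\log n$ factor.} This is the step I expect to be the main obstacle. The bit-vector trick encodes, for each candidate edge, a single existence bit instead of listing the edge. The candidate set must be indexed canonically so that sender and receiver agree on positions.

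Let $v$ hold triple $(a,b,c)$. For the block pair $(a,b)$, the candidate edges are the sets of size at most $r$ that meet both $B_a$ and $B_b$. Group them by their minimum vertex $u$. Then $u$ sends $v$ a bit-vector over the canonically ordered candidates with minimum $u$; both endpoints can compute this ordering locally.

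The concern is load balance. The minimum vertex $u$ of a candidate set may lie outside $B_a \cup B_b$, and low-ID vertices own more sets. To fix this, we should distribute ownership: for example, route each candidate set to an owner chosen by hashing, or assign owners so that every vertex owns $O(n^{r-1})$ candidate subsets and knows their status. A first option is to let each vertex $u$ relay the existence bits of all $r$-sets containing $u$, restricted to those whose remaining vertices lie in the target blocks.

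Per destination, the bit counts are then: $u$ sends $O(n^{r-1}\cdot n^{-1/3})$ bits to each of the $O(n^{1/3})$ relevant destination vertices. Summed over all destinations, each vertex sends and receives $O(n^{r-2/3})$ bits in total. Packing these bits into $\Theta(\log n)$-bit messages gives $O(n^{r-2/3}/\log n)$ messages per vertex, and Lenzen's routing handles them in $O(n^{r-5/3}/\log n)$ rounds. Edges of different cardinalities $j \leq r$ are handled in $r = O(1)$ separate passes.

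The main work is verifying that the canonical indexing gives every destination $\Theta(\log n)$-packable, balanced bit-vectors. Mixed cardinalities and sets meeting a block pair via several vertices only cost constant factors in $r$.
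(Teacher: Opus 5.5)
Your final scheme is exactly the paper's proof, and it works. In that scheme each vertex $u$ sends, to every vertex whose block triple contains $u$'s block, a bit vector indexed by the potential edges containing $u$ and some other vertex of that triple's blocks, and Lenzen routing delivers everything. The step you flag as the main obstacle is routine once the counts are done correctly. Several statements along the way are wrong as written, though, and two of them would break the argument if taken literally.

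\emph{The needed edges.} For the block pair $(B_a,B_b)$, the needed edges are those containing two \emph{distinct} vertices $x\in B_a$, $y\in B_b$. With your condition $\abs{e\cap B_a}\ge 1$, $\abs{e\cap B_b}\ge 1$, each of the $n^{2/3}$ destinations with $a=b$ would have to receive all $\Theta(n^{r-1/3})$ potential edges meeting $B_a$. That exceeds the $O(n^{r-2/3})$ receive budget; your own receive count $n^{2/3}\cdot n^{2/3}\cdot n^{r-2}$ silently assumes the distinct-vertex version.

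\emph{The relay restriction.} It must read `some other vertex of the set lies in a target block', not `the remaining vertices lie in the target blocks'. An edge $e_0\ni v_0,v_1$ may have its other $r-2$ vertices anywhere, and under the literal restriction neither $v_0$ nor $v_1$ would send it.

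\emph{The destination count.} A vertex $u$ is relevant to $k^3-(k-1)^3=\Theta(n^{2/3})$ destinations, not $O(n^{1/3})$. It is $\Theta(n^{2/3})\cdot O(n^{r-4/3})$ that yields your $O(n^{r-2/3})$ bits sent. Symmetrically, each destination hears from at most $3n^{2/3}$ senders with $O(n^{r-4/3})$ bits each.

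\emph{Packing and ownership.} Packing costs at most one partial $\Theta(\log n)$-bit packet per source--destination pair, i.e.\ at most $n$ extra packets per vertex and $O(1)$ extra Lenzen phases. So no hashing or reassignment of ownership is needed. Your min-ID grouping was in fact balanced as well: a source outside $B_a\cup B_b$ owns only $O(n^{4/3}\cdot n^{r-3})$ candidates for a given destination, and there are at most $n$ destinations, giving $O(n^{r-2/3})$ again.

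\emph{Uniqueness of output.} A triangle corresponds to six ordered triples $(v_0,v_1,v_2)$ (rotations and reversals), which can lie in different cells. So `exactly one vertex' requires outputting only a canonical representative, e.g.\ the lexicographically first, as in the paper's footnote.
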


Note that for $r = 2$ (i.e., for graphs), both of these upper bounds are $O(n^{1/3}/\log n)$, matching the bound of Dolev, Lenzen, and Peled~\cite{Dolev2012Tri} for triangle enumeration in graphs. The algorithms that achieve the upper bounds of Theorems~\ref{thm:s-clique-ub} and~\ref{thm:clique-ub} are generalizations of the algorithm of~\cite{Dolev2012Tri}. The algorithm achieving Theorem~\ref{thm:s-clique-ub} is slightly more general, so we provide full details of that algorithm here, and describe how the procedure can be modified to obtain Theorem~\ref{thm:clique-ub} as well. Throughout the section we use the term \dft{$k$-subset} to refer to a subset of $[n]$ of size $k$, and denote the set of all $k$-subsets of $[n]$ by $\binom{[n]}{k}$.

The basic idea for the algorithm achieving Theorem~\ref{thm:s-clique-ub} is to partition the set of all $(r+1)$-subsets $\set{v_0, v_1, \ldots, v_r} \subseteq V$, $\binom{V}{r+1}$, into $n$ sets, $T_1, T_2, \ldots, T_n$ corresponding to the $n$ vertices in the hypergraph. Each vertex $i \in V$ is then responsible for outputting all simplices $\Sigma$ such that $\Sigma$ is an $(r+1)$-set in $i$'s part of the partition, $S_i$. In order for $i$ to output all such simplices, all edges $e$ that are subsets of some set $\Sigma \in S_i$ should be sent to $i$. Dolev et al.~\cite{Dolev2012Tri} gave an elegant description of such a partition of triples $r=2$ case such that total number of edges sent and received by each vertex is (an asymptotically optimal) $\Theta(n^{4/3})$ for graphs (i.e., rank-$2$ hypergraphs). We show that the same strategy generalizes to $(r+1)$-sets also yields an optimal algorithm for simplex enumeration in hypergraphs.

To describe the partition of $(r+1)$-sets, first partition $V$ into $p = n^{1/(r+1)}$ sets of size $q = n/p = n^{r/{r+1}}$, $S_1, S_2, \ldots, S_p$. (For notational convenience, we assume that $n$ is a power of $r+1$ so that $s = n^{1/(r+1)}$ is an integer.) For each integer $i \in [n]$, write $i = (i_r i_{r-1} \cdots i_0)_p$ in base $p$ where each $i_j \in [p]$.

The set of $(r+1)$-tuples associated with vertex $i = (i_r i_{r-1}\cdots i_0)_p$ is then the set $T_i$ consisting of all $(r+1)$-tuples of the form $S_{i_r} \times S_{i_{r-1}} \times \cdots \times S_{i_0}$.\footnote{Technically, the $T_i$ form a a partition of the set of (ordered) $r+1$ tuples, and not $(r+1)$-subsets of $[n]$ as suggested above. We can easily change such a partition into a partition of $(r+1)$-subsets by associating each $(r+1)$-subset $\Sigma$ with the lexicographically first $(r+1)$-tuple containing the elements of $\Sigma$. The vertex assigned this tuple is then responsible for outputting the simplex $\Sigma$, if present in $H$.} That is, we define
\begin{equation}
	T_i = S_{i_r} \times S_{i_{r-1}} \times \cdots \times S_{i_0}.
\end{equation}
For a set $T$ of tuples of elements from $[n]$, we define the \dft{$r$-support} of $T$, denoted $\supp_r(T)$, to be the set of $r$-subsets of $[n]$ that contains all $r$-subsets of the support of some tuple in $T_i$.\footnote{Recall that given a tuple $s = (s_r, s_{r-1}, \ldots, s_{0})$, the \dft{support} of $s$, denoted, $\supp(s) = \set{s_r, s_{r-1},\ldots, s_0}$, is the set of distinct elements appearing in $s$.}
\begin{equation}\label{eqn:e-i}
	\supp_r(T) = \set{f \in \binom{[n]}{r} \sucht \exists s \in T_i, f \subseteq \supp(s).}
\end{equation}
The set $F_i = \supp_r(T_i)$ corresponds to all possible rank $r$ edges whose elements appear in an $(r+1)$-tuple in $T_i$. For a hypergraph $H = (V, E)$, we define $E_i = E \cap F_i$ to be the actual edges present in $H$ associated in this way with $T_i$. We make the following observations regarding the definitions above.

\begin{observation}\label{obs:triangle-partition}
	Let $H = (V, E)$ be a hypergraph with $V = [n]$ and suppose $T_i$ and $E_i$ are defined as above. Then:
	\begin{enumerate}
		\item $T_1, T_2, \ldots, T_n$ is a partition of $V^{r+1}$.
		\item If $\Sigma = \set{v_0, v_1, \ldots, v_r}$ is a simplex in $H$ with $(v_r, v_{r-1}, \ldots, v_0) \in T_i$ then for every $r$-subset $e$ of $\Sigma$, we have $e \in E_i$.
	\end{enumerate}
\end{observation}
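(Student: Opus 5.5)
Both parts follow by unwinding the definitions of $T_i$ and of the $r$-support $F_i = \supp_r(T_i)$.

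For part~1, I will use the fact that $S_1, \ldots, S_p$ partition $V$. Hence every vertex $v \in V$ lies in exactly one block, and I write $\sigma(v) \in [p]$ for its index. Fix a tuple $(v_r, v_{r-1}, \ldots, v_0) \in V^{r+1}$. It lies in $S_{i_r} \times \cdots \times S_{i_0}$ if and only if $i_j = \sigma(v_j)$ for every $j$. So the tuple belongs to exactly one product set, namely the one indexed by the digit string $(\sigma(v_r), \ldots, \sigma(v_0))$.

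It remains to match these digit strings with vertices. The strings range over $[p]^{r+1}$, which has $p^{r+1} = n$ elements. Writing each $i \in [n]$ in base $p$, as in the definition, is a bijection between $[n]$ and $[p]^{r+1}$ (after the usual shift of digits and indices by one). Therefore each tuple lies in $T_i$ for exactly one $i$. This also shows that every $T_i$ is of the stated product form. Hence $T_1, \ldots, T_n$ are pairwise disjoint and cover $V^{r+1}$.

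For part~2, let $\Sigma = \{v_0, \ldots, v_r\}$ be a simplex with $s = (v_r, \ldots, v_0) \in T_i$. Take any $r$-subset $e \subseteq \Sigma$. Since $\supp(s) = \Sigma$, we have $e \subseteq \supp(s)$ with $s \in T_i$, so $e \in \supp_r(T_i) = F_i$ directly by~(\ref{eqn:e-i}). By Definition~\ref{dfn:simplex}, every $r$-subset of $\Sigma$ is an edge, so $e \in E$. Therefore $e \in E \cap F_i = E_i$.

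No step here is a real obstacle; the only care needed is in part~1. There one must check that the base-$p$ indexing uses exactly $r+1$ digits and is a genuine bijection onto $[p]^{r+1}$. This is exactly why the paper assumes $n = p^{r+1}$ with $p$ an integer.
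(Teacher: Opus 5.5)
Your proof is correct and takes the same approach as the paper. The paper states this as an observation with no written proof because both parts follow directly from the definitions: part~1 from the bijection between $[n]$ and the digit strings in $[p]^{r+1}$ (using $n = p^{r+1}$, which is what the paper's ``$n$ is a power of $r+1$'' assumption is meant to say), and part~2 from $\supp(s) = \Sigma$ together with the fact that every $r$-subset of a simplex is an edge.
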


By the observation, in order to perform simplex enumeration, it suffices for each vertex $i$ to learn the set of edges $E_i$. The main algorithm is simply a scheme for communicating edges such that each $i$ receives all edges in $E_i$. Before describing the algorithm formally, we make some more observations about the partition $T_1,\ldots,T_n$ as well as the associated set of possible edges $F_i$.

\begin{lemma}\label{lem:simplex-support}
	Let $H = (V, E)$ be an $r$-uniform hypergraph for some constant $r \geq 2$ with $V = [n]$. Define $\calT = \set{T_1, T_2, \ldots, T_n}$ to be the partition of $(r+1)$-tuples defined above, and for each $i \in [n]$, define $F_i = \supp_r(T_i)$ and $E_i = E \cap F_i$. Then the following hold:
	\begin{enumerate}
		\item For for each $i$, we have $\abs{E_i} = O(n^{r - 1 + 1/(r+1)})$.
		\item For each edge $e \in E$, there are $O(n^{1/(r+1)})$ indices $i$ for which $e \in E_i$.
		\item For each vertex $v$, there are $O(n^{r/(r+1)})$ indices $i$ for which there exists an incident edge $e \ni v$ with $e \in E_i$.
		\item For any vertex $v$ and index $i$, the number of edges $e$ incident to $v$ for which $e \in E_i$ is $O(n^{r - 2 + 2/(r+1)})$
	\end{enumerate}
\end{lemma}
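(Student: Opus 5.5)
We will prove the four items by direct counting on the product structure $T_i = S_{i_r}\times\cdots\times S_{i_0}$, with $p = n^{1/(r+1)}$ parts $S_j$ of size $q = n^{r/(r+1)}$. The single observation behind everything is that any $f \in F_i$ is an $r$-subset of $U_i := S_{i_r}\cup\cdots\cup S_{i_0}$, a union of at most $r+1$ blocks. It is therefore contained in a union of at most $r$ of these blocks, and each block containing an element of $f$ must be among the $S_{i_j}$.

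\emph{Item 1.} Any $f \in F_i$ is an $r$-subset of $U_i$, and $\abs{U_i} \le (r+1)q$. The crude bound $\binom{(r+1)q}{r} = O(n^{r^2/(r+1)})$ is too weak, so we refine it. Since $f \subseteq \supp(s)$ for some $s \in T_i$, $f$ is obtained from a tuple in $T_i$ by deleting (at least) one coordinate. Hence $f$ picks at most one element from each coordinate block, and at least $r$ of the $r+1$ coordinates contribute. Summing over the $r+1$ choices of the omitted coordinate gives
\[
\abs{F_i} \le (r+1)\, q^{r} = O\bigl(n^{r^2/(r+1)}\bigr).
\]
This is still $n^{r-1+1/(r+1)}$ only when $r^2 = (r-1)(r+1)+1$, which does hold: $r^2 = r^2 - 1 + 1$. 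So the counting closes, and $\abs{E_i}\le \abs{F_i}$ gives the bound. (If $f$ has repeated blocks the count is only smaller, since the elements must be distinct.)

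\emph{Item 2.} Fix $e$. The set of blocks meeting $e$ has size at most $r$, and each such block must appear among the coordinates $i_r,\dots,i_0$ of $i$. The number of index vectors $(i_r,\dots,i_0)\in[p]^{r+1}$ whose coordinate multiset contains a prescribed set of $k\ge 1$ block labels is at most $(r+1)^{k}\, p^{\,r+1-k} \le (r+1)^r p^{r} $.

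This bound is $O(p^r)$, not the required $O(p)$, so the count must be tightened. The missing ingredient is that $e$ has $r$ distinct elements drawn one per coordinate from $r$ of the $r+1$ coordinates. Thus $r$ coordinates of $i$ are determined up to $O_r(1)$ choices by the blocks of the elements of $e$ (after choosing which element goes in which coordinate and which coordinate is omitted). Only the omitted coordinate is free, giving $O(p) = O(n^{1/(r+1)})$ indices.

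\emph{Item 3.} If $v\in e\in E_i$, then $v$ lies in a block appearing among the coordinates of $i$. There are $r+1$ choices of position and $p^r$ choices for the remaining coordinates, giving $O(p^r)=O(n^{r/(r+1)})$.

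\emph{Item 4.} Fix $v$ and $i$. An edge $e\ni v$ in $F_i$ is determined by choosing the omitted coordinate and the coordinate holding $v$, which takes $O(1)$ choices. It then picks one element from each of the remaining $r-1$ blocks, so there are $O(q^{r-1}) = O(n^{r(r-1)/(r+1)})$ such edges. This matches $n^{r-2+2/(r+1)}$ because $r(r-1) = (r-2)(r+1)+2$.

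The main obstacle is making rigorous the claim that every $f\in F_i$ uses at most one element per coordinate. This holds because $f\subseteq\supp(s)$ and $\abs{f}=r$ while $s$ has $r+1$ coordinates. It requires checking the case where $\supp(s)$ has fewer than $r+1$ distinct elements; that case is handled by choosing, for each element of $f$, one coordinate of $s$ in which it occurs.
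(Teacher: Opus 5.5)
Your proof is correct and follows essentially the same counting as the paper: each element of the edge is assigned injectively to one of the $r+1$ coordinates, which fixes $r$ coordinates of $i$ and leaves one free, and this gives Items 2--4 with the same constants. One correction to Item 1: the crude bound $\binom{(r+1)q}{r} \le (r+1)^r q^r$ is not too weak, since $r^2/(r+1) = r-1+1/(r+1)$, so for constant $r$ it already yields the claim; this is exactly what the paper uses, and your refinement to $(r+1)q^r$ only improves the constant.
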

\begin{proof}
	We prove each of the claims separately.
	\begin{enumerate}
		\item Since $E_i \subseteq F_i$, it suffices to show the claimed bound for $\abs{F_i}$. To this end, recall that $T_i = S_{i_r} \times S_{i_{r-1}} \times \cdots \times S_{i_0}$ where each $S_{i_j}$ satisfies $\abs{S_{i_j}} = q = \Theta(n^{r/(r+1)})$. Observe that each set $e \in F_i$ is an $r$-subset of $\bigcup_{j = 0}^r S_{i_{j}}$, which has size $(r+1)q$. Thus, the number of sets $e \in F_i$ is at most $\binom{(r+1)q}{r} \leq (r+1)^r q^r = \Theta(n^{r^2/(r+1)}) = \Theta(n^{r-1+1/(r+1)})$.
		\item Let $e = \set{v_1, v_2, \ldots, v_r}$. Note that $e \in \supp(T_i)$ if and only if there is an injective map $\varphi : [r] \to [r+1]$ for which for all $j$, $v_j \in S_{i_{\varphi(j)}}$. There are at most $(r+1)!p = O(n^{1/(r+1)}$ indices $i$ for which this condition is satisfied. To see, this, observe that for any fixed function $\varphi$, there are $p$ indices $i$ for which $v_j \in S_{i_{\varphi(j)}}$ for all $j$, corresponding to the $p$ choices of $S_{i_{k}}$ where $k$ is the (unique) index not in the range of $\varphi$.
		\item Since every $e$ incident to $v$ contains $v$ as an element, every $i = (i_r i_{r-1} \cdots i_0)_p$ must contain $i_{v}$ as an index, where $v \in S_{i_v}$. There are $r+1$ indices $j$ for which we could have $i_v = i_j$, $p$ choices of each the remaining indices $i_k$, $k \neq j$. Thus there are at most $(r + 1) p^r = O(n^{r/(r+1)})$ possible indices $i$ for which $v$ is incident to an edge $e \in E_i$.
		\item Let $v \in V$ and consider the set $F_{i, v} = \set{e \in \supp_r(T_i) \sucht v \in e}$. We must show that $\abs{F_{i, v}} = O(n^{r - 2 + 2/(r+1)})$. To this end, observe that for the tuples $(i_r, i_{r-1},\ldots,i_0)$, there are at most $r+1$ indices $j$ for which we could have $v = i_j$ and a remaining $r$ indices $k \neq j$ for which $i_k$ is omitted from $e$ (the $r$-set containing $v$). For each the remaining values $i_\ell$, $\ell \neq j, k$, there are $\abs{S_{i_\ell}} = q$ possible choices. Thus, the total number of edges in $F_{i, v}$ is at most $r (r+1) q^{r-1} = O(n^{r(r-1)/(r+1)}) = O(n^{r - 2 + 2 / (r+1)})$.
	\end{enumerate}
	Thus, all the claims of the lemma hold.
\end{proof}

Using the bounds from Lemma~\ref{lem:simplex-support}, we can describe a simple algorithm for simplex enumeration in the CLIQUE model that uses $O(n^{r-2+1/(r+1)})$ rounds as follows. Again, the idea is that for each pair of vertices $v$ and $i$, $v$ should send each incident edge $e$ to the vertex $i$ for which $e \in F_i$. By Claim~1 in the lemma, each $i$ will receive at most $O(n^{r-1+1/(r+1)})$ edges, each of which can be encoded with $r \log n = O(\log n)$ bits. By Claim~2 in the lemma, each edge $e$ incident to $v$ will be sent to $O(n^{1/(r+1)})$ other vertices $i$. Since $v$ has degree at most $\binom{n-1}{r-1} = O(n^{r-1})$, the total number messages sent by $v$ is $O(n^{r-1+1/(r+1)})$ and each message has length $O(\log n)$. Under this scheme, each vertex is the source and destination of $O(n^{r-1+1/(r+1)})$ messages, hence using Lenzen's deterministic routing algorithm~\cite{Lenzen2013Optimal}, all messages can be routed to their destinations using in $O(n^{r-2+1/(r+1)})$ rounds of CLIQUE.

In order to improve this running time by a factor of $1/\log(n)$, we employ the following technique (also used by Dolev et al.~\cite{Dolev2012Tri}). Instead of sending individual messages for each incident edge, each vertex forms bit vectors that encode the present edges using only one bit per edge. We formalize this procedure in the proof below.

\begin{proof}[Proof of Theorem~\ref{thm:s-clique-ub}]
	For each pair of vertices $v$ and $i$, let $F_{v, i} = \set{e \in F_i \sucht v \in e}$. That is, $F_{v, i}$ contains all $r$-subsets of $F_i$ that contain $v$. The vertex $v$ forms the bit vector $b_{v, i} \in \set{0, 1}^{\abs{F_{v, i}}}$ indexed by elements from $F_{v, i}$ as follows:
	\begin{equation*}
		b_{v, i}(e) =
		\begin{cases}
			1 & \text{if } e \in E \\
			0 & \text{otherwise.}
		\end{cases}
	\end{equation*}
	By Claim~3 of Lemma~\ref{lem:simplex-support}, for each $v$, there are only $O(n^{r/(r+1)})$ vertices $i$ for which $F_{v, i}$ is nonempty. Further, by Claim~4, each $b_{v, i}$ has length $O(n^{r-2+2/(r+1)})$. By breaking each $b_{v, i}$ up into $n^{1/(r+1)}$ messages each of length $O(n^{r-2+1/(r+1)})$, $v$ is the source of $O(n)$ messages each of length $O(n^{r-2+1/(r+1)})$. Symmetrically, each $i$ is the destination of $O(n)$ messages each of length $O(n^{r-2+1/(r+1)})$.

	Since each vertex is the source and destination of $O(n)$ messages each of length $O(n^{r-2+1/(r+1)})$, all messages can be routed deterministically using a straightforward generalization of Lenzen's routing~\cite{Lenzen2013Optimal} in $O(n^{r-2+1/(r+1)}/\log n)$ rounds of the CLIQUE model.\footnote{Lenzen~\cite{Lenzen2013Optimal} showed that if each vertex is the source and destination of at most $n$ messages, each of which can be transmitted across a single edge in a single round, then all messages can be routed from source to destination in $O(1)$ rounds deterministically. This result can be adapted to our setting (where individual messages cannot be transmitted in a single round) in a straightforward way. Specifically, we break an execution up into phases, where each phase simulates an execution of Lenzen's algorithm. Whenever Lenzen's algorithm transmits a message $M$ from a vertex $v$ to a vertex $u$, we send the message in $O(\abs{M}/\log n)$ rounds of CLIQUE. Thus, the total number of rounds to route messages with each vertex the source and destination of at most $n$ messages is $O(\abs{M}/\log n)$.} Once each vertex $i$ has received all such messages, it can reconstruct $E_i$, from which $i$ can output all present simplices from $T_i$, as required.
\end{proof}

The algorithm and argument achieving Theorem~\ref{thm:clique-ub} (triangle enumeration) is a straightforward adaption of our proof of Theorem~\ref{thm:s-clique-ub}, so we only sketch how the procedure and analysis differ here.

\begin{proof}[Proof of Theorem~\ref{thm:clique-ub} (sketch)]
	In order to perform triangle enumeration in a hypergraph $H = (V, E)$, we once again assume $V = [n]$. The idea is to partition the set of all triples $(u, v, w) \in V^3$ into $n$ sets using the same partition $\set{T_1, T_2, \ldots, T_n}$ as in the proof of Theorem~\ref{thm:s-clique-ub} with $r = 2$. The idea is that vertex $i$ will be responsible for outputting all triangles $(v_0, e_0, v_1, e_1, v_2, e_2, v_0)$ such that $(v_0, v_1, v_2) \in T_i$. To accomplish this, each vertex $v \in \set{v_0, v_1, v_2}$ should send to $i$ all of $v$'s incident edges containing at least one other element $w$ from $\set{v_0, v_1, v_2}$. Each pair $\set{v_i, v_j}$ is contained at most $\binom{n-2}{r-2} = O(n^{r-2})$ edges, and by Lemma~\ref{lem:simplex-support} (Claim~1, with $r=2$), the triples in $T_i$ in $T_i$ contain $O(n^{4/3})$ such pairs in their support. Thus, each $i$ receives $O(n^{r-2/3})$ total edges. Similarly, by Claim~2 of Lemma~\ref{lem:simplex-support} (again with $r = 2$), $v$ sends each incident edge to $O(n^{1/3})$ vertices $i$, so the total number edges sent by each $v$ is $O(n^{r-2/3})$, as $v$ has at most $n$ neighboring vertices, and for each neighboring vertex $w$ there are $O(n^{r-2})$ edges of rank at most $r$ containing both $v$ and $w$. Since each vertex sends and receives $O(n^{r-2/3})$ edges, we can apply bit vector encoding as in the proof of Theorem~\ref{thm:s-clique-ub} together with Lenzen routing~\cite{Lenzen2013Optimal} to deterministically list triangles in $O(n^{r-5/3}/\log n)$ rounds of CLIQUE.
\end{proof}

%% file: clique-lb.tex
\subsection{Simplex and Triangle Enumeration Lower Bound}
\label{sec:clique-lb}

In this section, we prove our main lower bounds for the simplex and triangle enumeration problems, thereby showing that the upper bounds of Theorems~\ref{thm:s-clique-ub} and~\ref{thm:clique-ub} are tight. The primary result generalizes the CLIQUE lower bound of Izumi and Le Gall~\cite{Izumi2017Triangle} to hypergraphs.

\begin{theorem}\label{thm:clique-lb}
	For any constant $r$ there exists a distribution over $r$-uniform hypergraphs such that:
	\begin{enumerate}
		\item any algorithm solving $r$-dimensional simplex enumeration with sufficiently large constant probability requires $\Omega(n^{r-2+1/(r+1)}/\log n)$ rounds in the CLIQUE model, and
		\item any algorithm solving triangle enumeration with sufficiently large constant probability requires $\Omega(n^{r-5/3} / \log n)$ rounds in the CLIQUE model.
	\end{enumerate}
	These bounds applies \emph{a fortiori} to the PRIMAL CONGEST (PC) model as well.
\end{theorem}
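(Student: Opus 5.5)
We follow the information-theoretic framework of Izumi and Le Gall~\cite{Izumi2017Triangle}. The hard distribution is $G(n,r,1/2)$: every $r$-subset of $V=[n]$ is included as an edge independently with probability $1/2$. Let $\calE=(\bfe_1,\ldots,\bfe_M)$, with $M=\binom{n}{r}$, be the vector of edge indicators. The initial state $\rho_v$ of a vertex $v$ under $\KT_1$ consists of the incident edges of $v$. For a vertex $\bfw$, let $\pi_\bfw$ be the transcript of messages it receives.

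The first step is an information lemma, proven exactly as in~\cite{Izumi2017Triangle}. If $\bfw$ outputs a set of triangles (or simplices) whose union of edges is $E(\calT_\bfw)$, then the output certifies that all these edges exist. Conditioned on the output, the entropy of $\calE$ therefore drops by $\abs{E(\calT_\bfw)}$ bits. Since the output is a function of $(\rho_\bfw,\pi_\bfw)$, this gives
\[
H(\pi_\bfw)\ \ge\ \E\bigl[\abs{E(\calT_\bfw)}\bigr]-H(\rho_\bfw).
\]
In a $T$-round algorithm $\bfw$ receives at most $T(n-1)B$ bits, so $H(\pi_\bfw)=O(Tn\log n)$. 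The choice of $\bfw$ depends on the random hypergraph, which is a technicality. We handle it as Izumi and Le Gall do, either by taking the vertex maximizing the expected number of output edges or by fixing the argmax and paying an extra $\log n$ bits.

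The second step is counting. For simplices, each $(r+1)$-set is a simplex with probability $2^{-(r+1)}$. By Chebyshev (the number of dependent pairs is small), with large constant probability there are $\Theta(n^{r+1})$ simplices. Since the algorithm succeeds with large constant probability, by pigeonhole some $\bfw$ outputs $\Omega(n^r)$ simplices with constant probability. For triangles, the same second-moment argument gives $\Theta(n^{3r-3})$ simple (and also induced) triangles with constant probability. These come from choosing three vertices and three edges of size $r$ through the respective pairs, which gives $n^3\cdot(n^{r-2})^3$ configurations. Some $\bfw$ then outputs $\Omega(n^{3r-4})$ of them.

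The third step converts many structures into many edges, and it is the main obstacle. For simplices we use the Lovász form of Kruskal--Katona, taking as given the version the paper later cites as Theorem~\ref{thm:kruskal-katona}. If a family contains $s=\binom{x}{r+1}$ sets of size $r+1$ whose $r$-shadows are all edges, then there are at least $\binom{x}{r}$ edges. With $s=\Omega(n^r)$ we get $x=\Omega(n^{r/(r+1)})$, so the number of edges is $\Omega(x^r)=\Omega(n^{r-1+1/(r+1)})$. For triangles we need the hypergraph analogue of Rivin's bound, which we state and prove as Theorem~\ref{thm:triangle-edge-bound}. Pass to the multigraph $G_{\set{H}}$, which has one copy of $\set{u,v}$ per hyperedge containing both $u$ and $v$. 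Each hyperedge produces $\binom{r}{2}$ pair copies, and each pair multiplicity is at most $\binom{n-2}{r-2}$. Triangles of $H$ inject into triangles of $G_{\set{H}}$. Then use the trace bound $6t=\tr(A^3)\le\bigl(\sum\lambda_i^2\bigr)^{3/2}$. For a multigraph with multiplicities $\le\mu$, $\sum\lambda_i^2=\sum_{u,v}A_{uv}^2\le 2\mu m'$. This gives $t\le\frac{\sqrt2}{3}\bigl(m\binom{r}{2}\binom{n-3}{r-2}\bigr)^{3/2}$, up to the exact binomial. Plugging in $t=\Omega(n^{3r-4})$ gives $m^{3/2}n^{3(r-2)/2}=\Omega(n^{3r-4})$, so $m=\Omega(n^{r-2/3})$.

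Finally, combine. We have $H(\rho_\bfw)\le\deg(\bfw)=O(n^{r-1})$, which is lower order in both cases. Note that $r-1<r-2/3$, and $r-1<r-1+1/(r+1)$. Hence
\[
Tn\log n=\Omega\bigl(n^{r-2/3}\bigr)\quad\text{and}\quad Tn\log n=\Omega\bigl(n^{r-1+1/(r+1)}\bigr)
\]
for triangles and simplices respectively, which yields both claimed bounds. The PC bound follows because one PC round is simulated by one CLIQUE round.
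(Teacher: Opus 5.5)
Your proposal is correct and follows the paper's proof essentially step for step: the same $H(n,r,1/2)$ distribution, the Izumi--Le Gall information inequality, pigeonhole, Kruskal--Katona for simplices, the induced-multigraph form of Rivin's trace bound for triangles, and $H(\rho_\bfw)=O(n^{r-1})$ as a lower-order term; the only difference is that you get concentration from Chebyshev, whereas the paper combines the expectation with a deterministic upper bound on the count. Of your two ways of handling the random $\bfw$, keep the second (charge $\log n$ bits for the identity of $\bfw$ and use the deterministic bound of $T(n-1)B$ bits received per vertex), because the first fails in general: an algorithm may funnel all output to a vertex chosen as a function of the input, in which case $\E[\abs{E(\calT_v)}]=O(n^{r-1})$ for every fixed $v$ and the fixed-vertex inequality gives no nontrivial bound.
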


Applying Theorem~\ref{thm:clique-lb} along with Corollary~\ref{cor:clique-ec} and Proposition~\ref{prop:model-relationships}, we obtain the following corollary for simplex and triangle enumeration in the various models introduced in Section~\ref{sec:computational-models}.

\begin{corollary}\label{cor:congest-lb}
	Solving simplex enumeration and (respectively) triangle enumeration with sufficiently large probability in the EC, EB, EU, ES, and EP models in hypergraphs with $n$ vertices require $\Omega(n^{1/(r+1)}/\log n)$ and $\Omega(n^{1/3}/\log n)$ rounds, respectively.
\end{corollary}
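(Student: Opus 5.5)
The plan is to transfer the CLIQUE lower bounds of Theorem~\ref{thm:clique-lb} to the edge-based models through the simulation results already established. We fix the hard distribution from Theorem~\ref{thm:clique-lb}: $r$-uniform hypergraphs on $n$ vertices, where any CLIQUE algorithm succeeding with sufficiently large constant probability needs $T_C = \Omega(n^{r-2+1/(r+1)}/\log n)$ rounds for simplex enumeration and $T_C = \Omega(n^{r-5/3}/\log n)$ rounds for triangle enumeration.

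First we handle the EC model. By Proposition~\ref{prop:pc-relationship}, one EC round is simulated by $\Delta_2$ rounds of PC. One PC round is in turn simulated by one CLIQUE round, since the primal-graph links are a subset of the clique links. This is exactly the content of Corollary~\ref{cor:clique-ec}, which gives $T_{EC} = \Omega(T_C/n^{r-2})$. The simulation is round-by-round and message-faithful, so it preserves the success probability on the same input distribution. Therefore an EC algorithm succeeding with the required probability yields a CLIQUE algorithm with the same success probability and $O(n^{r-2})$ times as many rounds. Substituting the two bounds, simplex enumeration needs $\Omega(n^{r-2+1/(r+1)}/(n^{r-2}\log n)) = \Omega(n^{1/(r+1)}/\log n)$ rounds, and triangle enumeration needs $\Omega(n^{r-5/3}/(n^{r-2}\log n)) = \Omega(n^{1/3}/\log n)$ rounds.

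Next we pass to the weaker models. Proposition~\ref{prop:model-relationships} gives $T_{EC} = O(T_{EB}), O(T_{EU}), O(T_{ES}), O(T_{EP})$, because a single round of each of EB, EU, ES and EP is simulated by $O(1)$ rounds of EC. Each of these is also a faithful simulation, so the success probability is preserved. Hence each of these models inherits the EC lower bound up to constant factors, which proves the corollary.

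The one point needing care is the bound $\Delta_2 = O(n^{r-2})$ for $r$-uniform inputs. Since $r$ is constant, $\Delta_2 \le \binom{n-2}{r-2} = O(n^{r-2})$. The other point is that the KT$_1$ initial knowledge in EC can be reproduced in CLIQUE. This holds because the lower bound of Theorem~\ref{thm:clique-lb} is proved under the same local input, in which each vertex knows its incident edges. I expect no genuine obstacle: the argument is a composition of reductions the paper has already established.
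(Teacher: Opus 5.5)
Your proposal is correct and follows the paper's route: the paper obtains this corollary by combining Theorem~\ref{thm:clique-lb} with Corollary~\ref{cor:clique-ec}, which packages the simulation of EC by PC and then CLIQUE with $\Delta_2 = O(n^{r-2})$ overhead, and then transfers the EC bound to EB, EU, ES and EP via Proposition~\ref{prop:model-relationships}. Your remarks on preserving the success probability and on the CLIQUE vertices having at least the $\KT_1$ initial knowledge are sound details that the paper leaves implicit.
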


Our proof of Theorem~\ref{thm:clique-lb} generalizes the lower bound argument of Izumi and Le Gall to hypergraphs. The idea of Izumi and Le Gall's proof is to consider a random input drawn from $G(n, 1/2)$, where each pair of vertices forms an edge independently with probability $1/2$. A graph $G$ sampled from $G(n, 1/2)$ contains $\Omega(n^3)$ triangles in expectation. Thus, in order to solve the triangle enumeration problem, some vertex must output $\Omega(n^2)$ triangles (with constant probability).

Izumi and Le Gall's argument relies upon two main lemmas. The first (Lemma~4.3 in~\cite{Izumi2017Triangle}) gives a lower bound on the mutual information between the set of triangles output by a vertex and the edges of the graph by the number of edges contained in the triangles output by the vertex. The second lemma (Lemma~4.2 in~\cite{Izumi2017Triangle}) gives a lower bound on the number of edges contained in $t$ triangle, namely any set of $t$ triangles contains at most $t^{2/3}$ edges.

Izumi and Le Gall's main argument then proceeds as follows. For $G \sim G(n, 1/2)$, some vertex $v$ outputs $\Omega(n^2)$ triangles with constant probability. By the first lemma (Lemma~4.3), the mutual information between the edge set of $G$ and the set $T$ of triangles is at least the number of edges contained in some triangle of $T$. This quantity is $\Omega(n^{4/3})$ by the second lemma (Lemma~4.2). The lower bound then follows by arguing that mutual information between $E$ and $T$ is a lower bound for the number of bits received by $v$. Since $v$ can receive at most $n \log n$ bits per round, the lower bound $n^{1/3} / \log n$ follows.

To prove Theorem~\ref{thm:clique-lb}, we generalize each of the parts of Izumi and Le Gall's argument, adapting it both to simplex enumeration and hypergraph triangle enumeration. Specifically, our argument for simplex enumeration proceeds as follows:
\begin{enumerate}
	\item We introduce the random hypergraph model $H(n, r, p)$ that generalizes the random graph model $G(n, p)$, and we show that $H \sim H(n, r, 1/2)$ contains $\Theta(n^{r+1})$ $r$-dimensional simplices with constant probability. Hence, in order to solve simplex enumeration, some vertex must output $\Omega(n^r)$ simplices with constant probability. In the case of triangles, we show that $H$ contains $\Theta(n^{3r - 3})$ triangles with constant probability, hence some vertex must output $\Theta(n^{3r - 4})$ triangles with constant probability.
	\item Next we show that that any sub-hypergraph containing many simplices (respectively triangles) must contain many edges. In the case of simplices, we apply the Kruskal-Katona theorem from extremal combinatorics to this end. For triangles, we show that any set of $m$ edges in an $r$-uniform hypergraph on $n$ vertices that contains $t$ triangles must satisfy $m = \Omega(t^{2/3}/n^{r - 2})$. This result generalizes the result of Rivlin~\cite{Rivin2002Counting} relating the minimum number of edges needed to produce a given number of triangles in a graph.
	\item Finally, we apply the information theoretic framework of Izumi and Le Gall~\cite{Izumi2017Triangle} to infer the lower bounds of Theorem~\ref{thm:clique-lb} from the combinatorial bounds above.
\end{enumerate}

Consider the random $r$-uniform graph model $H(n, r, p)$, where each set in $\binom{[n]}{r}$ is included as an edge in $G$ independently with probability $p$. The following lemmas give bounds on the number of triangles contained in $H(n, r, 1/2)$.

\begin{lemma}\label{lem:expected-simplices}
	Suppose $H \sim H(n, r, 1/2)$, and let $S$ be the random variable for the number of $r$-dimensional simplices in $H$. Then
	\begin{equation}\label{eqn:esimplex-lb}
		\E(S) = \frac{1}{2^{r+1}} \binom{n}{r+1} .
	\end{equation}
	Thus, for constant $r$, we have $\E(S) = \Omega(n^{r + 1})$. Further, we have
	\begin{equation}\label{eqn:simplex-ub}
		S \leq \binom{n}{r+1} = \Theta(n^{r+1}).
	\end{equation}
\end{lemma}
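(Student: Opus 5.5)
The plan is to use linearity of expectation over indicator variables, one for each candidate simplex. For each $(r+1)$-subset $\Sigma \in \binom{[n]}{r+1}$, let $X_\Sigma$ be the indicator that $\Sigma$ is an $r$-dimensional simplex in $H$. Then $S = \sum_{\Sigma} X_\Sigma$. By Definition~\ref{dfn:simplex}, $X_\Sigma = 1$ exactly when all $r$-subsets of $\Sigma$ are edges of $H$.

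First, I count the relevant edges. A set of size $r+1$ has exactly $\binom{r+1}{r} = r+1$ subsets of size $r$, and these are distinct elements of $\binom{[n]}{r}$. In $H(n, r, 1/2)$ each such subset is present independently with probability $1/2$. Hence $\Pr[X_\Sigma = 1] = 2^{-(r+1)}$.

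Second, summing over all $\binom{n}{r+1}$ choices of $\Sigma$ gives $\E(S) = 2^{-(r+1)}\binom{n}{r+1}$, which is (\ref{eqn:esimplex-lb}). No independence between different $\Sigma$ is needed, since this is only linearity of expectation. For constant $r$, we have $\binom{n}{r+1} \geq (n/(r+1))^{r+1}$, so $2^{-(r+1)}\binom{n}{r+1} = \Omega(n^{r+1})$.

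Third, the upper bound (\ref{eqn:simplex-ub}) holds deterministically: each simplex is an $(r+1)$-subset of $[n]$, so $S \leq \binom{n}{r+1}$. For constant $r$, this is $\Theta(n^{r+1})$ by the standard estimates $(n/k)^k \le \binom{n}{k} \le n^k$.

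There is no real obstacle here. The only point to check is that the $r+1$ faces of $\Sigma$ are distinct candidate edges. This holds because distinct $r$-subsets of $\Sigma$ are different sets, and it is what makes the independence in the first step valid.
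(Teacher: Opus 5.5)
Your proposal is correct and follows the same route as the paper: each $(r+1)$-subset is a simplex with probability $2^{-(r+1)}$ because its $r+1$ distinct $r$-faces are independent, linearity of expectation gives $\E(S)$, and the upper bound is the deterministic count of $(r+1)$-subsets. Your explicit check that the faces are distinct candidate edges, and the standard binomial estimates you give for the asymptotics, are details the paper leaves implicit.
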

\begin{proof}
	For any $(r+1)$-subset $\Sigma \in \binom{[n]}{r+1}$, $\Sigma$ forms a simplex in $H$ if and only if all of the sets $e \in \binom{\Sigma}{r}$ are included as edges in $H$. Since there are $r+1$ such sets, this occurs with probability $2^{-r-1}$. Equation~(\ref{eqn:esimplex-lb}) then follows from linearity of expectation. The upper bound of~(\ref{eqn:simplex-ub}) simply bounds the number of simplices by the number of distinct $(r+1)$-subsets of $[n]$.
\end{proof}

\begin{lemma}\label{lem:expected-triangles}
	Suppose $H \sim H(n, r, 1/2)$, and let $T$ be the random variable for the number of triangles contained in $H$. Then
	\begin{equation}\label{eqn:etriangle-lb}
		\E(T) \geq \frac{1}{8} \binom{n}{3} \binom{n-3}{r-2} \binom{n-r-1}{r-2} \binom{n-2r+1}{r-2}.
	\end{equation}
	Thus, for constant $r$, we have $\E(T) = \Omega(n^{3r - 3})$. Further, we have
	\begin{equation}\label{eqn:triangle-ub}
		T \leq \binom{n}{3} \binom{n-3}{r-2}^3 < n^{3r - 3}.
	\end{equation}
\end{lemma}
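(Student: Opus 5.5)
Both bounds come from counting triangle configurations and applying linearity of expectation. First I fix a vertex triple $\set{v_0,v_1,v_2} \in \binom{[n]}{3}$. A triangle on these vertices consists of three edges: $e_0 \ni v_0,v_1$, $e_1 \ni v_1,v_2$, and $e_2 \ni v_2,v_0$. Since $H$ is $r$-uniform, each $e_j$ is determined by its two prescribed vertices together with a choice of $r-2$ further vertices from the remaining $n-2$.

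For the lower bound~(\ref{eqn:etriangle-lb}), I will not count all triangles. Instead I restrict to a convenient family of candidate triples $(e_0,e_1,e_2)$:
\begin{itemize}
	\item the $r-2$ extra vertices of $e_0$ are chosen from the $n-3$ vertices outside $\set{v_0,v_1,v_2}$;
	\item the $r-2$ extra vertices of $e_1$ are chosen outside $\set{v_0,v_1,v_2}$ and also disjoint from the extras of $e_0$, leaving $n-3-(r-2) = n-r-1$ options;
	\item the extras of $e_2$ avoid all of these, leaving $n-3-2(r-2) = n-2r+1$ options.
\end{itemize}
Every candidate in this family has pairwise distinct edges. Indeed, $e_0$ contains $v_0$ and $v_1$ but not $v_2$, and the other two edges are handled symmetrically, so each candidate is a simple triangle. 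It is in fact an induced triangle, since $v_2 \notin e_0$ and so on. This family therefore witnesses triangles of whichever variant is being enumerated.

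The three edges in a candidate are distinct $r$-sets, so they are present in $H$ independently, and the candidate is present with probability exactly $1/8$. Different candidates, and different vertex triples, give different triangles. I will count each triangle once per unordered triple, fixing the labeling of the triple's vertices when forming the candidates. Summing over the $\binom{n}{3}$ vertex triples and applying linearity of expectation gives~(\ref{eqn:etriangle-lb}). For constant $r$, each binomial $\binom{n-O(1)}{r-2}$ is $\Omega(n^{r-2})$, so $\E(T) = \Omega(n^{3}\cdot n^{3(r-2)}) = \Omega(n^{3r-3})$.

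For the upper bound~(\ref{eqn:triangle-ub}), I count all possible configurations deterministically. A triangle is determined by:
\begin{itemize}
	\item its vertex set, with at most $\binom{n}{3}$ choices;
	\item the three edges, each of which contains a prescribed pair and one further vertex of the triple that is excluded (for induced triangles). This leaves at most $\binom{n-3}{r-2}$ choices per edge.
\end{itemize}
I also need to check that the labeling convention assigns a triangle to only one configuration. The finishing step is $\binom{n}{3}\binom{n-3}{r-2}^3 < \frac{n^3}{6}\, n^{3(r-2)} < n^{3r-3}$, using $\binom{a}{b}\le a^b$.

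The main obstacle is the bookkeeping around the triangle definition. For non-induced triangles an edge such as $e_0$ may also contain $v_2$, so the per-edge count becomes $\binom{n-2}{r-2}$ rather than $\binom{n-3}{r-2}$. Since the triangle is a sequence, rotations and reflections of it must also be identified. I would handle this by stating that triangles are counted as unordered configurations, that is, an unordered vertex triple together with an assignment of an edge to each pair. For the non-induced variant I would note that the bound $\binom{n}{3}\binom{n-2}{r-2}^3 = O(n^{3r-3})$ still holds, which is all that later arguments use.
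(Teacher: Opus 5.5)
Your proposal is correct and matches the paper's proof essentially step for step. Both arguments fix the vertex triple and restrict to edges whose $r-2$ extra vertices avoid the triple and one another, so that each of the $\binom{n-3}{r-2}\binom{n-r-1}{r-2}\binom{n-2r+1}{r-2}$ candidates is present with probability exactly $1/8$, then apply linearity of expectation and bound $T$ by counting configurations. Your caveat is also well founded: the paper's claim that each triangle edge contains exactly two of the three vertices holds only for induced triangles counted as unordered configurations. Still, $\binom{n}{3}\binom{n-2}{r-2}^3 \le n^{3r-3}/6$, so the strict bound $T < n^{3r-3}$ used in Corollary~\ref{cor:expected-triangles} survives for simple non-induced triangles, even when rotations and reflections are counted separately.
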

\begin{proof}
	Fix any three vertices $u, v, w \in V$ with $u < v < w$, so that there are $\binom{n}{3}$ such choices of $(u, v, w)$. To form a triangle with these three vertices, there must be distinct edges containing each pair of vertices, but not containing all three vertices. There are $\binom{n-3}{r-2}$ sets $e_{uv}$ in $\binom{[n]}{r}$ containing $u$ and $v$ but not $w$. Similarly there are $\binom{n-r-1}{r-2}$ remaining $r$-sets $e_{vw}$ containing $v$ and $w$ but not containing $u$ or elements from $e_{uv}$, and $\binom{n-2r+1}{r-2}$ remaining $r$ sets $e_{wu}$ containing $u$ and $w$ but not $v$ or elements from $e_{uv}$ or $e_{vw}$. Thus, $\tau = u e_{uv} v e_{vw} w e_{wu} u$ forms a triangle in $G$ if and only if the three edges are chosen. Since each of these edges is chosen independently with probability $1/2$, $\tau$ forms a triangle in $H$ with probability $1/8$. Summing over all all triples $(u, v, w)$ and possible edges $e_{uv}, e_{vw}, e_{wu}$ and applying linearity of expectation, we obtain Equation~\ref{eqn:etriangle-lb}).

	For Equation~(\ref{eqn:triangle-ub}), observe that each possible triangle consists of three distinct vertices and three edges, each containing precisely two of the three vertices. There are $\binom{n}{3}$ such sets of three vertices and $\binom{n-3}{r-2}$ possible edges containing exactly two vertices. Thus there are at most $\binom{n}{3} \binom{n-3}{r-2}^3$ configurations corresponding to triangles.
\end{proof}

\begin{corollary}\label{cor:expected-triangles}
	If $H \sim H(n, r, 1/2)$, then $H$ contains $\Omega(n^{r+1})$ $r$-dimensional simplices and $\Omega(n^{3r - 3})$ triangles with constant probability.
\end{corollary}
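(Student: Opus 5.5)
We will derive Corollary~\ref{cor:expected-triangles} from Lemmas~\ref{lem:expected-simplices} and~\ref{lem:expected-triangles} with a reverse Markov argument. It uses only that the counts are nonnegative, bounded above deterministically, and have expectation of the same order as that bound. No second-moment computation is needed.

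First, the simplex count. Let $S$ be the number of $r$-dimensional simplices and set $N = \binom{n}{r+1}$. By Lemma~\ref{lem:expected-simplices}, $\E(S) = 2^{-(r+1)} N$ and $0 \le S \le N$ always. Put $c = 2^{-(r+2)}$, so that $cN = \E(S)/2$. Splitting the expectation according to whether $S \ge cN$ gives
\begin{equation*}
	\E(S) \le cN + N \cdot \Pr[S \ge cN].
\end{equation*}
Therefore $\Pr[S \ge cN] \ge 2^{-(r+1)} - c = 2^{-(r+2)}$. For constant $r$ this probability is a positive constant, and on that event $S \ge cN = \Omega(n^{r+1})$.

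Next, the triangle count, by the same argument. Let $T$ be the number of triangles and set $M = \binom{n}{3}\binom{n-3}{r-2}^3$, which is the deterministic upper bound from~(\ref{eqn:triangle-ub}). By~(\ref{eqn:etriangle-lb}), $\E(T) \ge \frac18 \binom{n}{3}\binom{n-3}{r-2}\binom{n-r-1}{r-2}\binom{n-2r+1}{r-2}$. For constant $r$ and $n \ge 4r$, each ratio $\binom{n-r-1}{r-2}/\binom{n-3}{r-2}$ and $\binom{n-2r+1}{r-2}/\binom{n-3}{r-2}$ is bounded below by a constant depending only on $r$, for example $4^{-(r-2)}$. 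Hence $\E(T) \ge \delta M$ for a constant $\delta = \delta(r) > 0$. Applying the same inequality with threshold $\delta M/2$ gives
\begin{equation*}
	\Pr[T \ge \delta M/2] \ge \delta/2.
\end{equation*}
Moreover, $M = \Theta(n^{3}\cdot n^{3(r-2)}) = \Theta(n^{3r-3})$.

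The only step needing care is the ratio comparison of binomial coefficients. It is routine for fixed $r$ and sufficiently large $n$, and small $n$ is absorbed into the asymptotic notation. If both events are wanted simultaneously, we can choose the thresholds so that each event fails with probability at most $1 - 2^{-(r+2)}$ and $1-\delta/2$ respectively. However, the corollary as used downstream needs each count separately with constant probability, so no union bound is required.
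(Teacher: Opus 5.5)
Your proposal is correct and follows essentially the paper's argument, a reverse-Markov bound that pairs the deterministic upper bound on each count with an expectation of the same order; you also make the binomial-ratio comparison explicit and state the bound for the good event, whereas the paper defines $p_\alpha$ as the probability of the bad event $T<\alpha n^{3r-3}$ even though its derived inequality $p_\alpha \geq (c-\alpha)/(1-\alpha)$ actually holds for $1-p_\alpha$. The one misstep is your aside on the simultaneous event: failure probabilities of $1-2^{-(r+2)}$ and $1-\delta/2$ sum to more than $1$, so the union bound gives nothing and a joint statement would need something extra such as concentration or a correlation inequality, but as you note, neither the paper's proof nor its downstream uses need the joint event.
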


\begin{proof}
	By Lemma~\ref{lem:expected-triangles}, there exists an absolute constant $c$ such that $\E(T) \geq c n^{3r - 3}$. On the other hand, $T < n^{3r - 3}$ by Equation~(\ref{eqn:triangle-ub}). For $\alpha$ satisfying $0 < \alpha < 1$, let $A$ be the event that $T < \alpha n^{3r - 3}$, and take $p_\alpha = \Pr(A)$. Then we have
	\[
		c n^{3r - 3} \leq \E[T] \leq \alpha n^{3r - 3} p_\alpha + (1 - p_\alpha) n^{3r - 3}.
	\]
	hence
	\[
		p_\alpha \geq \frac{c - \alpha}{1 - \alpha}.
	\]
	Taking $\alpha = c/2$ gives, $p_\alpha = \Omega(1)$ as required. A similar argument proves the bound for the case of simplices.
\end{proof}

Applying the pigeonhole principle to Corollary~\ref{cor:expected-triangles}, we immediately obtain the following.

\begin{corollary}\label{cor:triangle-cover}
	Suppose $H \sim H(n, r, 1/2)$ and let $\calS$ and $\calT$ denote the sets of $r$-dimensional simplices and triangles (respectively) in $H$. Suppose $\calS_1, \calS_2, \ldots, \calS_n$ satisfy $\calS = \bigcup_{i=1}^n \calS_i$, and similarly  $\calT_1, \calT_2, \ldots, \calT_n$ satisfy $\calT = \bigcup_{i = 1}^n \calT_i$. Then with constant probability $\max_i \abs{S_i} = \Omega(n^r)$ and $\max_i \abs{T_i} = \Omega(n^{3r - 4})$.
\end{corollary}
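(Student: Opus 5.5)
The plan is to combine Corollary~\ref{cor:expected-triangles} with a simple averaging (pigeonhole) argument, conditioning on the good event from that corollary. Let $A_\calS$ be the event that $\abs{\calS} \geq c_1 n^{r+1}$, and $A_\calT$ the event that $\abs{\calT} \geq c_2 n^{3r-3}$. Here $c_1, c_2 > 0$ are the constants implicit in Corollary~\ref{cor:expected-triangles}, so each event has probability at least some constant $p_0 > 0$.

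On $A_\calS$ we use $\calS = \bigcup_{i=1}^n \calS_i$, which gives $\sum_i \abs{\calS_i} \geq \abs{\calS} \geq c_1 n^{r+1}$. This holds whether or not the $\calS_i$ are disjoint, since a union has size at most the sum of the sizes. Some index therefore has $\abs{\calS_i} \geq c_1 n^{r+1}/n = c_1 n^r$, so $\max_i \abs{\calS_i} = \Omega(n^r)$. The same computation on $A_\calT$ gives $\max_i \abs{\calT_i} \geq c_2 n^{3r-3}/n = c_2 n^{3r-4}$.

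The only point needing care is reading ``with constant probability'' correctly. If the two conclusions are meant separately, each already holds with probability at least $p_0$ and we are done. If they are meant jointly, we need $\Pr(A_\calS \cap A_\calT) = \Omega(1)$. To get this, I would redo the proof of Corollary~\ref{cor:expected-triangles} with the threshold $\alpha$ chosen small. That proof gives $\Pr(A_\calS^c) \leq (1-c_1)/(1-\alpha)$, where $c_1 = 2^{-(r+1)}/(r+1)!$ up to lower-order terms, and similarly for $A_\calT$. The problem is that these bounds on the failure probabilities are not small enough for a plain union bound: $(1-c_1)+(1-c_2)$ can exceed $1$.

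I therefore expect this step to be the main obstacle, and my first attempt would be a concentration argument to make the union bound work. Both $\abs{\calS}$ and $\abs{\calT}$ are functions of independent edge indicators. For simplices, the expected count is $\Theta(n^{r+1})$, while changing one edge changes the count by at most $n$. The variance is then of order $\binom{n}{r} n^2$, which is $o(\E[S]^2)$, so Chebyshev shows $\abs{\calS} \geq \E[S]/2$ with probability $1-o(1)$. The analogous second-moment computation for triangles also gives variance $o(\E[T]^2)$: dependent pairs of configurations share an edge, which costs at least a factor of $n^{r}/n^{r-2}$ relative to independent pairs. Both events then hold with probability $1-o(1)$, hence jointly with constant probability, and the pigeonhole step above finishes the proof.
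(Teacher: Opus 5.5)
Your proposal is correct and is essentially the paper's argument. The paper obtains the corollary by applying the pigeonhole principle to Corollary~\ref{cor:expected-triangles}, exactly as in your second paragraph ($\max_i \abs{\calS_i} \geq \abs{\calS}/n$ and $\max_i \abs{\calT_i} \geq \abs{\calT}/n$ on the respective good events). The paper implicitly reads the two conclusions separately (only one is used at a time in Theorem~\ref{thm:clique-lb}), so your second-moment step is not needed there, though it is sound and upgrades both events to probability $1-o(1)$. If you only want the joint reading, a shorter route exists: both counts are nondecreasing in the edge indicators, so Harris's inequality already gives $\Pr(A_{\calS} \cap A_{\calT}) \geq p_0^2$.
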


By Corollary~\ref{cor:triangle-cover}, an algorithm that solves the simplex or triangle listing problem (in the congested clique) has the property that if the input $H$ is sampled from $H(r, n, 1/2)$, then with constant probability, some vertex $v$ outputs $\Omega(n^r)$ simplices or $\Omega(n^{3r - 4})$ triangles. Towards proving Theorem~\ref{thm:clique-lb}, our next step is to argue that such a set of simplices or triangles must be supported by a large number of edges. That is, some vertex $v$ must witness many edges in $H$. To this end, we require two separate results for simplices and triangles. In the case of simplices, the well-known Kruskal-Katona theorem from extremal combinatorics suffices. We state the version of the theorem proven in Jukna's monograph~\cite{Jukna2011Extremal} here.

\begin{theorem}[Kruskal-Katona, c.f.~{\cite[Theorem 10.15]{Jukna2011Extremal}}]\label{thm:kruskal-katona}
	Suppose $H = (V, E)$ is an $r$-uniform hypergraph and let $\calS$ denote the set of simplices in $H$. If
	\[
		\abs{\calS} \geq \binom{x}{r+1} = \frac{1}{(r+1)!} x (x-1)\cdots(k-r)
	\]
	for some positive real number $x \geq r+1$, then
	\begin{equation}\label{eqn:kruzkal-katona}
		\abs{E} \geq \binom{x}{r}.
	\end{equation}
\end{theorem}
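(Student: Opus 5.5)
The statement is the Lov\'asz form of the Kruskal--Katona theorem, applied to the family of simplices. Write $k = r+1$ and let $\mathcal{A} = \calS \subseteq \binom{V}{k}$. The \emph{shadow} of $\mathcal{A}$ is
\[
\partial\mathcal{A} = \set{f \in \binom{V}{k-1} \sucht f \subseteq A \text{ for some } A \in \mathcal{A}}.
\]
By Definition~\ref{dfn:simplex}, every $r$-subset of a simplex is an edge, so $\partial\calS \subseteq E$. It therefore suffices to prove the purely extremal statement: for every $k \geq 1$ and real $x \geq k$, if $\mathcal{A} \subseteq \binom{V}{k}$ satisfies $\abs{\mathcal{A}} \geq \binom{x}{k}$, then $\abs{\partial\mathcal{A}} \geq \binom{x}{k-1}$.

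I would prove this by Frankl's shifting argument, by induction on $k$ and, for fixed $k$, on $\abs{\mathcal{A}}$.

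\textbf{Step 1: reduce to shifted families.} Identify $V = [n]$. For $i<j$, the $(i,j)$-shift replaces each $A \ni j$ with $i \notin A$ by $A \setminus \set{j} \cup \set{i}$, unless that set is already in $\mathcal{A}$. This preserves $\abs{\mathcal{A}}$ and does not increase $\abs{\partial\mathcal{A}}$; the latter is the standard check that $\partial(S_{ij}\mathcal{A}) \subseteq S_{ij}(\partial\mathcal{A})$. Iterating until nothing changes, we may assume $\mathcal{A}$ is shifted.

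\textbf{Step 2: split on the element $1$.} Let $\mathcal{A}_0 = \set{A \in \mathcal{A} \sucht 1 \notin A}$ and $\mathcal{A}_1 = \set{A \setminus \set{1} \sucht 1 \in A \in \mathcal{A}}$, a $(k-1)$-uniform family. Shiftedness gives $\partial\mathcal{A}_0 \subseteq \mathcal{A}_1$: if $B \subseteq A \in \mathcal{A}_0$ with $\abs{B} = k-1$, then $B \cup \set{1} \in \mathcal{A}$.

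Moreover, $\partial\mathcal{A}$ contains the two disjoint families $\mathcal{A}_1$ (sets avoiding $1$) and $\set{C \cup \set{1} \sucht C \in \partial\mathcal{A}_1}$ (sets containing $1$). Hence
\[
\abs{\partial\mathcal{A}} \geq \abs{\mathcal{A}_1} + \abs{\partial\mathcal{A}_1}.
\]

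\textbf{Step 3: the key claim $\abs{\mathcal{A}_1} \geq \binom{x-1}{k-1}$.} Suppose not. Then $\abs{\mathcal{A}_0} = \abs{\mathcal{A}} - \abs{\mathcal{A}_1} > \binom{x}{k} - \binom{x-1}{k-1} = \binom{x-1}{k}$. Since $\abs{\mathcal{A}_0} < \abs{\mathcal{A}}$, the induction hypothesis for $k$ applies to $\mathcal{A}_0$, giving
\[
\abs{\mathcal{A}_1} \geq \abs{\partial\mathcal{A}_0} \geq \binom{x-1}{k-1},
\]
a contradiction. This step needs $x-1 \geq k$. In the boundary range $k \leq x < k+1$ we have $\binom{x-1}{k} \leq 0$, so the claim that $\abs{\mathcal{A}_1}$ is large must come directly. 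There $\binom{x}{k} \leq k+1$, and shiftedness together with $\abs{\mathcal{A}} \geq 1$ forces $[k] \in \mathcal{A}$, which handles this case by hand.

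\textbf{Step 4: conclude.} Apply induction on $k$ to $\mathcal{A}_1$ with parameter $x-1 \geq k-1$, which gives $\abs{\partial\mathcal{A}_1} \geq \binom{x-1}{k-2}$. Combining with Step 2 and Pascal's identity for real $x$,
\[
\abs{\partial\mathcal{A}} \geq \binom{x-1}{k-1} + \binom{x-1}{k-2} = \binom{x}{k-1}.
\]
The base cases are $k = 1$, where the shadow is $\set{\emptyset}$, and $\abs{\mathcal{A}}$ small.

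I expect the main obstacle to be the bookkeeping rather than the idea: verifying that shifting does not enlarge the shadow, and handling the real-valued boundary cases where $x-1 < k$ or the binomials become fractional. Since the paper cites Jukna's monograph for this theorem, these can be deferred there.
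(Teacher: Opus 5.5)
Your argument is Frankl's shifting proof of Lov\'asz's real-parameter form of Kruskal--Katona. This is the standard proof of the result the paper takes from Jukna; the paper itself gives no proof. Your paper-specific reduction $\partial\calS \subseteq E$ is correct. Steps 1, 2, 4 and the case $x \ge k+1$ of Step 3 go through. For $\abs{\mathcal{A}_0} < \abs{\mathcal{A}}$ you should add that a nonempty shifted family always has a member containing $1$.

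The boundary case $k \le x < k+1$ of Step 3 is wrong as written, though it is easily repaired.
\begin{itemize}
\item There $\binom{x-1}{k} = \frac{1}{k!}\prod_{i=1}^{k}(x-i)$ lies in $[0,1)$, not in $(-\infty,0]$.
\item Knowing $[k] \in \mathcal{A}$ only gives $\abs{\mathcal{A}_1} \ge 1$. The bound you need, $\binom{x-1}{k-1}$, approaches $k$ as $x \to k+1$. For instance, $k=2$, $x=2.9$ already requires $\abs{\mathcal{A}_1} \ge 2$.
\end{itemize}
The fix is to replace the induction hypothesis on $\mathcal{A}_0$ by the trivial shadow bound for a single set. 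Suppose $\abs{\mathcal{A}_1} < \binom{x-1}{k-1}$. Then $\abs{\mathcal{A}_0} > \binom{x-1}{k} \ge 0$, so some $A \in \mathcal{A}_0$ exists. Your containment $\partial\mathcal{A}_0 \subseteq \mathcal{A}_1$ then gives $\abs{\mathcal{A}_1} \ge \abs{\partial\set{A}} = k > \binom{x-1}{k-1}$, a contradiction. With this patch the proof is complete.
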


Applying this result in our setting, we obtain the following corollary showing that the number edges contained in the support of $\Theta(n^r)$ simplices is large.

\begin{corollary}\label{cor:simplex-edge-bound}
	Let $H$ be an $r$-uniform (sub)hypergraph with $n$ vertices and $m$ edges containing $S = \Omega(n^{r})$ simplices. Then $m = \Omega(n^{r-1+1/(r+1)})$.
\end{corollary}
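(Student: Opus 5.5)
We derive the corollary directly from the Kruskal--Katona theorem (Theorem~\ref{thm:kruskal-katona}). Write $\abs{\calS} = S \geq c n^r$ for some constant $c > 0$. Define $x$ as the largest real number $x \geq r+1$ satisfying $\binom{x}{r+1} \leq S$. For large $n$ this $x$ exists, because $\binom{r+1}{r+1} = 1 \leq S$. With $x$ chosen this way, Theorem~\ref{thm:kruskal-katona} gives $m = \abs{E} \geq \binom{x}{r}$. The proof then reduces to bounding $x$ from below in terms of $n$, and $\binom{x}{r}$ from below in terms of $x$.

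First we bound $x$. For $y \geq r+1$ the generalized binomial coefficient satisfies $\binom{y}{r+1} = \frac{1}{(r+1)!}\, y(y-1)\cdots(y-r) \leq \frac{y^{r+1}}{(r+1)!}$. Set $x_0 = \paren{(r+1)!\, c\, n^r}^{1/(r+1)}$. Then $\binom{x_0}{r+1} \leq c n^r \leq S$. Since $\binom{y}{r+1}$ is increasing for $y \geq r$, we can take $x \geq x_0$ (if $x_0 < r+1$, then $x_0 = O(1)$ and the bound is trivial for large $n$, so we may ignore this case). Hence $x = \Omega(n^{r/(r+1)})$ for constant $r$.

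Next we bound $\binom{x}{r}$. For $x \geq 2r$ we have $x - j \geq x/2$ for every $0 \leq j < r$, so $\binom{x}{r} \geq \frac{(x/2)^r}{r!} = \Omega(x^r)$. Combining the two steps,
\[
	m \geq \binom{x}{r} = \Omega\paren{n^{r^2/(r+1)}} = \Omega\paren{n^{r-1+1/(r+1)}},
\]
where the last equality uses $r^2/(r+1) = r - 1 + 1/(r+1)$.

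The only points needing care are bookkeeping. The first is that Kruskal--Katona applies for every real $x \geq r+1$ with $\binom{x}{r+1} \leq \abs{\calS}$, not just the extremal one, so choosing $x_0$ directly is legitimate and we never need to solve for $x$ exactly. The second is that the edges in question are exactly those of the sub-hypergraph formed by the union of the simplices' $r$-subsets. This is the form used later, applied to the simplices output by a single vertex. All hidden constants depend only on $r$, which is fixed.
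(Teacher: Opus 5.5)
Your proof is correct and follows the paper's own argument. You apply the Kruskal--Katona theorem, lower-bound $x$ via $\binom{x}{r+1} \le x^{r+1}/(r+1)!$ to get $x = \Omega(n^{r/(r+1)})$, and then bound $\binom{x}{r} = \Omega(x^r)$; the differences are cosmetic (the paper picks $x$ with $\binom{x}{r+1} = \abs{\calS}$ exactly and uses $\binom{x}{r} \ge (x/r)^r$, whereas you take the largest $x$ with $\binom{x}{r+1} \le S$ and use $\binom{x}{r} \ge (x/2)^r/r!$ for $x \ge 2r$).
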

\begin{proof}
	Suppose $x$ is a real number satisfying $\abs{\calS} = \binom{x}{r+1}$. We can find such an $x$ because $\binom{x}{r+1}$ is a continuous unbounded function of $x$. Observe that
	\begin{equation}
		\binom{x}{r+1} < \frac{x^{r+1}}{(r+1)!}.
	\end{equation}
	Since $\binom{x}{r+1} = \abs{\calS} = \Omega(n^r)$, applying the previous expression gives
	\begin{equation*}
		\frac{x^{r+1}}{(r+1)!} = \Omega(n^r), \quad\text{hence}\quad x = \Omega(n^{r/(r+1)}).
	\end{equation*}
	We then bound the following:
	\begin{align*}
		\abs{E} & \geq \binom{x}{r}
		\geq \paren{\frac{x}{r}}^r
		= \Omega\paren{\frac{1}{r^r} n^{r^2/(r+1)}}
		= \Omega(n^{r-1+1/(r+1)}),
	\end{align*}
	where the final equality holds for constant $r$.
\end{proof}

For the case of triangles, we prove the following combinatorial lower bound on the number of edges in the support of a large number of triangles.

\begin{theorem}\label{thm:triangle-edge-bound}
	Suppose $H = (V, E)$ is an $r$-uniform hypergraph that contains $n$ vertices, $m$ edges, and $t$ triangles. Then
	\begin{equation}\label{eqn:triangle-edge-bound}
		t \leq \frac{\sqrt{2}}{3} \paren{m \binom{n-3}{r-2}\binom{r}{2}}^{3/2} .
	\end{equation}
	Solving for $m$, this gives $m = \Omega(t^{2/3}/r^2 n^{r-2})$.
\end{theorem}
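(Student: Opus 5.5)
Following the technical overview, the plan is to reduce the hypergraph statement to a statement about a multigraph and then run Rivin's spectral argument.

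\textbf{Step 1: the multigraph.} Let $G_{\set{H}}$ be the multigraph on $V$ that has one copy of the pair $\set{u,v}$ for each edge $e \in E$ with $u,v \in e$. Every triangle $(v_0,e_0,v_1,e_1,v_2,e_2,v_0)$ of $H$ determines three distinct pairs $\set{v_0,v_1},\set{v_1,v_2},\set{v_2,v_0}$, each labelled by its own hyperedge. This is a closed walk of length three in $G_{\set{H}}$ through three distinct vertices, using the parallel copies indexed by $e_0,e_1,e_2$, so the map is injective.

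\textbf{Step 2: reduce to a trace bound.} Let $A$ be the weighted adjacency matrix, with $A_{uv} = \deg(u,v)$, the number of copies of $\set{u,v}$. By Step 1,
\[
6t \le \tr(A^3),
\]
where the factor $6$ counts starting points and orientations. I would double-check this factor against the ordered-sequence definition of a triangle, since it only affects constants. The number of edges of $G_{\set{H}}$ is
\[
M = \sum_{u<v}A_{uv} = m\binom{r}{2}.
\]
Every entry satisfies $A_{uv} \le \binom{n-2}{r-2}$. If we only count edges relevant to triangles, namely those containing exactly two of the three vertices, we may instead use the bound $\binom{n-3}{r-2}$, as in Lemma~\ref{lem:expected-triangles}. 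Concretely, restrict to the submatrix $A'$ counting only hyperedges that avoid the third vertex. A cleaner route is to set $\mu = \binom{n-3}{r-2}$ and aim for $\tr(A^3) \le \sqrt{\mu}\,(2M)^{3/2}$, up to the constant.

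\textbf{Step 3: the spectral bound.} $A$ is real symmetric with eigenvalues $\lambda_i$, and
\[
\tr(A^3) = \sum_i \lambda_i^3 \le \max_i\abs{\lambda_i}\sum_i\lambda_i^2 .
\]
Now
\[
\sum_i\lambda_i^2 = \tr(A^2) = \sum_{u,v}A_{uv}^2 \le \mu\sum_{u,v}A_{uv} = 2\mu M,
\]
and therefore
\[
\max_i\abs{\lambda_i} \le \sqrt{\tr(A^2)} \le \sqrt{2\mu M}.
\]
Combining gives $\tr(A^3) \le (2\mu M)^{3/2}$, hence $(6t)^2 \le (2\mu M)^3$, which is the inequality stated in the overview. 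Substituting $M = m\binom{r}{2}$,
\[
t \le \frac{2^{3/2}}{6}\paren{m\mu\binom{r}{2}}^{3/2} = \frac{\sqrt2}{3}\paren{m\binom{n-3}{r-2}\binom{r}{2}}^{3/2}.
\]
This matches~(\ref{eqn:triangle-edge-bound}) once $\mu = \binom{n-3}{r-2}$ is justified. Solving for $m$ then gives
\[
m = \Omega\paren{\frac{t^{2/3}}{r^2 n^{r-2}}}.
\]

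\textbf{Main obstacle.} The delicate step is getting multiplicity $\binom{n-3}{r-2}$ rather than $\binom{n-2}{r-2}$. The overview states only $\mu \le \binom{n-2}{r-2}$, and this gives the same asymptotic conclusion. If the sharper constant resists, I would prove the bound with $\binom{n-2}{r-2}$ and note that it is asymptotically equivalent. That weaker version suffices for Theorem~\ref{thm:clique-lb}.
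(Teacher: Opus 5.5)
Your argument follows the paper's own route. You pass to the induced multigraph $G_{\set{H}}$ with $M=m\binom{r}{2}$ edges, bound $\tr(A^2)\le 2\mu M$ using the entrywise bound $A_{uv}\le\mu$, compare $\tr(A^3)$ with $6t$, and apply Rivin's inequality $\tr(A^3)^2\le\tr(A^2)^3$. You reprove that inequality directly from $\sum_i\lambda_i^3\le\max_i\abs{\lambda_i}\sum_i\lambda_i^2$. Your $6t\le\tr(A^3)$ is more accurate than the paper's claimed equality. Closed $3$-walks in $G_{\set{H}}$ that reuse a hyperedge containing all three vertices are not edge-simple triangles of $H$.

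On your ``main obstacle'': stop trying to obtain $\binom{n-3}{r-2}$. With that factor the inequality is false for triangles as defined in the paper (simple, not necessarily induced).

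\emph{Counterexample.} Take the complete $4$-uniform hypergraph on $5$ vertices. It has $m=5$, and every pair lies in $3$ edges. For each of the $10$ vertex triples there are $13$ ways to pick one edge through each of its three pairs with the three edges pairwise distinct. So $t=130$, while the right-hand side is $\frac{\sqrt2}{3}(5\cdot 1\cdot 6)^{3/2}=10\sqrt{60}<78$. This is not a small-case artifact. For the complete $r$-uniform hypergraph with fixed $r\ge 4$, the ratio of $t$ to the claimed bound is $1+\frac{3(r-3)}{2n}+O(n^{-2})>1$.

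\emph{Why your repair fails.} The ``third vertex'' depends on the triangle, so there is no single matrix $A'$. Non-induced triangles may also use hyperedges that contain the third vertex. Finally, your Step~3 needs $A_{uv}\le\mu$ for every entry, and this fails for $\mu=\binom{n-3}{r-2}$, for example in complete hypergraphs.

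\emph{The paper has the same gap.} Lemma~\ref{lem:edge-multiplicity} gives only $\mu\le\binom{n-2}{r-2}$. The final ``solving for $t$'' step then silently replaces this by $\binom{n-3}{r-2}$.

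So your fallback, the inequality with $\binom{n-2}{r-2}$, is exactly what the argument (yours or the paper's) actually proves. It still yields $m=\Omega(t^{2/3}/r^2 n^{r-2})$, which is all that Corollary~\ref{cor:triangle-edge-bound} and Theorem~\ref{thm:clique-lb} use.
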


Applying Theorem~\ref{thm:triangle-edge-bound} for a set of $t = \Omega(n^{3r -4})$ triangles (as in Corollary~\ref{cor:triangle-cover}), we obtain the following lower bound on the number of edges.

\begin{corollary}\label{cor:triangle-edge-bound}
	Let $H$ be an $r$-uniform (sub)hypergraph with $n$ vertices and $m$ edges containing $t = \Omega(n^{3r-4})$ triangles. Then $m = \Omega(n^{r-2/3}/r^2)$.
\end{corollary}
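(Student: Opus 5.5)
The corollary follows by substituting $t = \Omega(n^{3r-4})$ into Theorem~\ref{thm:triangle-edge-bound} and solving for $m$. I would not go through the intermediate form $m = \Omega(t^{2/3}/r^2 n^{r-2})$ quoted in the theorem. Instead I would rearrange inequality~(\ref{eqn:triangle-edge-bound}) directly, so that the dependence on $r$ is visible:
\begin{equation*}
	m \geq \paren{\frac{3t}{\sqrt{2}}}^{2/3} \frac{1}{\binom{n-3}{r-2}\binom{r}{2}}.
\end{equation*}

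There are two steps. First, the $t$ factor: since $t \geq c\, n^{3r-4}$ for some constant $c > 0$, we get $t^{2/3} \geq c^{2/3} n^{(3r-4)\cdot 2/3} = c^{2/3} n^{2r - 8/3}$. Second, the denominator: use $\binom{n-3}{r-2} \leq n^{r-2}$ and $\binom{r}{2} \leq r^2/2$.

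Combining these gives
\begin{equation*}
	m \geq \frac{2 (3/\sqrt{2})^{2/3} c^{2/3}\, n^{2r-8/3}}{r^2 n^{r-2}} = \Omega\paren{\frac{n^{r - 2/3}}{r^2}}.
\end{equation*}
The exponent works out because $2r - 8/3 - (r-2) = r - 2/3$. This is exactly the claimed bound.

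Theorem~\ref{thm:triangle-edge-bound} is applied to the sub-hypergraph itself, with $n$ vertices and $m$ edges, which is the setting in which that theorem is stated. The only care needed is to keep the $r$-dependence explicit rather than absorbing it into the constant. That is why I use $\binom{r}{2} \leq r^2/2$ instead of treating $r$ as a constant, and why the constant hidden in $\Omega$ depends only on the constant $c$ implicit in $t = \Omega(n^{3r-4})$. There is no real obstacle here: all of the combinatorial content lies in Theorem~\ref{thm:triangle-edge-bound}.
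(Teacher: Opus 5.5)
Your proposal is correct and is essentially the paper's argument: the paper obtains the corollary by substituting $t = \Omega(n^{3r-4})$ into Theorem~\ref{thm:triangle-edge-bound} and solving for $m$. Your direct rearrangement, using $\binom{n-3}{r-2} \le n^{r-2}$ and $\binom{r}{2} \le r^2/2$, only makes the exponent arithmetic and the $1/r^2$ factor explicit; the same estimate also absorbs the $\binom{n-2}{r-2}$ multiplicity bound that the theorem's proof actually uses.
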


In order to prove Theorem~\ref{thm:triangle-edge-bound}, we generalize the techniques of Rivin~\cite{Rivin2002Counting}, who's result corresponds to Theorem~\ref{thm:triangle-edge-bound} in the case $r = 2$ (i.e., graphs). Rivin's analysis considers the the adjacency matrix $A$ of a graph $G$ and proves a relationship between the trace $\tr(A^2)$ (which is proportional to the number of edges of $G$) and $\tr(A^3)$ (which is proportional to the number of triangles in $G$).

In order to apply Rivin's technique to hypergraphs, we define a multigraph $G = G_{\set{H}}$ whose triangles are in one-to-one correspondence with the triangles of $H$, then generalize Rivin's result to multigraphs of bounded edge multiplicity.

\begin{definition}\label{dfn:induced-multigraph}
	Let $H = (V, E)$ be a hypergraph. We define the \dft{induced multigraph} $G_{\set{H}} = (V, E')$ to be the multigraph with vertex $V$ and edge set
	\[
		E' = \mset{\set{u, v} \sucht \exists e \in E, u, v \in e}.
	\]
	That is, $E'$ contains the multiset of pairs of vertices contained in all edges $E$ in $H$.
\end{definition}

\begin{lemma}\label{lem:edge-multiplicity}
	Suppose $H = (V, E)$ is an $r$-uniform hypergraph with $m$ edges. Then $G_{\set{H}}$ contains $m \binom{r}{2}$ edges, and each edge occurs with multiplicity at most $\binom{n-2}{r-2} = O(n^{r-2})$.
\end{lemma}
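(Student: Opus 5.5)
The lemma makes two claims, and I would prove them separately by a direct counting argument from Definition~\ref{dfn:induced-multigraph}.

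\emph{Edge count.} The multiset $E'$ receives one copy of $\set{u,v}$ for each hyperedge $e \in E$ and each unordered pair $\set{u,v} \subseteq e$. Since $H$ is $r$-uniform, each hyperedge $e$ has exactly $\binom{r}{2}$ such pairs. Moreover, a pair together with its originating hyperedge determines the contribution, so distinct (hyperedge, pair) combinations are counted separately in the multiset. Summing over the $m$ hyperedges therefore gives $\abs{E'} = m\binom{r}{2}$.

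\emph{Multiplicity.} The multiplicity of a pair $\set{u,v}$ in $E'$ equals the number of hyperedges containing both $u$ and $v$. This is exactly the pair degree $\deg(u,v)$. Each such hyperedge is an $r$-subset of $V$ containing $u$ and $v$, so it is determined by choosing its remaining $r-2$ elements from the other $n-2$ vertices. Hence $\deg(u,v) \le \binom{n-2}{r-2}$. For constant $r$ this is at most $n^{r-2}/(r-2)! = O(n^{r-2})$.

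There is no real obstacle here. The only point needing care is to make clear that $E'$ is a multiset indexed by pairs $(e,\set{u,v})$. This is the reading that makes both the count and the multiplicity statements exact, rather than referring to distinct pairs, which is what the primal graph counts.
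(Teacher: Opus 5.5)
Your proposal is correct and matches the paper's proof: each hyperedge contributes $\binom{r}{2}$ pairs to the multiset, and the multiplicity of $\set{u,v}$ is the number of $r$-subsets containing both vertices, which is at most $\binom{n-2}{r-2}$. Your remark that $E'$ must be read as a multiset with one copy of $\set{u,v}$ per hyperedge containing both vertices is the intended reading of Definition~\ref{dfn:induced-multigraph}, and it makes the paper's terse argument precise.
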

\begin{proof}
	For the first claim, observe that each edge $e \in E$ induces $\binom{\abs{e}}{2}$ edges in $G_{\set{H}}$. For the second claim, observe that a given pair $\set{u, v}$ can occur in at most $\binom{n-2}{r-2}$ edges in $E$.
\end{proof}

\begin{lemma}\label{lem:triangle-correspondence}
	There is a one-to-one correspondence between (hypergraph) triangles in $H$ and (graph) triangles in $G_{\set{H}}$.
\end{lemma}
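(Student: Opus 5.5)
The plan is to make the multiset $E'$ of Definition~\ref{dfn:induced-multigraph} concrete by labelling its elements, and then write down an explicit bijection. Each copy of a pair $\set{u,v}$ in $E'$ arises from a specific hyperedge $e \in E$ with $u, v \in e$. So I identify the edges of $G_{\set{H}}$ with pairs $(\set{u,v}, e)$ where $\set{u,v} \subseteq e \in E$. This is exactly the multiset $E'$, and it gives the counts in Lemma~\ref{lem:edge-multiplicity}. A (graph) triangle in $G_{\set{H}}$ is then a closed walk $(v_0, f_0, v_1, f_1, v_2, f_2, v_0)$ with distinct vertices and distinct multigraph edges, where $f_i$ joins $v_i$ and $v_{i+1}$ (indices mod $3$).

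Define $\Phi$ by
\[
(v_0,e_0,v_1,e_1,v_2,e_2,v_0) \mapsto \bigl(v_0,(\set{v_0,v_1},e_0),v_1,(\set{v_1,v_2},e_1),v_2,(\set{v_2,v_0},e_2),v_0\bigr).
\]
The key steps, in order, are:
\begin{enumerate}
\item \emph{Well-definedness.} Since $v_i, v_{i+1} \in e_i$, each $(\set{v_i,v_{i+1}},e_i)$ is an element of $E'$, so the image is a closed $3$-walk in $G_{\set{H}}$. Its vertices are distinct because the original triangle is vertex simple.
\item \emph{Injectivity.} The hyperedge $e_i$ is read off as the label of the $i$-th multigraph edge, and the vertices are the same on both sides.
\item \emph{Surjectivity.} Given a multigraph triangle with edges $(\set{v_i,v_{i+1}},e_i)$, the labels define a closed path $(v_0,e_0,v_1,e_1,v_2,e_2,v_0)$ in $H$, because $v_i \in e_{i-1} \cap e_i$.
\item \emph{Equivalence classes.} If triangles are regarded up to rotation and reversal, $\Phi$ commutes with these operations, so it descends to a bijection of classes.
\end{enumerate}

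The main obstacle is step~3, specifically matching the simplicity conditions. Multigraph edges on different vertex pairs are automatically distinct even when they carry the same label. Consequently a multigraph triangle may come from a closed hypergraph path with, say, $e_0 = e_1$, which is not edge simple in the sense of Definition~\ref{dfn:triangle}.

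I would handle this in one of two ways.
\begin{itemize}
\item Define the relevant graph triangles in $G_{\set{H}}$ as those whose three labels are pairwise distinct (and, for induced triangles, those with $v_i \notin e_{i+1}$). With that convention $\Phi$ is a genuine bijection, since both simplicity conditions transfer verbatim through the labels.
\item Alternatively, observe that $\Phi$ is always injective into the set of all triangles of $G_{\set{H}}$. For the intended application to Theorem~\ref{thm:triangle-edge-bound} this suffices: there we only need $t \le$ (number of triangles in $G_{\set{H}}$), which Rivin's trace argument then bounds in terms of $m\binom{r}{2}$ and the multiplicity bound of Lemma~\ref{lem:edge-multiplicity}.
\end{itemize}

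I would state the lemma with the first convention and remark that the injection alone is what is used downstream.
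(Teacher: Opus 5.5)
Your map $\Phi$ is exactly the paper's $T \mapsto T'$ (the paper's ``edge induced by $e_i$'' is your labelled copy $(\set{v_i,v_{i+1}},e_i)$), and the paper simply asserts that this map is clearly a bijection. Your caveat about surjectivity is correct and points to a genuine imprecision in the paper's statement: a single hyperedge $\set{1,2,3}$ already gives a graph triangle in $G_{\set{H}}$ but no edge-simple hypergraph triangle, so in general only the injection holds. That injection is all the proof of Theorem~\ref{thm:triangle-edge-bound} needs, since it yields $6t \leq \tr(A^3)$ in place of the equality in Lemma~\ref{lem:matrix-edge-triangle}, and Rivin's bound then goes through unchanged.
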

\begin{proof}
	For each triangle $T = (v_0, e_0, v_1, e_1, v_2, e_2)$ in $H$, consider the corresponding triangle $T' = (v_0, e_0', v_1, e_1', v_2, e_2')$ where $e_i' = \set{v_i, v_{i+1}}$ is the edge induced by $e_i$ in $G_{\set{H}}$. It is clear that $T \mapsto T'$ is a one-to-one correspondence between triangles in $H$ and $G_{\set{H}}$.
\end{proof}

By Lemmas~\ref{lem:edge-multiplicity} and~\ref{lem:triangle-correspondence}, in order to prove Theorem~\ref{thm:triangle-edge-bound}, it suffices to prove an analogous result for multigraphs. To this end, we first introduce the adjacency matrix of a multigraph.

\begin{definition}\label{dfn:multigraph-adjacency}
	Let $G = (V, E)$ be a multigraph on $n$ vertices, $V = \set{v_1, v_2,\ldots,v_n}$. Then the adjacency matrix $A = (a_{ij})$ of $G$ is the (real, symmetric) $n \times n$ matrix where $a_{ij}$ is the number of edges containing the vertices $v_i$ and $v_j$.
\end{definition}

Just as with adjacency matrices of simple graphs, powers of the multigraph adjacency matrix count paths of a fixed length between pairs of vertices. Specifically, we obtain the following interpretation:

\begin{observation}\label{obs:matrix-paths}
	Suppose $A = (a_{ij})$ is the adjacency matrix of a multigraph $G = (V, E)$ and $B = (b_{ij}) = A^k$. Then $b_{ij}$ counts the number of distinct paths of length $k$ from $v_i$ to $v_j$.
\end{observation}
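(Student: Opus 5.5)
The statement to prove is Observation~\ref{obs:matrix-paths}. I would argue by induction on $k$, using the definition of matrix multiplication. Here a ``path of length $k$ from $v_i$ to $v_j$'' in the multigraph means a walk $(v_i = u_0, f_0, u_1, f_1, \ldots, f_{k-1}, u_k = v_j)$. The $f_\ell$ are specific edges of the multiset $E$ joining $u_\ell$ and $u_{\ell+1}$, so parallel edges yield distinct paths. This edge-labelled notion is the one needed later: by Lemma~\ref{lem:triangle-correspondence}, triangles in $G_{\set{H}}$ must distinguish parallel copies coming from different hyperedges.

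For the base case $k = 1$, $A^1 = A$. By Definition~\ref{dfn:multigraph-adjacency}, $a_{ij}$ is the number of edges containing $v_i$ and $v_j$. Each such edge gives exactly one length-$1$ path $(v_i, f, v_j)$, and each length-$1$ path arises this way. (The case $k=0$, with $A^0 = I$ and the trivial path, also works if one wants to start there.)

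For the inductive step, write $A^{k+1} = A^k A$, so that $b^{(k+1)}_{ij} = \sum_{\ell=1}^n b^{(k)}_{i\ell}\, a_{\ell j}$. Partition the length-$(k+1)$ paths from $v_i$ to $v_j$ according to their penultimate vertex $u_k = v_\ell$. Each such path corresponds bijectively to a pair consisting of a length-$k$ path from $v_i$ to $v_\ell$ and an edge $f_k$ joining $v_\ell$ and $v_j$. By induction the first component has $b^{(k)}_{i\ell}$ choices and the second has $a_{\ell j}$ choices. Summing over $\ell$ gives the formula.

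The only point requiring care is a convention rather than an obstacle: the diagonal entries $a_{ii}$. Since edges of $G_{\set{H}}$ are pairs $\set{u, v}$ with $u \neq v$ drawn from a hyperedge, there are no loops and $a_{ii} = 0$, so the bijection above needs no loop handling. I would also note that these ``paths'' are walks, and vertices may repeat. Consequently, when this is later applied to $\tr(A^3)$, closed walks of length $3$ in a loopless multigraph automatically have three distinct vertices, and each triangle is counted $6$ times (three starting points and two orientations), giving $\tr(A^3) = 6t$. Similarly $\tr(A^2) = \sum_{i,j} a_{ij}^2$, which is at most $\mu \sum_{i,j} a_{ij} = 2\mu m$, where $\mu$ bounds the multiplicities as in Lemma~\ref{lem:edge-multiplicity}. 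This sets up the Rivin-style eigenvalue argument for Theorem~\ref{thm:triangle-edge-bound}.
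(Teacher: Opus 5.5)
Your proof is correct and is the standard argument: the paper gives no proof of this observation, citing it as the multigraph analogue of the usual fact, and your induction on $k$ via $A^{k+1} = A^k A$ proves it, with paths read as edge-labelled walks (matching the paper's definition of a path) and the loopless convention $a_{ii} = 0$. Your convention remark is in fact necessary: a literal reading of Definition~\ref{dfn:multigraph-adjacency} would give $a_{ii} = \deg(v_i)$, which would falsify both this observation and the bound $\tr(A^2) \le 2\mu m$.
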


\begin{lemma}\label{lem:matrix-edge-triangle}
	Suppose $G = (V, E)$ is a multigraph with $n$ vertices, $m$ edges, and $t$ triangles where the multiplicity of each edge is at most $\mu$. Let $A = (a_{ij})$ denote the adjacency matrix of $G$. Then,
	\begin{align}
		\tr(A^2) & \leq 2 \mu m, \quad\text{and}\label{eqn:m-bound} \\
		\tr(A^3) & = 6 t.\label{eqn:t-bound}
	\end{align}
\end{lemma}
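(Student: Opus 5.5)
We prove the two claims separately, working directly from Definition~\ref{dfn:multigraph-adjacency} and Observation~\ref{obs:matrix-paths}. Throughout, $G$ has no loops, since every edge of $G_{\set{H}}$ joins two distinct vertices. Hence $a_{ii} = 0$ for all $i$. Also $a_{ij}$ is the multiplicity of the pair $\set{v_i, v_j}$, so $0 \le a_{ij} \le \mu$.

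For~(\ref{eqn:m-bound}), we expand the trace using the symmetry of $A$:
\[
\tr(A^2) = \sum_{i,j} a_{ij} a_{ji} = \sum_{i,j} a_{ij}^2 .
\]
Bounding one factor in each term by $\mu$ gives $a_{ij}^2 \le \mu\, a_{ij}$, so $\tr(A^2) \le \mu \sum_{i,j} a_{ij}$. Each edge of $G$ is counted exactly twice in $\sum_{i,j} a_{ij}$, once as $(i,j)$ and once as $(j,i)$, and there are no diagonal terms. Therefore $\sum_{i,j} a_{ij} = 2m$, and $\tr(A^2) \le 2\mu m$. In the simple-graph case $\mu = 1$ this is an equality.

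For~(\ref{eqn:t-bound}), we write
\[
\tr(A^3) = \sum_{i,j,k} a_{ij} a_{jk} a_{ki} .
\]
By Observation~\ref{obs:matrix-paths}, this counts closed walks of length $3$ together with a choice of edge for each step. Since $a_{ii} = 0$, every nonzero term has $i, j, k$ pairwise distinct. For a fixed ordered triple $(i,j,k)$, the product $a_{ij}a_{jk}a_{ki}$ is the number of ways to choose one edge of $G$ for each of the three pairs. The three chosen edges join different vertex pairs, so they are automatically distinct and the resulting closed walk is simple.

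So $\tr(A^3)$ counts triples (ordered vertex triple, edge choices), which are exactly the simple $3$-cycles $(v_i, e, v_j, e', v_k, e'', v_i)$ with a distinguished start vertex and direction. An unordered triangle arises from exactly $3$ choices of start vertex and $2$ choices of orientation. Hence $\tr(A^3) = 6t$.

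The main obstacle is the convention for what counts as a triangle. The factor $6$ presumes that $t$ counts triangles up to rotation and reversal, consistently with Lemma~\ref{lem:triangle-correspondence}. We will state this explicitly. If triangles were instead counted as sequences as in Definition~\ref{dfn:triangle}, the identity would become $\tr(A^3) = t$, or $6t$ with $t$ counting unordered triangles. Either way the constant is harmless for Theorem~\ref{thm:triangle-edge-bound}.

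We also note that the multigraph triangles here correspond to simple, though not necessarily induced, hypergraph triangles. This fits the upper bound needed in Theorem~\ref{thm:triangle-edge-bound}, since induced triangles are a subset of these.
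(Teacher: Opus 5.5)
Your proof is correct and is essentially the paper's argument. The paper bounds $(A^2)_{ii} = \sum_j a_{ij}a_{ji} \le \mu \deg(v_i)$ row by row and sums degrees to $2m$, which is your $a_{ij}^2 \le \mu a_{ij}$ summed. It obtains $\tr(A^3) = 6t$ by the same closed-walk count (three starting vertices, two orientations), and your no-loop assumption and unordered-triangle convention are exactly what the paper uses implicitly.

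One small correction to your closing aside. A triangle of $G_{\set{H}}$ may use three pair-copies induced by a single hyperedge containing all three vertices. So triangles of $G_{\set{H}}$ correspond to closed $3$-walks in $H$ with distinct vertices, which is a superset of the simple hypergraph triangles. Lemma~\ref{lem:triangle-correspondence} therefore really gives an injection, which is still all Theorem~\ref{thm:triangle-edge-bound} needs.
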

\begin{proof}
	In order to prove~(\ref{eqn:m-bound}), we compute
	\[
		(A^2)_{ii} = \sum_{j = 1}^n a_{ij} a_{ji} \leq \sum_{j = 1}^n a_{ij} \mu = \mu \deg(v_i).
	\]
	Thus,
	\[
		\tr(A^2) = \sum_i (A^2)_{ii} \leq \sum_i \mu \deg(v_i) = 2 \mu m,
	\]
	as desired. For~(\ref{eqn:t-bound}), applying Observation~\ref{obs:matrix-paths}, we find that $(A^3)_{ii}$ is the number of 3-paths that start and end at vertex $v_i$. This value is twice the number of triangles containing $v_i$, as each triangle is counted twice (corresponding to the two directions of traversing the triangle). $\tr(A^3) = \sum_{i} (A^3)_{ii}$ therefore counts each triangle $6$ times, as each triangle is counted twice by each of the three vertices in the triangle.
\end{proof}

Finally, in order to prove Theorem~\ref{thm:triangle-edge-bound}, we apply the following lemma of Rivin.

\begin{lemma}[{\cite[Corollary 1]{Rivin2002Counting}}]\label{lem:trace-bound}
	Let $A$ be a symmetric $n \times n$ matrix. Then
	\begin{equation}\label{eqn:trace-bound}
		(\tr(A^k))^2 \leq (\tr(A^2))^k.
	\end{equation}
\end{lemma}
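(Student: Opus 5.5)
We diagonalize $A$ and reduce the claim to a comparison of $\ell_p$-norms of its eigenvalue vector. Throughout, $k \geq 2$ is an integer, which is the range in which the lemma is used (with $k=3$). For $k = 1$ the inequality fails: for $A = I_n$ it reads $n^2 \leq n$.

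First, since $A$ is real symmetric, the spectral theorem gives $A = Q \Lambda Q^{\top}$ with $Q$ orthogonal and $\Lambda = \mathrm{diag}(\lambda_1, \ldots, \lambda_n)$ real. Then $A^j = Q \Lambda^j Q^{\top}$ for every $j$. By cyclicity of the trace,
\begin{equation*}
	\tr(A^j) = \sum_{i=1}^n \lambda_i^j .
\end{equation*}
In particular $\tr(A^2) = \sum_i \lambda_i^2 = \|\lambda\|_2^2 \geq 0$, and $\tr(A^k) = \sum_i \lambda_i^k$. The claim therefore becomes
\begin{equation*}
	\Bigl|\sum_i \lambda_i^k\Bigr| \leq \Bigl(\sum_i \lambda_i^2\Bigr)^{k/2} ,
\end{equation*}
which gives the lemma after squaring, since both sides are nonnegative.

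Second, the triangle inequality gives $\bigl|\sum_i \lambda_i^k\bigr| \leq \sum_i |\lambda_i|^k = \|\lambda\|_k^k$. It then remains to show $\|\lambda\|_k \leq \|\lambda\|_2$ for $k \geq 2$, i.e.\ that $\ell_p$-norms are nonincreasing in $p$. If $\lambda = 0$ there is nothing to prove. Otherwise both sides are homogeneous of degree $k$ in $\lambda$, so we may normalize to $\sum_i \lambda_i^2 = 1$. Then each $|\lambda_i| \leq 1$, hence $|\lambda_i|^k \leq \lambda_i^2$ because $k \geq 2$. Summing over $i$ gives $\sum_i |\lambda_i|^k \leq 1 = (\sum_i \lambda_i^2)^{k/2}$, and undoing the normalization finishes the argument.

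The only points needing care are the reality of the eigenvalues, which is where symmetry is used, and the restriction $k \geq 2$ in the step $|\lambda_i|^k \leq \lambda_i^2$. Neither is a serious obstacle. Applied with $k = 3$ and combined with Lemma~\ref{lem:matrix-edge-triangle}, the lemma yields $(6t)^2 \leq (2\mu m)^3$. Theorem~\ref{thm:triangle-edge-bound} then follows by passing to $G_{\set{H}}$ and using Lemmas~\ref{lem:edge-multiplicity} and~\ref{lem:triangle-correspondence}.
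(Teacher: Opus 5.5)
Your proof is correct, and it is the standard argument behind Rivin's Corollary~1, which the paper cites without reproducing a proof: diagonalize the real symmetric $A$ to get $\tr(A^j)=\sum_i\lambda_i^j$, then use monotonicity of $\ell^p$ norms of the eigenvalue vector, $\|\lambda\|_k\le\|\lambda\|_2$ for $k\ge 2$. Your added hypotheses are also right and worth recording, since the statement as written in the paper omits both: $k\ge 2$ is needed, because $A=I_n$ with $k=1$ gives $n^2\le n$, and $A$ must be real symmetric so that the eigenvalues are real.
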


Combining the previous two lemmas, we obtain the following.

\begin{corollary}\label{cor:multigraph-triangle-bound}
	Let $G = (V, E)$ be a multigraph with $n$ vertices, $m$ edges and $t$ triangles. Suppose each edge in $E$ has multiplicity at most $\mu$. Then
	\begin{equation}\label{eqn:multigraph-triangle-bound}
		(6 t)^2 \leq (2 \mu m)^3.
	\end{equation}
\end{corollary}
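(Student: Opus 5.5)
The plan is to combine Lemma~\ref{lem:matrix-edge-triangle} with Rivin's trace inequality, Lemma~\ref{lem:trace-bound}, taking $k = 3$. Let $A$ be the adjacency matrix of $G$ from Definition~\ref{dfn:multigraph-adjacency}. It is real and symmetric, since $a_{ij}$ counts the edges joining $v_i$ and $v_j$ and this count is symmetric in $i$ and $j$. So Lemma~\ref{lem:trace-bound} applies with $k=3$ and gives
\[
	(\tr(A^3))^2 \leq (\tr(A^2))^3.
\]

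Next, substitute the two identities from Lemma~\ref{lem:matrix-edge-triangle}. The left-hand side equals $(6t)^2$ because $\tr(A^3) = 6t$. For the right-hand side, $\tr(A^2) = \sum_{i,j} a_{ij}^2$ is a sum of squares, hence nonnegative, and by~(\ref{eqn:m-bound}) it is at most $2\mu m$. Since $x \mapsto x^3$ is monotone increasing on $[0,\infty)$, we get $(\tr(A^2))^3 \leq (2\mu m)^3$. Chaining the inequalities gives $(6t)^2 \leq (2\mu m)^3$, as claimed.

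The only point that needs care is the justification of~(\ref{eqn:m-bound}) for multigraphs. There we use $a_{ij}\le\mu$ and $\sum_j a_{ij} = \deg(v_i)$ with multiplicity, so that $\sum_i \deg(v_i) = 2m$. This also assumes $G$ has no loops, which holds for $G_{\set{H}}$ because pairs $\set{u,v}$ have distinct elements, so the diagonal of $A$ is zero. Given that, the corollary is an immediate substitution and I expect no real obstacle.
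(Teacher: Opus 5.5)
Your proof is correct and follows the paper's route: apply Rivin's inequality (Lemma~\ref{lem:trace-bound}) with $k=3$, then substitute $\tr(A^3)=6t$ and $\tr(A^2)\le 2\mu m$ from Lemma~\ref{lem:matrix-edge-triangle}. Your extra remarks are details the paper leaves implicit: that $\tr(A^2)\ge 0$ is needed to cube the inequality, and that the multigraph must be loopless for $\tr(A^3)=6t$.
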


We now have all of the pieces in place to prove Theorem~\ref{thm:triangle-edge-bound}.

\begin{proof}[Proof of Theorem~\ref{thm:triangle-edge-bound}]
	Let $H = (V, E)$ be an $r$-uniform hypergraph with $n$ vertices, $m$ edges, and $t$ triangles. Let $G_{\set{H}} = (V, E')$ be its induced multigraph and suppose $A$ is $G_{\set{H}}$'s adjacency matrix. $G_{\set{H}}$ has $m' = \binom{r}{2} m$ edges and multiplicity $\mu \leq \binom{n-2}{r-2}$ by Lemma~\ref{lem:edge-multiplicity}. By Lemma~\ref{lem:triangle-correspondence}, $G_{\set{H}}$ contains $t$ triangles. Applying Lemma~\ref{lem:matrix-edge-triangle} and Corollary~\ref{cor:multigraph-triangle-bound}, we bound
	\[
		(6t)^2 \leq (2 \mu m')^3.
	\]
	Solving for $t$ gives the desired result.
\end{proof}

In order to complete our proof of Theorem~\ref{thm:clique-lb}, we employ the information theoretic framework of Izumi and Le Gall~\cite{Izumi2017Triangle}. Given the combinatorial bounds proven above, this final step of the argument is the same for simplex and triangle enumeration by substituting the corresponding bounds above. In order to simplify the presentation, we prove the lower bound for only for the triangle enumeration task.

The basic idea is simple: If $H \sim H(n, r, 1/2)$, then for any algorithm $A$ that solves the triangle listing problem, there is some vertex $v$ that outputs at least $\Omega(n^{3r-4})$ triangles with constant probability (by Corollaries~\ref{cor:expected-triangles} and~\ref{cor:triangle-cover}). By Corollary~\ref{cor:triangle-edge-bound}, these triangles must be supported by $\Omega(n^{r-2/3}/r^2)$ edges. That is, $v$ ``witnesses'' the existence of these edges. The information theoretic framework formalizes the intuition that the communication transcript that $v$ receives must therefore contain at least $\Omega(n^{r-2/3}/r^2)$ bits of entropy, as the (non)existence of each edge in $H\sim H(n, r, 1/2)$ has $1$ bit of entropy. The entropy of the transcript is a lower bound for the length of the transcript, hence Theorem~\ref{thm:clique-lb} follows from the fact that $v$ can receive at most $n \log n$ bits each round.

In order to formalize this approach, we introduce some notation and recall some fundamental definitions and results from information theory. Fix the vertex set $V = \set{[n]} = \set{1, 2, \ldots, n}$ and set $m = \binom{n}{r}$. For each set $e \in \binom{[n]}{r}$, we associate a random variable $\bfe_e$ which is $1$ if $e$ is included in $H$ and $0$ otherwise. Let $\calE = (\bfe_1, \bfe_2, \ldots, \bfe_m)$ be the concatenation of the $\bfe_e$.

Fix a triangle enumeration algorithm $\calA$. Let $\rho_v$ be the random variable consisting of the initial state of $v$, which depends only on the edges in $H$ incident to $v$, and let $\rho = (\rho_1, \rho_2, \ldots, \rho_n)$ be the concatenation of these variables. In an execution of $\calA$ and a vertex $v$ let $\pi_v$ and $\calT_v$ denote the transcript of messages received by $v$ and the output, respectively, of $v$, where $\calT_v$ is the set of triangles listed by $v$. We denote $\calT = (\calT_1, \calT_2, \ldots, \calT_n)$ to be the concatenation of the $\calT_v$ so that $\calT$ contains all triangles in $H$. Observe that $\pi_v$ and $\calT_v$ are random variables whose values are determined by $\calE$ and $\rho$. For a set $\calT_v$ of triangles, let $E(\calT_v) \subset E$ denote the set of edges that are included in some triangle in $\calT_v$. That is
\[
	E(\calT_v) = \set{e \in E \sucht \exists T \in \calT_v \text{ with } e \in T}.
\]
We note that in order to show Theorem~\ref{thm:clique-lb}, it is sufficient to show that $\max_v \abs{\pi_v} = \Omega(n^{r - 2/3})$.

We recall that for for a random variable $\bfX$, the \dft{entropy} of $\bfX$ is defined to be $H(\bfX) = - \sum_{x} p(x) \log p(x)$, where the sum is taken over the range of $\bfX$ and $p(x) = \Pr(\bfX = x)$ is the probability density function applied to $x$. Given random variables $\bfX$ and $\bfY$, the \dft{mutual information} between $\bfX$ and $\bfY$ is defined to be $I(\bfX; \bfY) = H(\bfX) - H(\bfX | \bfY)$, where $H(\bfX | \bfY)$ is the conditional entropy of $\bfX$ given $\bfY$. We will use the the following facts about entropy and mutual information.

\begin{lemma}\label{lem:basic-entropy}
	Suppose $\bfX$, $\bfY$, and $\bfZ$ are random variables. Then the following hold:
	\begin{itemize}
		\item $I(\bfX; \bfY) = I(\bfY; \bfX)$,
		\item $I(\bfX; \bfY) \leq H(\bfX)$, and
		\item $I(\bfX; \bfY) \geq I(\bfX; (\bfY, \bfZ)) - I(\bfX; \bfZ)$.
	\end{itemize}
\end{lemma}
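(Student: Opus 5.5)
The first two items are standard, so I would dispatch them quickly. Symmetry follows by expanding $I(\bfX;\bfY)=H(\bfX)-H(\bfX\mid\bfY)$ with the chain rule $H(\bfX,\bfY)=H(\bfY)+H(\bfX\mid\bfY)$, which gives $I(\bfX;\bfY)=H(\bfX)+H(\bfY)-H(\bfX,\bfY)$. This expression is symmetric in $\bfX$ and $\bfY$. The bound $I(\bfX;\bfY)\le H(\bfX)$ is immediate from $H(\bfX\mid\bfY)\ge 0$, which holds for discrete random variables because conditional entropy is an average of entropies.

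For the third item I would first reduce it with the chain rule for mutual information, $I(\bfX;(\bfY,\bfZ)) = I(\bfX;\bfZ) + I(\bfX;\bfY\mid\bfZ)$. This turns the claimed inequality into the equivalent statement
\[
	I(\bfX;\bfY\mid\bfZ) \le I(\bfX;\bfY).
\]
Next I would expand both sides as sums and differences of joint entropies. Regrouping gives the interaction-information identity
\[
	I(\bfX;\bfY) - I(\bfX;\bfY\mid\bfZ) = I(\bfY;\bfZ) - I(\bfY;\bfZ\mid\bfX).
\]
So the inequality holds exactly when $I(\bfY;\bfZ\mid\bfX)\le I(\bfY;\bfZ)$. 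Since $I(\bfY;\bfZ)\ge 0$, it suffices that $I(\bfY;\bfZ\mid\bfX)=0$, that is, that $\bfY$ and $\bfZ$ are conditionally independent given $\bfX$.

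The main obstacle is that this condition is genuinely needed: the third item fails for arbitrary random variables. Take $\bfY,\bfZ$ independent uniform bits and $\bfX=\bfY\oplus\bfZ$. Then $I(\bfX;\bfY)=I(\bfX;\bfZ)=0$ while $I(\bfX;(\bfY,\bfZ))=1$. I would therefore state the third item under the hypothesis that $\bfY$ and $\bfZ$ are conditionally independent given $\bfX$, and I would check this hypothesis where the lemma is used. That holds in the setting of this section. There $\bfX=\calE$, and the other two variables are the transcript $\pi_v$ and the initial state $\rho_v$ (or the output $\calT_v$). Each of these is a deterministic function of the edge indicators $\calE$, given a fixed deterministic algorithm. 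The one step I would verify carefully is that $\rho_v$ is determined by the edges incident to $v$; if the algorithm uses randomness, I would fix the random string first. Conditioned on $\calE$, both variables are then constants, so $I(\bfY;\bfZ\mid\bfX)=0$ and the inequality follows.
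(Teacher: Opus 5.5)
Your proposal is correct, and it does more than the paper: the paper gives no proof of this lemma, presenting the three items as standard facts, yet the third item is false as stated. Your reduction via the chain rule is right: the item is equivalent to $I(\bfX;\bfY\mid\bfZ)\le I(\bfX;\bfY)$, and conditioning can increase mutual information. Your XOR example ($\bfY,\bfZ$ independent uniform bits, $\bfX=\bfY\oplus\bfZ$) turns the item into $0\ge 1$. Your handling of the first two items and the interaction-information identity is fine. Your conditional-independence hypothesis is a valid repair. It does hold where the paper uses the item, provided the algorithm is deterministic. In that case $\bfw=\arg\max_v\abs{\calT_v}$ is a function of $\calE$, even though $\bfw$ is a random vertex, and hence so are $\pi_\bfw$ and $\rho_\bfw$.

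A different repair avoids any hypothesis and is worth comparing with yours. The paper uses the third item only in the combination $I(\calE;\pi_\bfw)\ge I(\calE;(\pi_\bfw,\rho_\bfw))-H(\rho_\bfw)$, because it immediately bounds $I(\calE;\rho_\bfw)$ by $H(\rho_\bfw)$. The weaker inequality $I(\bfX;\bfY)\ge I(\bfX;(\bfY,\bfZ))-H(\bfZ)$ holds for arbitrary random variables. Apply the chain rule in the other order, $I(\bfX;(\bfY,\bfZ))=I(\bfX;\bfY)+I(\bfX;\bfZ\mid\bfY)$, and use $I(\bfX;\bfZ\mid\bfY)\le H(\bfZ\mid\bfY)\le H(\bfZ)$.

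Your version keeps the sharper term $I(\bfX;\bfZ)$, but its hypothesis must be checked at the point of use, and that is where randomness causes trouble. For a randomized algorithm, $\bfw$ can depend on the coins. Then $\rho_\bfw$ is no longer determined by $\calE$, and $\pi_\bfw,\rho_\bfw$ are correlated given $\calE$ through the identity of $\bfw$. This is why you have to fix the random string first. The $H(\bfZ)$ version needs no such step and delivers exactly what the lower-bound argument consumes.
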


\begin{lemma}[Data Processing Inequality]\label{lem:data-processing}
	For any random variables $\bfX$, $\bfY$, and $\bfZ$, if $\bfX$ and $\bfZ$ are conditionally independent given $\bfY$, then $I(\bfX; \bfY) \geq I(\bfX, \bfZ)$.
\end{lemma}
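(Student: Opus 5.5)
The Data Processing Inequality follows from the chain rule for mutual information; I will prove that version. (As typeset, the second argument reads $I(\bfX, \bfZ)$; I take this to mean $I(\bfX; \bfZ)$.) The standard statement also has $\bfX$ and $\bfZ$ conditionally independent given $\bfY$, which in Markov-chain language says $\bfX \to \bfY \to \bfZ$.

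First, I recall the chain rule for mutual information. It follows directly from the chain rule for entropy, $H(\bfX \mid \bfY, \bfZ) = H(\bfX, \bfY, \bfZ) - H(\bfY, \bfZ)$, together with the definition $I(\bfX;\bfY) = H(\bfX) - H(\bfX\mid\bfY)$. I then expand $I(\bfX; (\bfY,\bfZ))$ in two ways:
\begin{equation*}
	I(\bfX; (\bfY,\bfZ)) = I(\bfX;\bfY) + I(\bfX;\bfZ \mid \bfY) = I(\bfX;\bfZ) + I(\bfX;\bfY\mid\bfZ).
\end{equation*}
Each equality is verified by writing every term as a difference of conditional entropies, for instance $I(\bfX;\bfZ\mid\bfY) = H(\bfX\mid\bfY) - H(\bfX\mid\bfY,\bfZ)$, and observing that the sums telescope.

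Second, I use the hypothesis. Conditional independence of $\bfX$ and $\bfZ$ given $\bfY$ means $p(x,z\mid y) = p(x\mid y)\,p(z\mid y)$ for every $y$ with $p(y)>0$. Hence $H(\bfX\mid\bfY,\bfZ) = H(\bfX\mid\bfY)$, because the conditional law of $\bfX$ given $(\bfY,\bfZ)=(y,z)$ equals its law given $\bfY=y$. It follows that $I(\bfX;\bfZ\mid\bfY)=0$.

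Third, I need nonnegativity of conditional mutual information, $I(\bfX;\bfY\mid\bfZ)\ge 0$. This is the one step with genuine content, so it is the step I expect to be the main obstacle. I would prove it by writing the quantity as an average over $z$ of the Kullback--Leibler divergence between $p(x,y\mid z)$ and $p(x\mid z)p(y\mid z)$. Each such divergence is nonnegative by Jensen's inequality applied to the concave function $\log$, via the Gibbs inequality $\sum_i p_i\log(q_i/p_i)\le \log\sum_i q_i \le 0$.

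Finally, I combine the pieces. From the two expansions, $I(\bfX;\bfY) = I(\bfX;\bfZ) + I(\bfX;\bfY\mid\bfZ) \ge I(\bfX;\bfZ)$, which is the claim. The third item of Lemma~\ref{lem:basic-entropy} is itself a consequence of the same chain rule together with $I(\bfX;\bfY\mid\bfZ)\le H(\bfX \mid \bfZ)$, so these standard facts are consistent with one another. All random variables here take finitely many values, namely functions of the finite edge indicator vector $\calE$ and the initial states $\rho$, so every sum is finite and no measure-theoretic issues arise.
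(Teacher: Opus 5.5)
Your proof is correct. The paper gives no proof of this lemma and simply quotes it as a standard fact. Your argument is the standard textbook one: expand $I(\bfX;(\bfY,\bfZ))$ by the chain rule in two ways, use conditional independence to get $I(\bfX;\bfZ\mid\bfY)=0$, and use nonnegativity of $I(\bfX;\bfY\mid\bfZ)$ as an averaged Kullback--Leibler divergence. You are also right to read the typo $I(\bfX,\bfZ)$ as $I(\bfX;\bfZ)$.

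One correction concerns your closing aside, which your proof does not depend on. The third item of Lemma~\ref{lem:basic-entropy} does not follow from the chain rule together with $I(\bfX;\bfY\mid\bfZ)\le H(\bfX\mid\bfZ)$. In fact it is false in general. By the chain rule, that item is equivalent to $I(\bfX;\bfY)\ge I(\bfX;\bfY\mid\bfZ)$. This fails when $\bfX,\bfZ$ are independent uniform bits and $\bfY=\bfX\oplus\bfZ$: there $I(\bfX;\bfY)=0$ but $I(\bfX;\bfY\mid\bfZ)=1$.

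The valid inequality is $I(\bfX;\bfY)\ge I(\bfX;(\bfY,\bfZ))-H(\bfZ)$. It follows from $I(\bfX;(\bfY,\bfZ))=I(\bfX;\bfY)+I(\bfX;\bfZ\mid\bfY)$ together with $I(\bfX;\bfZ\mid\bfY)\le H(\bfZ\mid\bfY)\le H(\bfZ)$. This is all the chain~(\ref{eqn:pi-entropy}) actually needs, since its next line replaces $I(\calE;\rho_\bfw)$ by $H(\rho_\bfw)$ anyway.
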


The following lemma, due to Izumi and Le Gall~\cite{Izumi2017Triangle}, formalizes that intuition that the set of triangles $\calT_v$ output by a vertex $v$ contains all of the information about the edges contained in the triangles. The proof in our setting is identical to that of Izumi and Le Gall, so we do not reproduce the argument here.

\begin{lemma}\label{lem:triangle-information}
	For any vertex $v$, we have
	\begin{equation}\label{eqn:triangle-information}
		I(\calE; \calT_v) \geq \E\sqb{E(\abs{\calT_v})}
	\end{equation}
\end{lemma}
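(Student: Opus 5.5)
The plan is to follow the argument of Izumi and Le Gall (their Lemma~4.3) essentially verbatim, reading the right-hand side as $\E[\abs{E(\calT_v)}]$. The intuition is that a correct output $\calT_v$ \emph{certifies} the presence of every edge in $E(\calT_v)$. Since each coordinate of $\calE$ is an independent fair bit, conditioning on $\calT_v$ should remove at least one bit of entropy per certified edge. We work under the convention that the algorithm is correct, i.e.\ every triangle listed by $v$ is actually present in $H$. If correctness holds only with some probability, we first condition on that event or absorb the failure probability into a constant loss, as Izumi and Le Gall do.

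\emph{Step 1 (decomposition).} Write $I(\calE;\calT_v) = H(\calE) - H(\calE \mid \calT_v)$. Here $H(\calE) = m = \binom{n}{r}$, because the $\bfe_e$ are independent uniform bits.

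\emph{Step 2 (conditional entropy).} We bound $H(\calE\mid\calT_v) = \sum_{\tau} \Pr(\calT_v=\tau)\, H(\calE \mid \calT_v = \tau)$ term by term. Fix an output value $\tau$ occurring with positive probability. By correctness, every $e \in E(\tau)$ satisfies $\bfe_e = 1$ on the event $\{\calT_v = \tau\}$. The conditional distribution of $\calE$ given this event is therefore supported on vectors that are fixed to $1$ on the $\abs{E(\tau)}$ coordinates of $E(\tau)$ and arbitrary elsewhere. That set has at most $2^{m - \abs{E(\tau)}}$ elements. Since entropy is at most the logarithm of the support size, $H(\calE \mid \calT_v=\tau) \le m - \abs{E(\tau)}$.

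\emph{Step 3 (average).} Averaging over $\tau$ gives $H(\calE\mid\calT_v) \le m - \E[\abs{E(\calT_v)}]$. Combining with Step~1 yields $I(\calE;\calT_v) \ge \E[\abs{E(\calT_v)}]$.

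Note that randomness in $\rho$, or any internal randomness of the algorithm, does not affect the support argument in Step~2. The argument uses only the joint distribution of $(\calE,\calT_v)$ together with correctness.

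The main obstacle is the correctness caveat. If the algorithm is allowed to err, a listed triangle need not force its edges to be present, and the support bound in Step~2 fails. The fix I would try first is the one in the original argument. Let $C$ be the indicator of correct output, and use Lemma~\ref{lem:basic-entropy} to write $I(\calE;\calT_v) \ge I(\calE;(\calT_v,C)) - H(C)$. Then apply Step~2 on the event $C=1$. This costs at most one bit and a constant factor, which is harmless for the asymptotic lower bound in Theorem~\ref{thm:clique-lb}.
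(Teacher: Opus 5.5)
Your proof is correct and is essentially the argument the paper intends. The paper gives no proof of its own; it defers to Izumi and Le Gall's Lemma~4.3, which is this same bound $H(\calE \mid \calT_v = \tau) \le m - \abs{E(\tau)}$, obtained by counting the unconstrained coordinates and averaging over $\tau$. You are also right to read the right-hand side as $\E[\abs{E(\calT_v)}]$ and to make soundness explicit (every triangle listed by $v$ actually exists), since without soundness the inequality fails.
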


We now have all the pieces in place to prove Theorem~\ref{thm:clique-lb}. The proof outline follows that of Izumi and Le Gall~\cite{Izumi2017Triangle}.

\begin{proof}[Proof of Theorem~\ref{thm:clique-lb} (triangle enumeration)]
	Fix any (randomized) algorithm $\calA$ for triangle listing in the CLIQUE model, and consider inputs generated according to $H \sim H(n, r, 1/2)$. Let $\bfw = \arg\max_{v} \set{\abs{\calT_v}}$ denote the (random variable) vertex that outputs the largest number of triangles in an execution of $\calA$. By Corollary~\ref{cor:triangle-cover}, $\abs{\calT_\bfw} = \Omega(n^{3r - 4})$ with constant probability. 

	Applying Lemmas~\ref{lem:basic-entropy} and~\ref{lem:data-processing}, we bound the entropy of the transcript, $\pi_\bfw$, received by $\bfw$ as follows:
	\begin{equation}\label{eqn:pi-entropy}
		\begin{split}
			H(\pi_\bfw) & \geq I(\calE, \pi_\bfw)                                    \\
			            & \geq I(\calE, (\pi_\bfw, \rho_\bfw)) - I(\calE, \rho_\bfw) \\
			            & \geq I(\calE, (\pi_\bfw, \rho_\bfw)) - H(\rho_\bfw)        \\
			            & \geq I(\calE, \calT_\bfw) - H(\rho_\bfw).
		\end{split}
	\end{equation}
	The final inequality uses the data processing inequality (Lemma~\ref{lem:data-processing}) together with the fact that $\calE$ and $\calT_\bfw$ are conditionally independent given $(\pi_\bfw, \rho_\bfw)$ (because the output of $\bfw$ is completely determined by its initial condition and the transcript of messages it receives). The initial state $\rho_\bfw$ contains only the knowledge of (possible) edges incident to $\bfw$, of which there are $\binom{n-1}{r-1}$. Thus we bound
	\begin{equation}\label{eqn:rho-entropy}
		H(\rho_\bfw) = \sum_{e : \bfw \in e} H(\bfe_e) = \binom{n-1}{r-1} = O(n^{r-1}),
	\end{equation}
	where we have used the fact that the $\bfe_e$ are independent Bernoulli random variables with probability $p = 1/2$. Combining inequalities~(\ref{eqn:pi-entropy}) and~(\ref{eqn:rho-entropy}) with the conclusions of Corollaries~\ref{cor:triangle-cover} and~\ref{cor:triangle-edge-bound} and Lemma~\ref{lem:triangle-information}, we obtain
	\begin{equation*}
		H(\pi_\bfw) = \Omega(n^{r-2/3}/r^2) \quad\text{for}\quad r = o(n^{1/6}).
	\end{equation*}
	Since $H(\pi_\bfw)$ is a lower bound for the expected length of $\bfw$'s transcript and $\bfw$ receives $O(n \log n)$ bits each round, the expected number of rounds of $\calA$ must be $\Omega(H(\rho_\bfw)/n \log n) = \Omega(n^{r-5/3}/r^2 \log n)$, as desired.
\end{proof}

The proof for simplex enumeration is identical, except that we use Corollary~\ref{cor:simplex-edge-bound} to bound $\abs{\calT_\bfw} = \Omega(n^{r - 1 + 1/(r+1)})$, and hence $H(\pi_\bfw) = \Omega(n^{r - 1 + 1/(r+1)})$, giving the claimed result.

%% file: general-hypergraphs.tex
\subsection{Triangle Enumeration in General Hypergraphs}
\label{sec:general-hypergraphs}

The matching upper and lower bounds of Sections~\ref{sec:clique-ub} and~\ref{sec:clique-lb} apply to to triangle enumeration in the CLIQUE model. In this section, we consider the triangle enumeration problem in the hypergraph generalizations of the CONGEST model defined in Section~\ref{sec:computational-models}. We focus on the EDGE CLIQUE (EC) model, which is the most powerful of the models defined. Specifically, we show the following theorem, where we use the notation $\widetilde{O}(\cdot)$ to suppress factors that are polylogarithmic in the function's argument.

\begin{theorem}\label{thm:ec-ub}
	Triangle enumeration can be performed in rank $r$ hypergraphs in expected $\widetilde{O}(n^{1/3} \min\set{r \Delta_2, n^{r-2}})$ rounds in the EDGE CLIQUE (EC) and PRIMAL CONGEST (PC) models, where $\Delta_2$ is the pair-degree of $H$.
\end{theorem}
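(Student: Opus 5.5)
We reduce to the PC model and simulate the CONGEST triangle enumeration algorithm of Chang et al.~\cite{Chang2021Nearoptimal} on the primal graph $G = G(H)$. Proposition~\ref{prop:pc-relationship} shows that one PC round can be simulated in one EC round, so any PC algorithm gives an EC algorithm with the same running time. We therefore work only in PC, which is exactly CONGEST on $G$. The plan has three parts: (i) the structural stage of Chang et al.\ (expander decomposition, cluster routing, recursion on inter-cluster edges) runs on $G$ unchanged; (ii) triangle enumeration in $H$ is recast as triangle enumeration in the induced multigraph $G_{\set{H}}$ of Definition~\ref{dfn:induced-multigraph}; (iii) the only modification is that "edge messages" now carry hyperedge lists, which costs a multiplicative overhead.

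For step (ii), Lemma~\ref{lem:triangle-correspondence} says that hypergraph triangles in $H$ correspond bijectively to triangles in $G_{\set{H}}$. Each such triangle lies over a triangle $\set{u,v,w}$ of the simple graph $G$, together with a choice of hyperedge for each side. (For the simple or induced variants we additionally require the three hyperedges to be distinct, or the third vertex to be absent; this is a local check once the hyperedges are known.) Chang et al.'s algorithm assigns every triangle $\set{u,v,w}$ of $G$ to a vertex $x$, which learns the edges $\set{u,v},\set{v,w},\set{u,w}$. It does so by having endpoints send their incident edges to designated vertices through expander routing. We modify it so that whenever $u$ would send $\set{u,v}$, it instead sends the IDs of all hyperedges $e\ni u,v$. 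Under $\KT_1$, $u$ knows these hyperedges, and they can be named with short IDs as in Section~\ref{sec:edge-ids}, or by listing their at most $r$ vertex IDs. The responsible vertex then sees every hyperedge over each side of every assigned triangle and can list all hypergraph triangles over $\set{u,v,w}$, each exactly once. Inter-cluster edges of the decomposition are handled recursively as in Chang et al., with every pair $\set{u,v}$ still treated as one edge of $G$ that carries its hyperedge list.

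For step (iii), each graph edge $\set{u,v}$ now corresponds to a payload of $\deg(u,v)\le \Delta_2$ hyperedges. Encoding each hyperedge by its vertex list costs $O(r\log n)$ bits, so the payload is $O(r\Delta_2\log n)$ bits. Alternatively, the payload can be written as a bit vector over all possible hyperedges of rank at most $r$ containing $u,v$, which has length $O(n^{r-2})$. Taking the cheaper encoding, each message of Chang et al.\ is replaced by $O(\min\set{r\Delta_2, n^{r-2}})$ messages of $O(\log n)$ bits. Routing in Chang et al.\ is bandwidth-bounded: each vertex sources and sinks $O(n^{1/3}\deg_G(v))$ messages in $\widetilde O(n^{1/3})$ rounds. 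Scaling every message by the same factor $k$ therefore multiplies the congestion, and hence the routing time, by at most $k$. Running the expander routing $k$ times sequentially, or pipelining it, gives $\widetilde O(n^{1/3}k)$. The decomposition phases, which do not depend on hyperedges, keep their original $\widetilde O(n^{1/3})$ cost, and the recursion depth is unchanged. This gives the claimed expected bound.

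The main obstacle is step (iii): we must verify that the black-box guarantees of Chang et al.'s routing hold with heavier payloads, and in particular that the degree-proportional load bounds are not broken. The approach I would try first is to treat the hyperedge list of each pair as $k$ independent "copies" of the message $\set{u,v}$ with identical source and destination. The instance then has the same source and destination structure, but every load is multiplied by $k$. Since the Ghaffari–Kuhn–Su routing scales linearly in the load, the bound follows. The remaining points, correctness of exactly-once output and the recursion, carry over directly from the graph case because assignment is done at the level of vertex triples.
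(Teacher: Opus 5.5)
Your proposal is correct and follows essentially the same route as the paper. It reduces EC to PC via Proposition~\ref{prop:pc-relationship}, runs Chang et al.'s expander-decomposition algorithm on the primal graph $G(H)$ (viewing hypergraph triangles as triangles of $G_{\set{H}}$), and replaces each transmitted pair $\set{u,v}$ by the list or bit vector of hyperedges containing $u$ and $v$, at a multiplicative cost of $O(\min\set{r\Delta_2, n^{r-2}})$. Your justification of that overhead is more explicit than the paper's: you split each payload into $k$ chunks and rerun the routing with the original per-vertex loads, whereas the paper simply asserts the longer messages can be routed with this overhead.
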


By Proposition~\ref{prop:model-relationships}, the result of Theorem~\ref{thm:ec-ub} holds for all of the models defined in Section~\ref{sec:computational-models} up to factors that are polynomial in $r$.

\begin{remark}
	While the upper bound of Theorem~\ref{thm:ec-ub} matches the (tight) upper bound of Theorem~\ref{thm:clique-ub} for the CLIQUE model, the computational power of CLIQUE and EC are not comparable when $\Delta_2$ is super-constant. To see why, observe that if $u$ and $v$ share $\deg(u, v)$ incident edges in $H$, then in a single round of EC, $u$ and $v$ can exchange $O(\deg(u, v) \log n)$ bits of communication (corresponding to $O(\log n)$ bits per shared edge). Thus, the lower bound of Theorem~\ref{thm:clique-lb} does generally apply to the EC model. On the other hand, a single round of the EC model can be simulated in $\Delta_2$ rounds of the CLIQUE model, hence the upper bound of Theorem~\ref{thm:ec-ub} is tight (up to $\poly r \log n$ factors) when $\Delta_2 = O(1)$.
\end{remark}

The proof Theorem~\ref{thm:ec-ub} applies the techniques developed by Chang et al.~\cite{Chang2021Nearoptimal} for solving (graph) triangle enumeration in the CONGEST model. Chang et al.'s solution develops two main tools to perform triangle enumeration:
\begin{enumerate}
	\item computing an expander decompositions in which the graph $G = (V, E)$ is partitioned into components such that (a)~each component is an expander graph, and (b) only $\eps \abs{E}$ edges are not contained within some component, and
	\item computing efficient message routing within expander graphs.
\end{enumerate}
With these two tools, Chang et al.\ perform triangle enumeration in the CONGEST model by first computing an expander decomposition of the graph. Then within each component of the decomposition, the efficient message routing procedure is applied to simulate the CLIQUE triangle enumeration algorithm of Dolev et al.~\cite{Dolev2012Tri} within each component (with only $\poly\log n$ overhead) to enumerate all triangles with at least two vertices in the component. This two step procedure is then applied recursively on the subgraph of inter-cluster edges (of which there are at most $\eps \abs{E}$).

Our approach is to simulate both steps of Chang et al.~in the EC model. There are two main observations that allow us to perform this simulation. First, we observe that if $G = G(H)$ is the primal graph of $H$, then a single round of CONGEST in $G$ can be simulated by a single round of computation in the EC model. Thus, any CONGEST algorithm on $G$ can be simulated in $H$ in the EC model without additional computational overhead. Second, we note that triangle enumeration in a hypergraph $H$ is equivalent to triangle enumeration in the multigraph $G_{[H]}$ in which the multiplicity of each pair $\set{u, v}$ is equal to the number of edges $e \in E$ such that $u, v \in e$. Using these two observations we perform triangle enumeration in $H$ as follows:
\begin{enumerate}
	\item Compute an expander decomposition of $G = G(H)$ with resulting components $G_1, G_2, \ldots$.
	\item Within each component $G_i$ perform triangle enumeration, listing triangles of the form $T = v_1 e_1 v_2 e_2 v_3 e_3$  such that at least two of the $v_i$ are in $G_i$ as follows:
	      \begin{enumerate}
		      \item simulate computation of the message the routing of~\cite{Chang2021Nearoptimal} for triangle enumeration in $G_i$,
		      \item if $u$ sends the edge $e' = \set{u, v}$ to a vertex $w$ in the simulation, then $u$ sends to $w$ all of its incident hyperedges $e$ with $u, v \in e$.
	      \end{enumerate}
	\item Recursively enumerate triangles containing only inter-cluster edges in $G$.
\end{enumerate}

The simulation of Steps~1, 2b, and~3 are straightforward given the our observation that CONGEST computation in $G(H)$ can be simulated in $H$ in the EC model. To perform Step~2b, we note that the size of the message sent from $u$ to $w$ can be made to be $O(\min\set{r \Delta_2, n^{r-2} + \log n})$ where $r$ is the rank of $H$. The first term can be achieved by each vertex sending the list of neighboring vertices in each incident edge, while the second is achieved by sending the identities of $u$ and $v$ followed by a binary vector of length $n^{r-2}$ where the indices enumerate all of the possible $O(n^{r-2})$ edges that contain $u$ and $v$. Clearly, these longer messages can be routed to their respective destinations with a multiplicative overhead of $O(\min \set{r \Delta, n^{r-2}/\log n + 1})$ rounds compared to the algorithm of~\cite{Chang2021Nearoptimal}.

%% file: sparse-hypergraphs.tex
\section{Triangle and Simplex Enumeration in Sparse Hypergraphs}\label{sec:sparse}

Our results in Section~\ref{sec:clique} provide existentially optimal algorithms for triangle and simplex enumeration in the CLIQUE and PRIMAL CONGEST (PC) models parameterized by the rank of a hypergraph. The upper bound may have limited practical application, however, because the bound is exponential in the rank $r$ of the hypergraph. Many natural hypergraphs (such as those representing social networks, for example) may have very large rank, making this algorithm unsuitable. On the other hand, the lower bound construction in Section~\ref{sec:clique} contains a number of edges that is exponential in $r$. Again this is a feature that many naturally occurring hypergraphs do not possess.

In this section, we consider families of sparse hypergraphs that generalize sparse graphs. We define two notions of sparseness: global sparseness and everywhere sparseness. A (globally) sparse hypergraph is a hypergraph in which the sum of the degrees of vertices (equivalently the sum of the cardinalities of edges) is linear in the number of vertices. This definition is a direct generalization of a sparse graph. An \emph{everywhere sparse} hypergraph is a hypergraph in which every sub-hypergraph is (globally) sparse. We give a formal definition of this notion in Section~\ref{sec:everywhere-sparse}. Everywhere sparseness generalizes the asymptotically equivalent notions of degeneracy and arboricity of a graph. Just as many naturally occurring graphs have bounded degeneracy (and arboricity), we expect that many natural families of hypergraphs are everywhere sparse.

In this section we give upper bounds for triangle and simplex enumeration in sparse and everywhere sparse hypergraphs. These algorithms can be implemented in the EDGE BROADCAST (EB) model. We complement our upper bounds with lower bounds as well, though these bounds are not tight. We leave it as an open question whether the algorithms or lower bounds can be improved.

\subsection{Low Degree Triangles}

Let $H = (V, E)$ be a hypergraph of rank $r$. We say that a vertex $v \in V$ is \dft{$\delta$-light} if $\deg(v) \leq \delta$. We call a triangle $T = (v_1, e_1, v_2, e_2, v_3, e_3)$ $\delta$-light if at least one of $v_1, v_2$ and $v_3$ is $\delta$-light. Similarly, an $r$-dimensional simplex $S = \set{v_0, v_1, \ldots, v_r}$ is $\delta$-light if at least one of the $v_i$ is $\delta$-light.  The following result shows that we can efficiently enumerate all $\delta$-light triangles and simplices in $H$ that contain at least one light vertex.

\begin{proposition}\label{prop:light-triangle}
	Suppose $H = (V, E)$ is a hypergraph of rank $r$, and let $\delta > 0$ be a parameter. Then all $\delta$-light triangles and simplices in $H$ can be listed in $O(r \delta)$ rounds of the EDGE BROADCAST (EB) and PRIMAL CONGEST (PC) models. Moreover, each $\delta$-light triangle is listed by exactly one vertex in $H$.
\end{proposition}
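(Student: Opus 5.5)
The plan is to have every $\delta$-light vertex $v$ broadcast its entire incident edge list, so that each neighbor learns all of $v$'s edges. Then a deterministic rule assigns every light triangle or simplex to exactly one vertex that can verify it from what it has received.

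First, each vertex learns whether each neighbor is light. Under $\KT_1$ every vertex $v$ knows $\deg(v)$ and the vertex sets of its incident edges. In one round, $v$ broadcasts on every incident edge a single bit saying whether $\deg(v)\le\delta$, together with its ID. This takes $O(1)$ rounds of EB; in PC, $v$ sends it to every neighbor in the primal graph.

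Next, each light vertex $v$ broadcasts its incident edges. It has at most $\delta$ of them, each of size at most $r$, and each can be encoded as a list of at most $r$ vertex IDs of $O(\log n)$ bits each. So the whole list is $O(r\delta)$ messages of $B=\Theta(\log n)$ bits. In EB, $v$ broadcasts the same stream on each incident edge. In PC, $v$ sends it to each neighbor $w\in N(v)$ over the single primal edge $\{v,w\}$. Either way the phase takes $O(r\delta)$ rounds. Afterwards every vertex knows its own incident edges and the full edge lists of all its light neighbors.

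Next, fix the assignment rule and check that the assigned vertex has enough information.
\begin{itemize}
\item For a light triangle $T=(v_0,e_0,v_1,e_1,v_2,e_2)$, let $v_a$ be the light vertex of $T$ with minimum ID. Assign $T$ to the vertex $v_{a+1}$ (indices mod $3$), which is a neighbor of $v_a$.
\item That vertex knows its own incident edges. These include the two edges of $T$ through $v_{a+1}$, namely $e_a$ (containing $v_a,v_{a+1}$) and $e_{a+1}$ (containing $v_{a+1},v_{a+2}$), so it also knows $v_{a+2}$. It received $v_a$'s incident edges, which include $e_{a+2}$ (containing $v_{a+2},v_a$).
\item It also knows the IDs and light status of $v_a$ and $v_{a+2}$, because both are its neighbors and announced their status in the first phase. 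So it can check that $v_a$ is the minimum-ID light vertex among $v_0,v_1,v_2$.
\item It can verify simplicity, and inducedness if required, since it holds all three edges.
\end{itemize}
Simpler still, $T$ can be assigned to $v_a$ itself. Then the obstacle is learning $e_{a+1}$, an edge not incident to $v_a$. I therefore prefer the neighbor rule above; the one subtlety is fixing the orientation canonically so that each triangle is output once.

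For a simplex $S$ with minimum-ID light vertex $v$, assign $S$ to the minimum-ID vertex $u\in S\setminus\{v\}$. Every $r$-subset of $S$ either contains $v$, and so is among $v$'s broadcast edges, or is $S\setminus\{v\}$, which contains $u$ and so is incident to $u$. Hence $u$ can verify $S$. It can also check that $v$ is the minimum-ID light vertex of $S$, because every vertex of $S$ is $u$'s neighbor.

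The main obstacle is making the triangle assignment well defined. A triangle is a cyclic sequence and may be traversed in either orientation and from any starting point. I would normalize each triangle to a canonical representative: rotate so that $v_0$ is the minimum-ID light vertex, and choose the direction so that $v_1$ has the smaller ID of the two remaining vertices. Because the assignment depends only on this canonical form and the assigned vertex can compute the canonical form, each light triangle is listed by exactly one vertex, which gives the uniqueness claim.
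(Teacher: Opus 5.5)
Your proposal is correct and follows essentially the same approach as the paper: each light vertex broadcasts its full incident-edge list in $O(r\delta)$ rounds, and a deterministic ID-based rule then selects a single output vertex that holds all the required edges. Your selection rule (the smaller-ID of the two vertices other than the minimum-ID light vertex, and for simplices the minimum-ID vertex other than it) differs from the paper's rule (the smallest-ID heavy vertex if there is one, else the smallest-ID vertex), but it is equally valid. Your explicit $O(1)$-round announcement of light status also usefully justifies the paper's unproven claim that vertices know their neighbors' degrees under $\KT_1$.
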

\begin{proof}
	Consider the following procedure: each $\delta$-light vertex $v$ broadcasts its $1$-neighborhood to all of its neighbors. Specifically, for each incident edge $e$ to $v$, $v$ broadcasts $e$'s ID together with the list of vertices contained in $e$. Observe that this broadcast requires $O(r \delta)$ rounds in the EB and PC models, as each $v$ sends  $\deg(v) \leq \delta$ lists each containing (at most) $r$ IDs to its neighbors.

	We first consider the case of triangle enumeration. Suppose $w$ is a neighbor of $v$ where $v$ and $w$ are both incident to edge $e$. We claim that after $v$'s broadcast, $w$ knows every triangle $T$ containing the $2$-path $(v, e, w)$. To see this, observe that $w$ receives the pair $u, f$ for every edge $f$ incident to $v$ and vertex $u \in f$. Further, $w$ knows every edge $g$ incident to itself, in particular, every $g$ containing a given vertex $u$. Thus, $w$ knows every triangle $(u, f, v, e, w, g)$.

	To ensure that each $\delta$-light triangle $T$ is listed by exactly one vertex, we apply the following rule:
	\begin{itemize}
		\item If $T$ contains at least one vertex that is $\delta$-heavy, then the $\delta$-heavy vertex with the smaller ID outputs $T$.
		\item If $T$ contains only $\delta$-light vertices, then the vertex with the smallest ID outputs $T$.
	\end{itemize}
	Note that each vertex can determine for itself if it should output $T$ because it knows both the IDs of all its neighbors as well as their degrees.

	For case of simplex enumeration, suppose $S = \set{v_0, v_1, \ldots, v_r}$ is an $(r+1)$-subset of $V$ and $v_0$ is $\delta$-light. Observe that $S$ is a simplex if and only if every $r$-subset in $\binom{S}{r}$ is an edge. Each vertex $v_i$ is incident to all of these edges (if present in $H$) except $e_{-i} = S \setminus \set{v_i}$. Thus, in order for $v_i$ to determine if $S$ is a simplex, $v_i$ need only learn about the existence of $e_{-i}$. Since $v_0 \in e_{-i}$ for all $i \neq 0$, if $v_0$ sends its incident edges to $v_i$ for all $i \neq 0$, each of the $v_i$ can determine whether $S$ is a simplex individually without further communication. In order to ensure that each simplex is output by exactly one vertex, we can use the same tie-breaking rule as the triangle case, choosing the $\delta$-heavy vertex (if any) with the smallest ID, or the the vertex with the smallest ID if all $v_i$ are $\delta$-light. In either case, every $\delta$-light simplex is output by exactly one vertex, as required.
\end{proof}

\subsection{Sparse hypergraphs}

A hypergraph $H = (V, E)$ is \dft{sparse} if $\sum_{e \in E} \abs{e} = O(n)$, where $n = \abs{V}$. Observe that $\sum_{e \in E} \abs{e} = \sum_{v \in V} \deg(d)$. Also, if $H$ is sparse, the maximal degree $\Delta$ of a vertex is $O(n)$. In this section, we will show that triangle and simplex enumeration in sparse hypergraphs can be performed in $O(n)$ rounds of the EB and PC models. We show this upper bound is essentially tight for the PC model, as triangle and simplex enumeration require $\Omega(n^{1-\eps})$ rounds in sparse hypergraphs for every $\eps > 0$ in PC. The upper bound follows from the following more general result for bounded-degree hypergraphs.

\begin{theorem}\label{thm:bounded-degree-ub}
	Suppose $H = (V, E)$ has maximum degree $\Delta$. Then triangle and simplex enumeration can be performed in $O(\Delta)$ rounds of the EDGE BROADCAST (EB) and PRIMAL CONGEST (PC) models.
\end{theorem}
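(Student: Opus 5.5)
The plan is not to invoke Proposition~\ref{prop:light-triangle} with $\delta = \Delta$ directly. That would make every triangle and simplex $\Delta$-light, but it only gives $O(r\Delta)$ rounds, because each broadcast edge carries a list of up to $r$ vertex IDs. To drop the factor of $r$, each vertex will broadcast only the \emph{short unique IDs} of its incident edges, and neighbours will recover the missing edges by intersecting these ID lists.

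\emph{Step 1 (edge IDs).} Compute the $O(\log n)$-bit edge IDs of Section~\ref{sec:edge-ids}.
\begin{itemize}
\item In EB this takes $O(1)$ rounds: the minimum-ID vertex of each edge $e$ broadcasts its port number on $e$.
\item In PC, that vertex must send the port number of each incident edge to every vertex of that edge. A fixed neighbour receives at most $\Delta$ such numbers, so this takes $O(\Delta)$ rounds.
\end{itemize}

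\emph{Step 2 (broadcast).} Every vertex $v$ sends the list $L_v$ of IDs of its at most $\Delta$ incident edges to all of its neighbours.
\begin{itemize}
\item In EB, $v$ broadcasts on every incident edge in parallel, one ID per round, for $\Delta$ rounds.
\item In PC, $v$ sends one ID per round to each neighbour, again $\Delta$ rounds.
\end{itemize}
Under $\KT_1$, every vertex already knows the vertex set and ID of each of its own incident edges.

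\emph{Step 3 (triangles).} Fix a triangle $(u,f,v,e,w,g)$ and look at it from $w$. The vertex $w$ knows $e \ni v,w$ and $g \ni u,w$, and it has received $L_u$ and $L_v$. The candidate edges $f$ containing both $u$ and $v$ are exactly the IDs in $L_u \cap L_v$; uniqueness of IDs makes this identification correct. Hence $w$ can list every triangle through $w$.

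The simplicity and inducedness conditions are also checkable locally by $w$:
\begin{itemize}
\item $u \notin e$ and $v \notin g$ follow from $w$'s own knowledge of $e$ and $g$;
\item $w \notin f$ holds iff $f \notin L_w$;
\item distinctness of $e,f,g$ follows by comparing IDs.
\end{itemize}
Each triangle is then output by the vertex of smallest ID among $\{u,v,w\}$. All three vertices can evaluate this rule, since they know one another's IDs.

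\emph{Step 4 (simplices).} Let $S=\{v_0,\dots,v_r\}$ with $r\ge 2$. If $S$ is a simplex, each pair $v_i,v_j$ lies in a common edge, so all $v_j$ are neighbours of $v_i$ and $v_i$ has received every $L_{v_j}$.

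Vertex $v_i$ knows all $r$-subsets of $S$ containing $v_i$ from its own incident edges. The only missing face is $e_{-i}=S\setminus\{v_i\}$. In an $r$-uniform hypergraph, $e_{-i}\in E$ iff some ID lies in $\bigcap_{j\ne i} L_{v_j}$ and is not in $L_{v_i}$: such an edge contains the $r$ vertices $v_j$, $j \neq i$, so it equals $e_{-i}$. The smallest-ID vertex of $S$ outputs $S$.

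The main point needing care is exactly this replacement of explicit vertex lists by IDs: correctness of the intersection test relies on the IDs being globally unique and on $r$-uniformity for simplices. Once that is established, the round count is dominated by the $\Delta$ IDs each vertex sends, giving $O(\Delta)$ rounds in both EB and PC.
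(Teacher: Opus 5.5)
Your proposal is correct and is essentially the paper's own proof: compute short unique edge IDs, have every vertex send the IDs of its at most $\Delta$ incident edges to all neighbours in $O(\Delta)$ rounds, recover the third edge of a triangle (resp.\ the missing face $S\setminus\{v_i\}$ of a simplex) as an ID common to the received lists, check the remaining conditions locally, and let the smallest-ID vertex output. The paper states the edge-ID computation only for EB; your $O(\Delta)$-round version for PC fills that in, provided the port numbers are sent in a canonical order (e.g.\ by vertex set) known to both endpoints so the receiver can tell which edge each one belongs to.
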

\begin{proof}
	We assume that each edge $e$ has a unique $O(\log n)$-bit ID. In the EB model with the $\KT_1$ assumption, such an assignment of edge IDs can be performed in a single round (see Section~\ref{sec:edge-ids}). Then in $O(\Delta)$ rounds, each vertex broadcasts the IDs of all its incident edges to all of its neighbors.

	We first consider the case of triangle enumeration. Let $v_1$ be a vertex with with neighboring vertices $v_0$ and $v_2$, where $v_0, v_1 \in e_0$ and $v_1, v_2 \in e_1$. Note that $T = (v_0, e_0, v_1, e_1, v_2, e_2)$ form a (closed) triangle for some edge $e_2$ if and only if the following conditions hold:
	\begin{enumerate}
		\item $v_0, v_2 \in e_2$
		\item $v_1 \notin e_2$
		\item $v_2 \notin e_0$
		\item $v_0 \notin e_1$
	\end{enumerate}
	Note that Condition~1 holds if and only if $v_1$ receives the message $e_2$ from both $v_0$ and $v_1$ when the latter vertices broadcast their incident edges. Further, $v_1$ can verify Conditions~2--4 locally, as $v_1$ knows the IDs of the vertices in its incident edges ($e_1$ and $e_0$), and $v_1$ knows whether or not $v_1 \in e_2$. Thus, after the broadcast, $v_1$ can determine every incident triangle.

	By symmetry, $v_0$ and $v_2$ also witness the triangle $T$, hence if each triangle should only be output by a single vertex, we can require that only the incident vertex with smallest ID outputs $T$.

	Now consider the case of simplex enumeration. As in the triangle case, each vertex broadcasts the IDs of its incident edges to its neighbors in $O(\Delta)$ rounds of EB or PC. Observe that $S = \set{v_0, v_1, \ldots, v_r}$ forms a simplex if and only if for every index $i$, we have $e_i \in E$ where $e_i = S \setminus \set{v_i}$. Since $v_i$ is incident to every (potential) edge $e_j$ for $j \neq i$, in order to determine if $S$ is a simplex, $v_i$ need only learn whether or not $e_i \in E$. While $v_i$ does not initially know the ID of $e_i$ (if present), $v_i$ can determine if $e_i$ is present from its received messages because $e_i$ is the unique edge for which $v_i$ would receive the same ID from all $v_j$, $j \neq i$. Thus, $v_i$ determines that $e_i$ exists (and $S$ is a simplex) if and only if (1)~every edge $e_j$, $j \neq i$ exists and~(2) there is an edge ID that $v_i$ receives from all $v_j$, $j \neq i$. By symmetry, all vertices $v_i$ learn of the existence of the simplex $S$, hence the $v_i$ with the smallest ID can output $S$, if present.
\end{proof}

The following corollary follows immediately from the fact that sparse hypergraphs have maximum degree satisfying $\Delta = O(n)$.

\begin{corollary}\label{cor:sparse-ub}
	Suppose $H = (V, E)$ is a sparse hypergraph. Then triangle and simplex enumeration can be performed in $O(n)$ rounds on $H$ in the EB and PC models.
\end{corollary}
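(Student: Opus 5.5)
The corollary follows directly from Theorem~\ref{thm:bounded-degree-ub}. That theorem gives an $O(\Delta)$-round algorithm for triangle and simplex enumeration in the EB and PC models. So all that remains is to show that a sparse hypergraph has $\Delta = O(n)$, and then substitute this bound into the running time.

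The plan has two steps.

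First, I would bound the maximum degree. By definition of sparseness,
\[
\sum_{v \in V} \deg(v) = \sum_{e \in E} \abs{e} = O(n).
\]
Every degree is a nonnegative term of this sum, so for every vertex $v$ we get $\deg(v) \leq \sum_{u \in V} \deg(u) = O(n)$. Hence $\Delta = O(n)$.

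Second, I would invoke Theorem~\ref{thm:bounded-degree-ub} with this value of $\Delta$. That theorem assumes unique $O(\log n)$-bit edge IDs, and these can be computed in $O(1)$ rounds under $\KT_1$, as described in Section~\ref{sec:edge-ids}. After that, each vertex broadcasts its incident edge IDs to its neighbors, which takes $O(\Delta) = O(n)$ rounds in both EB and PC. The enumeration and tie-breaking arguments in that proof do not use sparseness at all, so correctness carries over unchanged.

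I see no real obstacle here. The only point to check is that the ID-assignment step fits within the $O(n)$ budget in the PC model as well as in EB. In PC, the minimum-ID vertex of each edge $e$ must send its port number for $e$ to the other vertices of $e$. A vertex $v$ may have to send one port number per incident edge to a given neighbor, and a pair of vertices can share up to $\deg(v)$ edges. This takes $O(\deg(v)) = O(\Delta) = O(n)$ rounds, which is absorbed into the bound. The total is therefore $O(n)$ rounds in both models.
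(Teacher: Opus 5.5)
Your proposal is correct and follows the paper's argument: sparseness gives $\Delta \leq \sum_{v} \deg(v) = O(n)$, and substituting this into Theorem~\ref{thm:bounded-degree-ub} yields the $O(n)$ bound. You also check that the edge-ID assignment fits in $O(\Delta)$ rounds of PC; the paper only states this cost for EB and leaves the PC case implicit.
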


We now prove a nearly matching lower bound for sparse hypergraphs for PC model.

\begin{theorem}\label{thm:sparse-lb}
	For every $\eps > 0$, triangle and simplex enumeration in sparse hypergraphs requires $\Omega(n^{1 - \eps})$ rounds in the PRIMAL CONGEST (PC) model.
\end{theorem}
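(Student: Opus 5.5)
The plan is to reduce to the CLIQUE lower bound of Theorem~\ref{thm:clique-lb} by embedding a dense random hypergraph of large rank on a small vertex set into a sparse hypergraph on $n$ vertices. Fix $\eps > 0$ and choose a constant rank $r$ with $r \geq 3$ large enough; the choice is made precise below. Set $n' = \lfloor n^{1/r} \rfloor$ and sample $H' \sim H(n', r, 1/2)$ on a vertex set $U$ of size $n'$. Then $H'$ has at most $\binom{n'}{r} \leq n$ edges, so $\sum_{e} \abs{e} \leq r n = O(n)$. Let $H$ be $H'$ together with $n - n'$ additional vertices. These extra vertices can be isolated, or, if connectivity is desired, attached by a path of rank-$2$ edges hanging off a single vertex of $U$. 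This adds $O(n)$ to the degree sum and creates no new triangles or simplices. In either case $H$ is sparse and contains exactly the triangles (simplices) of $H'$.

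Next I run the simulation argument. Any PC algorithm on $H$ running in $R$ rounds yields a CLIQUE algorithm on $n'$ vertices for $H'$ running in $O(R)$ rounds. The vertices of $U$ simulate themselves, and in the isolated variant the extra vertices do nothing. In the path variant, the attachment vertex $u_0 \in U$ simulates the whole path, whose initial knowledge is fixed and known in advance. Messages along the path that reach $u_0$ are internal computation, so there is no loss. Communication among vertices of $U$ in PC uses only primal-graph edges, which are a subset of the CLIQUE links. Outputs of extra vertices are transferred to $u_0$. One subtlety: in the path variant, $u_0$ would then output many triangles, but the information-theoretic argument only needs some vertex of the CLIQUE instance to output $\Omega(n'^{3r-4})$ triangles, which still holds. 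The lower bound also needs to be stated per-vertex-transcript, and the transcript of $u_0$ consists only of messages from $U$. Hence $R = \Omega(n'^{r-5/3}/\log n')$ for triangles and $R = \Omega(n'^{r-2+1/(r+1)}/\log n')$ for simplices, by Theorem~\ref{thm:clique-lb}.

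Finally, I convert the exponents. Since $n' = n^{1/r}$, the triangle bound gives $n^{(r-5/3)/r}/\log n = n^{1-5/(3r)}/\log n$. Choosing $r > 5/(3\eps)$, with slack to absorb the $\log n$, yields $\Omega(n^{1-\eps})$. For simplices, $n'^{r-2+1/(r+1)} = n^{1 - 2/r + 1/(r(r+1))}$, and $r > 2/\eps$ suffices. The hidden constants depend on $r$ and hence on $\eps$; this is acceptable because $\eps$ is fixed.

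I expect the main obstacle to be making the CLIQUE lower bound apply when $H$ has more vertices than $H'$. The information-theoretic proof of Theorem~\ref{thm:clique-lb} bounds bits received per round by $O(n' \log n')$ for a vertex in the $n'$-vertex clique. In the embedded PC instance, a vertex of $U$ receives at most $n'-1$ messages from $U$ per round, each of $B = O(\log n) = O(r \log n')$ bits. So the per-round bit budget is $O(n' \log n)$, which only costs a factor of $r$, a constant. I will verify this carefully, along with checking that the extra vertices' initial knowledge carries no entropy about $\calE$. The latter holds because they are adjacent to $U$ only through fixed, deterministic edges.
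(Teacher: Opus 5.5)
Your proposal is correct and follows essentially the paper's route: embed $H(n',r,1/2)$ with $n'\approx n^{1/r}$ into a sparse $n$-vertex hypergraph, simulate PC rounds by CLIQUE rounds among the $n'$ dense vertices, invoke Theorem~\ref{thm:clique-lb}, and convert exponents with $r=\Theta(1/\eps)$. The differences are minor. The paper pads with a hub vertex $n'+1$ joined to all of $[n']$ plus one giant hyperedge, which costs an extra CLIQUE broadcast round per simulated round; your pendant path (or isolated vertices) is simulated locally by $u_0$ with no extra communication, and you also explicitly handle the $B=O(\log n)=O(r\log n')$ message-size discrepancy that the paper glosses over.
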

\begin{proof}
	The idea of the proof is to construct a distribution of sparse hypergraphs that contain smaller dense sub-hypergraphs to which we can apply our general CLIQUE lower bound (Theorem~\ref{thm:clique-lb}). More concretely, let $r = \lceil{2 / \eps}\rceil$, and for any $n$, let $n' = \lceil{n^{1/r}}\rceil$. We form a distribution of hypergraphs $H$ on the vertex set $V = \set{1, 2, \ldots, n}$ as follows.
	\begin{itemize}
		\item For vertices in $[n'] = \set{1, 2, \ldots, n}$ and each $r$-subset $e$ of $[n']$, include $e \in E$ independently with probability $1/2$.
		\item For each $v \leq n'$, add a rank-$2$ hyperedge $\set{v, n'+1}$ to $E$.
		\item Add a single hyperedge $\set{n'+1, n'+2, \ldots, n}$ to $E$.
	\end{itemize}
	Note that the edges in Item~1 are distributed according to $G(n', r, 1/2)$. The remaining edges simply ensure that the hypergraph is connected while maintaining global sparseness.

	First, we give an upper bound on the number and size of edges in $H$. In Step~1 above, at most $\binom{n'}{r} = O((n')^r)$ edges are added, and each has size $r$. Step~2 adds another $n' = O(n^{1/r})$ edges, each of rank~2. Finally, Step~3 adds a single edge of size $n - n' = O(n)$. Summing the sizes of the edges, we obtain
	\begin{equation*}
		\sum_{e \in E} \abs{e} \leq r (n')^r + n' + (n - n') = r (n^{1/r})^r + n = O(r n).
	\end{equation*}
	Thus, for constant $r = O(1/\eps)$, $H$ is sparse.

	Next, we argue that each round of PC can be simulated by $O(1)$ rounds of the CLIQUE model in $H' = H\left[[n']\right]$, the restriction of $H$ to vertices in $[n']$. Clearly, a single round of EC in $H'$ can be simulated by a single round of CLIQUE on $H'$. In an additional round of CLIQUE, each $v \in [n']$ can broadcast the message it sent to the vertex $n' + 1$ so that each $v$ learns all messages received by $n'+1$ in the round. Arguing by induction on the rounds, each $v \in [n']$ can simulate the states of all vertices $w \in [n] \setminus [n']$ in the simulated EC round, as the initial states and edges incident to these vertices are known to all vertices. Thus, 2 rounds of CLIQUE on $H'$ are sufficient to simulate a single round of EC on $H$.

	For the triangle enumeration lower bound, we next apply the lower bound of Theorem~\ref{thm:clique-lb}, which implies that $\Omega((n')^{r-5/3})$ rounds are necessary to perform triangle enumeration in $H'$ in CLIQUE. The simulation result above implies that if triangle enumeration could be performed in $t$ rounds of EC in $H$, then triangle enumeration could be performed in $2t$ rounds of CLIQUE. Thus, $\Omega((n')^{r-5/3})$ rounds are necessary for triangle enumeration in EC on graphs constructed as above. The theorem then follows by substituting $n' = n^{1/r}$:
	\begin{equation*}
		(n')^{r-5/3} = (n^{1/r})^{r - 5/3} = n^{1 - 5/3r} \geq n^{1-\eps}.
	\end{equation*}

	The argument for the simplex enumeration is analogous. Theorem~\ref{thm:s-clique-ub} for instances of size $n' = n^{1/r}$ where $r$ is sufficiently large that $2/r - 1/(r^2 + r) \leq \varepsilon$ then gives a lower bound of $\Omega(n^{1-\varepsilon})$.
\end{proof}

While Theorem~\ref{thm:sparse-lb} shows that the simple algorithms of Theorem~\ref{thm:bounded-degree-ub} are essentially optimal for sparse hypergraphs in the PC model, the same lower bound is not implied for the EB model. Indeed, the construction in the lower bound argument has maximum 2-degree $\Delta_2 = \Theta(n^{1-2/r})$ with high probability. Applying Proposition~\ref{prop:pc-relationship}, the implied lower bound for the EB model is only $\Omega(n^{1/3r})$. Taking $r = 2$, we obtain the following corollary.

\begin{corollary}\label{cor:sparse-eb-lb}
	Triangle and simplex enumeration in sparse hypergraphs require $\Omega(n^{1/6})$ rounds in the EC, EB, EU, ES, and EP models.
\end{corollary}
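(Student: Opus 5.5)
We instantiate the construction from the proof of Theorem~\ref{thm:sparse-lb} with $r = 2$, and reduce the five edge models to PC on this instance. Concretely, set $n' = \lceil n^{1/2}\rceil$. The hypergraph $H$ on $[n]$ consists of three parts: a random graph $G(n', 2, 1/2)$ on $[n']$; the rank-$2$ edges $\set{v, n'+1}$ for $v \le n'$; and the single large edge $\set{n'+1, \ldots, n}$. As computed in that proof, $\sum_{e} \abs{e} \le 2(n')^2 + n' + n = O(n)$, so $H$ is sparse.

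The first step is to bound the pair degree. Any pair inside $[n']$ lies in at most one rank-$2$ edge. A pair $\set{v, n'+1}$ lies in exactly one edge. A pair inside $\set{n'+1,\ldots,n}$ lies only in the large edge. Hence $\Delta_2 \le 1$ deterministically. This is the point of taking $r=2$: the $\Theta(n^{1-2/r})$ pair degree that spoils the general reduction disappears.

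Next we chain simulations. By Proposition~\ref{prop:pc-relationship} with $\Delta_2 = 1$, one EC round on $H$ is simulated by one PC round on $H$. By Proposition~\ref{prop:model-relationships}, EB, EU, ES and EP are each simulated by EC with $O(1)$ overhead for $r = 2$. For ES this uses the direction ``ES is simulated by one round of EC'', which holds, so every one of the five models is at least as weak as PC up to constants. By the argument in the proof of Theorem~\ref{thm:sparse-lb}, one PC round on $H$ is simulated by $2$ CLIQUE rounds on $H' = H[[n']]$: vertices of $[n']$ broadcast what they sent to $n'+1$, and everything beyond $n'+1$ has a fixed, publicly known state. Hence a $t$-round algorithm in any of the five models yields an $O(t)$-round CLIQUE algorithm on $H' \sim G(n', 2, 1/2)$. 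Triangles and $2$-simplices coincide with graph triangles of $H'$, together with triangles through $n'+1$, which are absent since $n'+1$'s edges form a star. So Theorem~\ref{thm:clique-lb} with $r=2$ applies to both tasks.

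The main obstacle is the logarithmic factor. A black-box application of Theorem~\ref{thm:clique-lb} gives only $\Omega((n')^{1/3}/\log n') = \Omega(n^{1/6}/\log n)$. To reach the stated $\Omega(n^{1/6})$, I would reopen the information-theoretic argument rather than cite the theorem. The heavy vertex $\bfw \in [n']$ must still collect $\Omega((n')^{4/3})$ bits of entropy about $\calE$ by Corollary~\ref{cor:triangle-edge-bound} and Lemma~\ref{lem:triangle-information}. I would then bound the information it gains per round by the \emph{entropy} of what it receives, not by its raw bit length $O(n'\log n)$. Each incoming message comes from a vertex whose state is a function of $O(n')$ random edge bits plus prior transcripts. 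The attempt is to show the per-round mutual information gain is $O(n')$, for example by charging each received message to the fresh edge bits it can reveal. If this charging works, the bound is $\Omega((n')^{4/3}/n') = \Omega(n^{1/6})$. I am not confident this refinement holds in general, since $O(\log n)$-bit messages can carry genuinely new information. If it fails, the honest conclusion of this route is $\Omega(n^{1/6}/\log n)$, and the corollary should be read with logarithmic factors suppressed, consistent with how Theorem~\ref{thm:sparse-lb} absorbs them.
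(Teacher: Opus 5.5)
Your route is the paper's: instantiate the construction of Theorem~\ref{thm:sparse-lb} at $r=2$, observe $\Delta_2\le 1$, and pass from the five edge models through EC and PC (Propositions~\ref{prop:model-relationships} and~\ref{prop:pc-relationship}) to CLIQUE on $H'$. Strictly, $H$ has rank $n-n'$ because of the large edge, not $2$. This is harmless, since the simulations you need (EB, EU, ES, EP by EC) carry no $r$-dependence.

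One step is false, however: triangles through $n'+1$ are not absent. For every edge $\set{u,v}$ of $H'$, the cycle $(n'+1,\set{n'+1,u},u,\set{u,v},v,\set{v,n'+1})$ is a simple, even induced, triangle, and $\set{u,v,n'+1}$ is a $2$-simplex. There are about $(n')^2/4$ of these. The star shape only rules out triangles built from star edges alone. The fix is easy but must be stated. The CLIQUE simulation discards every listed motif containing $n'+1$, and it assigns the outputs of the simulated vertices $w>n'$ to one fixed vertex of $[n']$. Only then is the simulated algorithm a valid enumeration algorithm on $H'$ that lists each motif exactly once.

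Your proposed refinement for removing the logarithm cannot succeed. You apply it to the simulated CLIQUE algorithm on $H'\sim H(n',2,1/2)$. But Theorem~\ref{thm:clique-ub} at $r=2$ (Dolev--Lenzen--Peled) solves exactly this instance in $O((n')^{1/3}/\log n')$ CLIQUE rounds. During that run, each vertex learns $\Theta((n')^{4/3})$ edge indicators that are independent of its own input, i.e.\ $\Omega(n'\log n')$ bits of information per round on average. So no per-round $O(n')$ bound holds. Any argument that factors through the CLIQUE simulation is therefore capped at $\Omega((n')^{1/3}/\log n')=\Omega(n^{1/6}/\log n)$.

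That is what your route proves, and it is also all the paper's own argument supports. The paper reads off $\Omega(n^{1/(3r)})$ from the bound $\Omega((n')^{r-5/3})$ in the proof of Theorem~\ref{thm:sparse-lb}, where the $1/\log n$ of Theorem~\ref{thm:clique-lb} was dropped. Dropping it is harmless there because of the $\eps$ slack, but not at the fixed value $r=2$. Drop the refinement and state the bound as $\Omega(n^{1/6}/\log n)$; your fallback reading is the correct one.
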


While this bound is far from tight, it shows that a polynomial number of rounds is required for triangle enumeration in sparse hypergraphs, even in the the strong EDGE CLIQUE (EC) model.


\subsection{Everywhere Sparse Hypergraphs}
\label{sec:everywhere-sparse}

The $\Omega(n^{1-\eps})$ lower bounds of Theorem~\ref{thm:sparse-lb} crucially employ hypergraphs with small but dense subsets of vertices (namely, the set $[n']$ in the construction). Thus, the hypergraphs are globally sparse, having average degree $O(1)$, but locally dense, as the average degree in $H' = H\left[n'\right]$ is $\Theta(n^{1-1/r})$ in expectation. On the other hand, the upper bound of Theorem~\ref{thm:bounded-degree-ub} implies that if \emph{all} degrees are bounded, then triangle enumeration can be performed efficiently. In this section, we consider more refined hypergraph parameter that measures the maximum density of sub-hypergraphs of $H$.

In the case of \emph{graphs}, the idea of ``maximum density subgraph'' is captured by the graph's arboricity. Here the \dft{arboricity}, $\alpha(G)$, of a graph $G = (V, E)$ is defined to be the minimum number of forests into which $G$'s edges can be partitioned. The seminal result of Nash-Williams shows that this definition is equivalent to all subgraphs of $G$ being sparse.

\begin{theorem}[Nash-Williams~\cite{Nash-Williams1964Decomposition}]\label{thm:nash-williams}
	Suppose $G = (V, E)$ is a graph. Then
	\begin{equation}
		\alpha(G) = \max_{U \subseteq V} \frac{\abs{E[U]}}{\abs{U} - 1},
	\end{equation}
	where $E[U]$ denotes the set of induced edges of $G$ within the set $U$.
\end{theorem}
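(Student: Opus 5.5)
The statement as written should be read with a ceiling: the arboricity is an integer, so the precise form is $\alpha(G) = \max_{U} \lceil \abs{E[U]}/(\abs{U}-1) \rceil$, with the maximum over $U \subseteq V$ with $\abs{U} \geq 2$. We prove two inequalities.

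\emph{Lower bound.} Suppose $E$ is partitioned into $k$ forests $F_1, \ldots, F_k$. Fix $U$. Each $F_j \cap E[U]$ is a forest on the vertex set $U$, so it has at most $\abs{U} - 1$ edges. Summing over $j$ gives $\abs{E[U]} \leq k(\abs{U}-1)$. Since $k$ is an integer, $k \geq \lceil \abs{E[U]}/(\abs{U}-1) \rceil$ for every $U$.

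\emph{Upper bound.} Let $k$ be the right-hand side, so that $\abs{E[U]} \leq k(\abs{U}-1)$ for every $U$. We must cover $E$ by $k$ forests. We proceed greedily by augmentation: maintain $k$ edge-disjoint forests $F_1, \ldots, F_k$ and show that any uncovered edge $e_0$ can be inserted, possibly after rearranging the others. Since $\abs{E} \leq k(\abs{V}-1)$ is not by itself sufficient, the real content is the exchange argument.

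Build a directed exchange graph on edges. From an edge $f$ currently in $F_i$, or from $e_0$, there is an arc to each edge $g \in F_j$ with $j \neq i$ such that $f$ closes a cycle in $F_j$ and $g$ lies on that cycle. The replacement $F_j - g + f$ is then again a forest. If some edge reachable from $e_0$ can be placed directly into some $F_j$ without forming a cycle, take a \emph{shortest} such exchange path and perform all the swaps along it. Minimality guarantees that the successive swaps do not interfere, so every $F_j$ remains a forest and the coverage grows by one. This is the standard matroid-union augmenting-path lemma for graphic matroids, and verifying that the shortest-path choice keeps each forest acyclic is the main technical obstacle. I would handle it as in the usual proof of Edmonds' matroid partition theorem, by induction on the path length using the unique-circuit property.

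\emph{The blocking case.} Otherwise no augmentation exists. Let $R$ be the set of edges reachable from $e_0$, including $e_0$ itself, and let $U$ be the vertex set of a connected component of $\bigcup_j (F_j \cap R)$ that contains $e_0$. Every reachable edge closes a cycle in every forest other than its own. Hence each $F_j$ restricted to the span of $R$ is a maximal forest of that span, and one checks that it is in fact a spanning tree on $U$. Consequently $E[U]$ contains $k(\abs{U}-1)$ forest edges together with $e_0$, so $\abs{E[U]} > k(\abs{U}-1)$. This contradicts the choice of $k$, so augmentation always succeeds.

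Alternatively, the upper bound can be obtained by citing Edmonds' covering theorem: $E$ is covered by $k$ independent sets of the graphic matroid iff $\abs{F} \leq k\,\rank(F)$ for all $F \subseteq E$. To verify this condition, split an arbitrary $F$ into its connected components with vertex sets $U_1, \ldots, U_c$. Then $\abs{F} \leq \sum_i \abs{E[U_i]} \leq k\sum_i (\abs{U_i}-1) = k\,\rank(F)$.
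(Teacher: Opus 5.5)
Your proof is correct. The paper gives no proof of this statement: it quotes Nash-Williams' theorem only as background for the definition of maximum density, so there is no argument in the paper to compare yours against.

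Your correction of the statement is warranted. Without the ceiling the identity already fails for $K_3$, which has arboricity $2$ while $\max_U \abs{E[U]}/(\abs{U}-1) = 3/2$. In addition, $U$ must range over sets with $\abs{U}\ge 2$.

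The lower bound is fine. Your blocking-case argument also checks out. Every edge of $R$, including $e_0$, has its endpoints joined by a path in $F_j\cap R$ for each $j$: either it lies in $F_j$, or it closes a cycle in $F_j$ whose edges are all its arc-successors. Hence each $F_j\cap R$, restricted to the component $U$, is a spanning tree of $U$, which gives $\abs{E[U]}\ge k(\abs{U}-1)+1$.

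Two small wording fixes are needed. First, $e_0$ is not an edge of $\bigcup_j(F_j\cap R)$, so $U$ should be the component containing the endpoints of $e_0$; they lie in a single component by the observation above. Second, the sink condition should read ``into some $F_j$ other than its own forest.''

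The one step you defer is that simultaneous swaps along a shortest exchange path keep every forest acyclic. This is the standard no-shortcut lemma from matroid partition, and it is legitimate to cite it.

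Your second route, via Edmonds' covering theorem with the component-wise check $\abs{F}\le\sum_i\abs{E[U_i]}\le k\sum_i(\abs{U_i}-1)=k\,\rank(F)$, is complete as written and is the shorter of the two. The augmenting-path version has the advantage of being self-contained and constructive, since it actually produces the forest decomposition.
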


For hypergraphs, we define an analogous measure of maximum sub-hypergraph density. Before introducing this measure, we define the restriction of a hypergraph $H = (V, E)$ to a subset $U$ of vertices.

\begin{definition}\label{dfn:restriction}
	Let $H = (V, E)$ be a hypergraph and $U \subseteq V$ a non-empty subset of vertices. We define the \dft{restriction} of $H$ to $U$ to be the hypergraph $H[U] = (U, E[U])$ where
	\begin{equation*}
		E[U] = \set{e \cap U \sucht e \in E, \abs{e \cap U} \geq 2}.
	\end{equation*}
\end{definition}

\begin{definition}\label{dfn:max-density}
	Let $H = (V, E)$ be a hypergraph and $U \subseteq V$ a non-empty subset. We define the \dft{density} of $U$ relative to $H$, denoted $\mu_H(U)$ to be
	\begin{equation}\label{eqn:density}
		\mu_H(U) = \frac{1}{\abs{U}} \sum_{e \in E[U]} \abs{e}.
	\end{equation}
	We define the \dft{maximum density} of $H$, $\mu(H)$, to be
	\begin{equation}
		\mu(H) = \max_{U \subseteq V} \mu_H(U),
	\end{equation}
	where the maximum is taken over all non-empty subsets of $V$.
\end{definition}

The maximum density is analogous to the arboricity in the case of graphs, as if $H$ is a graph (i.e., rank-$2$ hypergraph), then $\alpha(H) \leq \mu(H) \leq 2 \alpha(H)$. 
In distributed graph algorithms, arboricity has been used to parameterize the running time of graph algorithms. This technique, developed by Barenboim and Elkin~\cite{Barenboim2013Distributed}, relies on a structural result for graphs of bounded arboricity. Namely, the vertices graphs with arboricity $\alpha$ can be partitioned into $O(\log n)$ layers, where each layer has $O(\alpha)$ neighbors in higher layers. This structure can then be used to perform a distributed computation on the layers sequentially, where each layer can be treated has having maximum degree $O(\alpha)$. We show that we can use maximum density in essentially the same way.

\begin{proposition}\label{prop:max-density}
	Suppose $H = (V, E)$ is a hypergraph with maximum density $\mu(H) \leq \alpha$, for some constant $\alpha$. Then the vertices $V$ of $H$ can be partitioned into $\ell = \lceil \log n \rceil$ layers, $V = L_1 \cup L_2 \cup \cdots \cup L_\ell$ such that each vertex $v \in L_i$ has degree at most $2 \alpha$ in $H_i = H[L_i \cup L_{i+1} \cup \cdots \cup L_\ell]$.
\end{proposition}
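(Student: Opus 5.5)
The plan is the standard greedy peeling argument (as for bounded arboricity), using the density bound to show that at least half of the remaining vertices are removed at every step.

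\textbf{Construction.} Set $V_1 = V$. Given $V_i$, let $H_i = H[V_i]$ and
\[
L_i = \set{v \in V_i \sucht \deg_{H_i}(v) \le 2\alpha},
\qquad
V_{i+1} = V_i \setminus L_i .
\]
By construction, every $v \in L_i$ has degree at most $2\alpha$ in $H_i = H[L_i \cup L_{i+1} \cup \cdots]$, since $V_i = L_i \cup L_{i+1} \cup \cdots$. So the only thing to prove is that $V_{\ell+1} = \emptyset$ for $\ell = \lceil \log n \rceil$. I will get this from the strict halving bound $\abs{V_{i+1}} < \abs{V_i}/2$, applied whenever $V_i \neq \emptyset$. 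Then $\abs{V_{\ell+1}} < n/2^{\ell} \le 1$. The case $n = 1$ is trivial: a single vertex has degree $0$ in $H[V]$, since restricted edges must have size at least $2$.

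\textbf{Density step.} First I check that restrictions compose: for $U \subseteq V_i$, the restriction of $H_i$ to $U$ equals $H[U]$, because $(e \cap V_i) \cap U = e \cap U$. Second, a double counting gives
\[
\sum_{e \in E[V_i]} \abs{e} \;=\; \sum_{v \in V_i} \deg_{H_i}(v).
\]
Combining this with $\mu_H(V_i) \le \mu(H) \le \alpha$ shows that the average degree in $H_i$ is at most $\alpha$. By Markov's inequality, the number of vertices with $\deg_{H_i}(v) > 2\alpha$ is strictly less than
\[
\frac{\sum_{v} \deg_{H_i}(v)}{2\alpha} \;\le\; \frac{\abs{V_i}}{2},
\]
when the degree sum is positive. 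When the degree sum is zero, every vertex of $V_i$ joins $L_i$. Either way $\abs{V_{i+1}} < \abs{V_i}/2$.

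\textbf{Main obstacle.} The only delicate point is bookkeeping in the definitions. Distinct edges $e \neq e'$ may have the same trace $e \cap V_i$, and under the set definition of $E[U]$ these traces collapse into one. I would treat $E[U]$ and degrees in $H_i$ consistently, either both as multisets or both as sets, so that the double-counting identity holds exactly. With the multiset convention the degree bound is the stronger statement. Under either convention the same Markov argument goes through, because degree and density are computed from the same collection of restricted edges.
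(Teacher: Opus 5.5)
Your proof is correct and follows the same route as the paper. Both arguments greedily peel off the vertices of degree at most $2\alpha$ in $H[V_i]$, then use the identity $\sum_{e \in E[V_i]} \abs{e} = \sum_{v \in V_i} \deg_{H_i}(v)$ together with $\mu(H) \le \alpha$ to show that at least half of the remaining vertices leave at each step. Your strict bound $\abs{V_{i+1}} < \abs{V_i}/2$ is a small improvement: it guarantees that exactly $\lceil \log n \rceil$ layers suffice, whereas the paper's non-strict halving only shows that at most one vertex is left over after $\lceil \log n \rceil$ layers.
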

\begin{proof}
	Define $V_0 = V$, $L_1 = \set{v \in V \sucht \deg(v) \leq 2 \alpha}$, and take $V_1 = V \setminus L_1$. Given $L_i$ and $V_i$, inductively define $H_{i+1} = H[V_i]$, $L_{i+1} = \set{v \in V_i \sucht \deg_{H_{i+1}} \leq 2 \alpha}$, and $V_{i+1} = V_i \setminus L_i$. We will show that for $\ell = \lceil \log n \rceil$, the sequence $L_1, L_2, \ldots, L_\ell$ is a partition of the vertices as above.

	We claim that for each $i$, $\abs{L_i} \geq \frac{1}{2} \abs{V_{i-1}}$, hence $\abs{V_{i+1}} \leq \frac{1}{2} \abs{V_i}$. By induction, this implies that $\abs{V_i} \leq 2^{-i} \abs{V_0} = n / 2^i$, whence the proposition follows.

	To prove the claim, suppose to the contrary that $\abs{L_i} < \frac{1}{2} \abs{V_{i-1}}$. That is, more than half of the vertices $v \in V_{i-1}$ have degree larger than $2 \alpha$. Then we compute:
	\begin{align*}
		\mu_H(V_{i-1}) & = \sum_{e \in E[V_{i-1}]} \abs{e}          \\
		               & = \sum_{v \in V_{i-1}} \deg_{H_{i-1}} (v)  \\
		               & \geq \sum_{v \in V_{i}} \deg_{H_{i-1}} (v) \\
		               & > \frac 1 2 \abs{V_{i-1}} (2 \alpha)       \\
		               & = \alpha \abs{V_{i-1}}.
	\end{align*}
	The first inequality holds because $V_{i} \subseteq V_{i-1}$ by construction. The second inequality holds by the assumption that $\abs{V_{i}} > \frac 1 2 \abs{V_{i-1}}$ and definition of $V_{i}$. This contradicts the assumption that $\mu(H) < \alpha$, so the desired result follows.
\end{proof}

Proposition~\ref{prop:max-density} suggests the following strategy for triangle enumeration in hypergraphs $H$ with bounded $\mu(H)$: given the sequence $L_1, L_2, \ldots, L_\ell$ of layers, process the layers sequentially. When vertices in layer $i$ become active, they broadcast their (at most $2 \mu(H)$) incident edges (complete neighbor list) that intersect $L_i \cup \cdots \cup L_\ell$ to all neighbors in $O(\mu(H) \rank(H))$ rounds. After all vertices in $L_1 \cup \cdots \cup L_i$ are processed, each vertex $v \in L_i$ will have received the incident edges and neighboring vertices of all its neighbors in $L_1 \cup \cdots \cup L_i$. Thus, $v$ can determine all triangles $T = (v, e_1, u, e_2, w, e_3)$ containing a vertex $u \in L_1 \cup \cdots \cup L_i$. If $u$ and $w$ were in the same layer $L_j$ with $j < i$, then $v$ is responsible for outputting $T$. If $v$ and $w$ are both in layers strictly after $L_j$, or if $u$, $v$, and $w$ are in the same layer, then whichever vertex has the smaller ID is responsible for outputting $T$.


If the vertices are given an upper bound, $\alpha$, on $\mu(H)$ and each vertex $v$ is given the index $i$ for which $v \in L_i$, then the procedure above can easily be simulated in $O(\alpha r \log n)$ rounds of EDGE BROADCAST or PRIMAL CONGEST if $H$ has rank $r$. Even if the vertices do not know their layer indices but do know $\alpha$, they can still simulate the procedure as follows. Consider phases lasting $O(\alpha r)$ rounds. Each phase, a vertex can be waiting, broadcasting, or completed. A vertex is broadcasting in a phase $i$ if it has at most $2 \alpha$ incident edges containing vertices that were waiting in the previous phase. When a vertex is broadcasting, it broadcasts its incident edges in $O(\alpha r)$ rounds. In subsequent rounds, the vertex is in the completed state and performs no further computation. We next show that prior knowledge of an upper bound $\alpha$ on $\mu$  is not necessary in order to implement this procedure.

\subsubsection{Triangle Enumeration Without Prior Knowledge of \texorpdfstring{$\mu$}{mu}}


Here we consider the case where vertices $v$ do not any prior knowledge of $\mu(H)$. For any $\alpha > 0$, consider the following method for testing whether the maximum density of $H$ is at most $2 \alpha$. Each round each vertex is either active or inactive, with all vertices being initialized to the active state. In every round, each active $v$ examines the state of its neighbors. If $v$ has at most $2 \alpha$ active neighbors, then $v$ sets its state to inactive. Otherwise $v$ remains active. The proof of Proposition~\ref{prop:max-density} implies that if $H$ has maximum density at most $\alpha$, then every vertex will be inactive after at most $\lceil \log n \rceil$ rounds. We formalize this procedure with the function $\peel$ defined in Algorithm~\ref{alg:peel-routines}.

\begin{algorithm}
	\caption{The peel algorithm for obtaining local estimates of the maximum density of a hypergraph.\label{alg:peel-routines}}
	\begin{algorithmic}[1]
		\Function{Peel}{$\alpha$} \Comment{$\alpha$ is a guess for the maximum density of $H$}
		\State set state to ``active''
		\For{round $i = 1, 2, \ldots, \lceil \log n \rceil$}
		\State broadcast state to neighbors
		\If{$v$ has $\leq 2 \alpha$ incident edges containing active vertices $\neq v$}
		\State set state to inactive
		\EndIf
		\EndFor
		\State \Return state
		\EndFunction
		\Function{ParallelPeel}{$M$} \Comment{$M$ an a priori upper bound on maximum density}
		\For{$i = 1, 2, \ldots, \lceil\log M\rceil$ \textbf{in parallel}}
		\State Simulate $\peel(2^i)$
		\EndFor
		\State $\alpha_v \gets 2^i$ where $i$ is the minimum index for which $v$ was inactive at the termination of $\peel(2^i)$
		\State \Return $\alpha_v$
		\EndFunction
		\Function{Flood}{$\alpha_v$}
		\State $\beta_v \gets \alpha_v$
		\For{$i = 1, 2, \ldots, \lceil\log n\rceil$}
		\State broadcast $\beta_v$ to all neighbors
		\State $m \gets$ maximum value received
		\If{$m > \beta_v$}
		\State $\beta_v \gets m$
		\EndIf
		\EndFor
		\State \Return $\beta_v$
		\EndFunction
	\end{algorithmic}
\end{algorithm}

\begin{lemma}\label{lem:peel}
	Suppose $H$ has maximum density $\mu$. Then at the end of an execution of $\peel(\mu)$, every vertex is in the inactive state. Conversely, if an execution of $\peel(\alpha)$ results some vertex being active when the procedure terminates, then $\mu > \alpha/2$.
\end{lemma}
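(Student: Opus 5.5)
The plan is to reduce both directions of the statement to the halving argument behind Proposition~\ref{prop:max-density}. The key step is to check that one round of $\peel(\alpha)$ performs exactly one step of the layered peeling. Let $A_i$ be the set of vertices still active at the start of round $i$, so $A_1 = V$. An active vertex $v \in A_i$ becomes inactive in round $i$ exactly when the number of its incident edges containing an active vertex other than $v$ is at most $2\alpha$. This count is precisely $\deg_{H[A_i]}(v)$: an edge $e \ni v$ contributes an element $e \cap A_i \in E[A_i]$ iff $\abs{e \cap A_i} \geq 2$, i.e., iff $e$ contains another active vertex. One subtlety is that distinct $e$ may give the same $e \cap A_i$. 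I will handle this by treating $E[A_i]$ as a multiset in the density calculation, which only enlarges $\mu_H(A_i)$ in the direction we need, or by noting that the paper's convention for $\deg_{H_i}$ is the one already used in Proposition~\ref{prop:max-density}. With this identification, $A_{i+1} = A_i \setminus \set{v \in A_i : \deg_{H[A_i]}(v) \leq 2\alpha}$, which is exactly the recursion $V_{i+1}$ from the proof of Proposition~\ref{prop:max-density}.

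For the forward direction, I run $\peel(\mu)$ and bound the active set. Suppose more than half of $A_i$ stays active after round $i$. Every surviving vertex has $\deg_{H[A_i]}(v) > 2\mu$, so
\[
\mu_H(A_i)\abs{A_i} = \sum_{e\in E[A_i]}\abs{e} = \sum_{v\in A_i}\deg_{H[A_i]}(v) > \tfrac12\abs{A_i}\cdot 2\mu = \mu\abs{A_i}.
\]
This contradicts $\mu_H(A_i) \le \mu(H) = \mu$. Hence $\abs{A_{i+1}} \le \abs{A_i}/2$, and after $\lceil\log n\rceil$ rounds we get $\abs{A} \le n/2^{\lceil \log n\rceil} \le 1$. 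A single remaining active vertex has no other active vertices in its edges, so its count is $0 \le 2\mu$ and it becomes inactive in that round. To make this clean I will check the off-by-one carefully, either by arguing the set size drops to $<1$, or by noting that a singleton active set is deactivated in the same round it is reached. Since the inequality above is strict, it even suffices that $\mu(H) \le \mu$.

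For the converse, I take the contrapositive. Suppose $\mu \le \alpha/2$. Then $\mu(H) \le \alpha/2 \le \alpha$, so the same argument with threshold $2\alpha \ge 2\mu$ shows every active set halves each round and all vertices are inactive at termination. Therefore, if some vertex remains active, $\mu > \alpha/2$; in fact this argument gives the stronger conclusion $\mu > \alpha$.

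The main obstacle is the bookkeeping between the algorithm's local rule and the density definition. One issue is the multiplicity of restricted edges noted above. The other is the exact round count needed to empty the active set, and the final-round argument is where I would be most careful.
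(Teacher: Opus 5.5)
Your route is the paper's: the forward claim is the halving argument of Proposition~\ref{prop:max-density}, and the converse comes from a round in which the active set fails to halve. The paper argues that converse directly and you by contrapositive, and both versions actually yield the stronger $\mu>\alpha$. Two of your hedged bookkeeping fixes, however, do not work as stated.

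\textbf{Multiplicities.} Suppose you read $E[A_i]$ as a multiset only in the calculation for $A_i$, while $\mu(H)$ keeps the set reading. Then the larger sum no longer contradicts $\mu_H(A_i)\le\mu(H)$, so enlarging it is the wrong direction. The multiset reading (one copy of $e\cap U$ for each $e\in E$ with $\abs{e\cap U}\ge 2$) must be built into Definition~\ref{dfn:max-density} for every $U$. This is not cosmetic: under the literal set reading the lemma is false. Take $V=\set{u,v,w_1,\dots,w_K}$ and $E=\set{\set{u,v,w_k}\sucht 1\le k\le K}$. Every $U$ has set-density below $3$, yet for $K\ge 6$ the vertices $u$ and $v$ always have $K>2\mu(H)$ incident edges containing the other, still active, vertex. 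So they never deactivate in $\peel(\mu(H))$. Under the multiset reading $\mu_H(\set{u,v})=K$ and the problem disappears. The paper's proof tacitly needs the multiset reading too, so your alternative of appealing to the convention of Proposition~\ref{prop:max-density} does not settle anything.

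\textbf{Round count.} Non-strict halving $\abs{A_{i+1}}\le\abs{A_i}/2$ only gives $\abs{A_{\lceil\log n\rceil+1}}\le 1$ when $n$ is a power of two. Your singleton remark does not rescue this, because the lone survivor may first appear only after the last round. The paper's bound $\abs{V_i}\le n/2^i$ has the same off-by-one. Make the halving strict instead. If $A_{i+1}\neq\varnothing$, then $\mu\abs{A_i}\ge\sum_{v\in A_i}\deg_{H[A_i]}(v)\ge\sum_{v\in A_{i+1}}\deg_{H[A_i]}(v)>2\mu\abs{A_{i+1}}$, so $\abs{A_{i+1}}<\abs{A_i}/2$; for $\mu=0$ this is already a contradiction. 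Since the $A_i$ are nested, a nonempty final active set would make every step strict. That gives $\abs{A_{\lceil\log n\rceil+1}}<n/2^{\lceil\log n\rceil}\le 1$, a contradiction.
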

\begin{proof}
	The first claim of the lemma follows immediately from the proof of Proposition~\ref{prop:max-density}.

	Suppose there is some vertex $v$ that does not become inactive during the execution. For $i = 0, 1, \ldots, \lceil \log n \rceil$, let $A_i$ denote the set of vertices active after round $i$ (where $A_0 = V$). Since $A_{\lceil \log n \rceil} \neq \varnothing$, it must be the case that there is an index $j$ for which $\abs{A_{j+1}} > \frac 1 2 \abs{A_j}$. Thus, at least half of the vertices in $A_j$ have degree more than $2 \alpha$ in $H[A_j]$. Therefore,
	\begin{equation*}
		\sum_{v \in A_j} \deg_{A_j}(v) > \frac 1 2 \abs{A_j} (2 \alpha) = \alpha \abs{A_j},
	\end{equation*}
	which implies that $\mu > \alpha$, as desired.
\end{proof}

Lemma~\ref{lem:peel} suggests the following procedure for estimating the maximum density of a hypergraph $H$: run $\peel(\alpha)$ for geometrically increasing values of $\alpha = 2, 4, 8, \ldots$ up to some \emph{a priori} upper bound $M$ on $\mu(H)$. The smallest such value of $\alpha$ for which all vertices terminate in an inactive state is a $4$-approximation of $\mu(H)$. Since $\peel$ only uses $1$ bit of communication per edge per round, we can in fact run all of these instantiations of $\peel$ in parallel using $\log M$ bits of communication per edge per round. We call this procedure $\ppeel$. Assuming $M$ is polynomial in $n$, $\ppeel$ terminates in $O(\log n)$ rounds of the EB or PC models.

Unfortunately, running $\ppeel$ as a distributed procedure does not give each vertex an estimate of the global maximum density. Instead, the value $\alpha_v$ is only a local estimate of the maximum density. In the function $\flood$, each vertex adopts a value $\beta_v$ that is the maximum estimate of $\mu(H)$ of any vertex in its $\lceil\log n\rceil$-hop neighborhood. Since $\max_v \alpha_v \leq 4 \mu(H)$, we have $\beta_v \leq 4 \mu(H)$. Moreover, we show that if each vertex $v$ runs $\peel$ using $\beta_v$ as its estimate, then all vertices will terminate after $\lceil\log n\rceil$ rounds.

\begin{lemma}\label{lem:local-density}
	Suppose the $\peel$ procedure is run where each vertex $v$ uses the input value $\alpha \gets \beta_v$. Then when $\peel$ terminates, all vertices are in the inactive state.
\end{lemma}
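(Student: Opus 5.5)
The plan is a monotonicity (coupling) argument: I compare the heterogeneous execution, where each $u$ runs $\peel$ with threshold $\beta_u$, against the homogeneous execution of $\peel(\alpha_v)$ for a fixed vertex $v$. By definition of $\ppeel$, $v$ is inactive at the end of $\peel(\alpha_v)$. (If $\alpha_v$ is undefined because $v$ never became inactive for any $2^i \leq M$, the bound $M \geq \mu(H)$ and Lemma~\ref{lem:peel} rule this out, so $\alpha_v$ is well defined.) It then suffices to show that in the heterogeneous run, $v$ is also inactive after $\lceil \log n \rceil$ rounds. Since $v$ is arbitrary, this proves the lemma.

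\textbf{Step 1 (thresholds near $v$ dominate $\alpha_v$).} $\flood$ runs $\lceil \log n\rceil$ rounds, starts with $\beta_u=\alpha_u$, and only increases $\beta_u$ to maxima of received values. By induction on rounds, after round $i$ we have $\beta_u \geq \alpha_w$ for every $w$ with $\dist_H(u,w)\le i$. Hence every $u$ with $\dist_H(u,v)\leq \lceil\log n\rceil$ satisfies $\beta_u \geq \alpha_v$; in particular $\beta_v\ge\alpha_v$.

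\textbf{Step 2 (locality plus monotonicity).} Let $A_i$ be the active set after round $i$ of $\peel(\alpha_v)$, and $A'_i$ the active set after round $i$ of the heterogeneous run, with $A_0=A'_0=V$. I claim by induction on $i$ that for every $u$ with $\dist_H(u,v)\le \lceil\log n\rceil - i$,
\[
u\in A'_i \implies u\in A_i .
\]
For the inductive step, take such a $u$ and assume $u \in A'_{i}$; I show $u \in A_i$. Every vertex in an edge incident to $u$ lies within distance $\lceil\log n\rceil-(i-1)$ of $v$, so the hypothesis gives $A'_{i-1}\cap N(u)\subseteq A_{i-1}\cap N(u)$. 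Therefore the number of incident edges of $u$ containing an active vertex other than $u$ is no larger in the heterogeneous run. Moreover, $u$'s threshold satisfies $2\beta_u\ge 2\alpha_v$ by Step~1. Suppose $u$ became inactive in round $i$ of $\peel(\alpha_v)$, that is, its count was at most $2\alpha_v$. Then its heterogeneous count is at most $2\beta_u$, so $u$ also becomes inactive, contradicting $u \in A'_i$. Inactivity is absorbing in both runs, and $u\in A'_{i-1}$ implies $u\in A_{i-1}$ by hypothesis, so the claim follows.

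Applying the claim with $i=\lceil\log n\rceil$ and $u=v$ gives $v\in A'_{\lceil\log n\rceil}\Rightarrow v\in A_{\lceil\log n\rceil}=\varnothing\cap\{v\}$. So $v$ is inactive at termination, and hence all vertices are.

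The main obstacle is Step~2's bookkeeping of radii. $v$'s state after $i$ rounds depends on thresholds within distance $i$, but Step~1 only guarantees domination within distance $\lceil\log n\rceil$. The shrinking-radius induction handles this exactly, because the flood radius and the peel length are both $\lceil\log n\rceil$. I must also check that the edge-counting rule in $\peel$ is monotone in the active set, which holds since it counts edges containing some active vertex other than $u$.
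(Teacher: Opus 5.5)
Your proposal is correct and follows the paper's proof: fix $v$, compare the uniform run of $\peel(\alpha_v)$ with the run in which each $u$ uses $\beta_u$, use the flooding guarantee $\beta_u \geq \alpha_v$ on the $\lceil\log n\rceil$-ball around $v$, and run a shrinking-radius induction that relies on the inactivation count being monotone in the active set. Your contrapositive formulation (heterogeneous-active implies uniform-active, with an explicit $i=0$ base case and the absorbing property) is a slightly tidier version of the same induction. One small fix: the last line should read $v \notin A_{\lceil\log n\rceil}$, not $A_{\lceil\log n\rceil}=\varnothing\cap\{v\}$, since other vertices may remain active in the uniform run.
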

\begin{proof}
	Consider a vertex $v \in V$. Let $\alpha_v$ be the value returned by $\ppeel$ at vertex $v$, and let $\beta_v$ be the value returned by the subsequent call to $\flood(\alpha_v)$. Consider an execution of $\peel(\beta_v)$. Here $B_k(v)$ denotes the $k$-hop neighborhood of $v$ in $H$.

	We claim that for all $i = 1, 2, \ldots, \lceil \log n \rceil$, if $u \in B_{\lceil \log n \rceil - i}(v)$ and $u$ became inactive in round $i$ during the (uniform) execution\footnote{By ``uniform execution'', we mean the execution of $\peel$ in which all vertices used $\alpha_v$ as their local input. This is in contrast to the ``nonuniform execution'' in which each vertex $w$ uses $\beta_w$ as its local input.} of $\peel(\alpha_v)$, then $u$ became inactive by the end round $i$ in the (nonuniform) execution of $\peel(\beta_v)$.

	We prove the claim by induction on $i$. For the base case, suppose $u$ became inactive in the first round of the uniform execution. Therefore, we have $\deg(u) \leq \alpha_v$. Since $u \in B_{\lceil \log n \rceil}(v)$, we have $\beta_u \geq \alpha_v$, hence $\deg(u) \leq \beta_u$, so $u$ became inactive after the first round of the nonuniform execution as well.

	For the inductive step, suppose the conclusion of the claim holds after round $i$. Consider round $i + 1$, and let $u \in B_{\lceil\log n\rceil - i - 1}(v)$. By the inductive hypothesis, all vertices in $B_{\lceil\log n\rceil - i}(v)$ that became inactive in round $i$ of the uniform execution also became inactive in round $i$ of the nonuniform execution. Since $B_{\lceil\log n\rceil - i}(v)$ includes all neighbors of $u$, this means that all neighboring vertices of $u$ that became inactive in the uniform execution became inactive in the nonuniform execution. Thus, the degree of $u$ in the nonuniform execution is at most its degree in the uniform execution. Since $\beta_u \geq \alpha_v$ and $u$'s degree in the uniform execution was at most $2 \alpha_v$, $u$'s degree is at most $2 \beta_u$ in the nonuniform execution, hence $u$ becomes inactive in this round. The claim follows.

	Finally, applying the claim for $i = \lceil\log n\rceil$, we find that $v$ becomes inactive by the final round of the execution, as desired.
\end{proof}

The proof of the following lemma is straightforward.

\begin{lemma}\label{lem:ppeel-running-time}
	The procedure $\ppeel(M)$ has running time
	\begin{equation*}
		O(\log n \cdot \max\set{1, \frac{\log M}{\log n}})
	\end{equation*}
	in the Edge Broadcast model.
\end{lemma}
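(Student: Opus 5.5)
The plan is to bound the number of rounds and the message size of each of the $\lceil \log M \rceil$ parallel instances of $\peel$, and then account for packing all instances into $B = \Theta(\log n)$-bit messages. First I will observe that a single instance $\peel(2^i)$ runs for exactly $\lceil \log n \rceil$ rounds. In each round, every vertex broadcasts one bit (its active/inactive state) on each incident edge and then updates its state locally. Under $\KT_1$ this is possible because $v$ knows the vertex set of each incident edge, so from the received bits it can count how many incident edges contain an active vertex other than itself. In the EB model, one broadcast per incident edge per round is exactly what the model allows, so one round of $\peel$ costs one round of EB with $1$-bit messages.

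Next I will run the instances $i = 1, \ldots, \lceil \log M \rceil$ in lockstep. In round $j$ of the combined procedure, each vertex $v$ broadcasts on each incident edge $e$ the vector of its $\lceil \log M \rceil$ state bits, one bit per instance. Each recipient then updates each instance independently. This simulates round $j$ of every instance simultaneously with a message of $\lceil \log M \rceil$ bits per edge. Since a message of $\lceil\log M\rceil$ bits can be split into $\lceil \lceil \log M \rceil / B \rceil = O(\max\{1, \log M / \log n\})$ consecutive EB rounds, each simulated round costs that many EB rounds. Over $\lceil \log n \rceil$ simulated rounds this gives the total $O(\log n \cdot \max\{1, \log M/\log n\})$.

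Finally, computing $\alpha_v$ as the minimum $2^i$ for which $v$ ended inactive is purely local and costs no rounds. The only subtle point, which I expect to be the (mild) obstacle, is making sure that the instances do not interfere and that vertices agree on the round schedule. This holds because all vertices know $n$ (or a polynomial upper bound) and $M$, so they all use the same $\lceil\log n\rceil$ and $\lceil \log M\rceil$ and can align the subrounds. When $M = n^{O(1)}$, this specializes to the $O(\log n)$ bound claimed in the surrounding text.
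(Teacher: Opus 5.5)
Your proposal is correct and takes the same approach as the paper. The paper calls the proof straightforward and relies on exactly the observation you make: each instance of $\peel$ uses one bit per edge per round for $\lceil \log n \rceil$ rounds. Running all $\lceil \log M \rceil$ instances in lockstep therefore needs $\lceil \log M \rceil$ bits per edge per simulated round, which costs $O(\max\{1, \log M/\log n\})$ EB rounds with $B = \Theta(\log n)$.
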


Regarding the parameter $M$ in the $\ppeel$ function, we note that it is sufficient to take $M$ to be (an upper bound on) the maximum degree of a vertex in $H$. Even more pessimistically, for hypergraphs of rank $r$, we can even take $M = \binom{n}{r}$, which results in a running time of $O(r + \log n)$. Since this expression is dominated by the running time of the triangle enumeration algorithm that follows, it is sufficient to use this crude bound when applying $\ppeel$ as a subroutine.

The idea of our triangle and simplex enumeration algorithm is simple. We run $\ppeel$ then $\flood$ so that each vertex $v$ obtains its local maximum density estimate, $\beta_v$. The network then runs $\peel$, where each vertex uses $\beta_v$ as its local input. The vertex $v$ then records the round $i$ in which it becomes inactive, as well as its incident edges containing active vertex (other than $v$ itself) in round $i$. Finally, $v$ broadcasts all of these edges to its neighborhood.

\begin{algorithm}
	\caption{Triangle and simplex enumeration algorithm for bounded maximum density.}
	\begin{algorithmic}[1]
		\Function{EnumerateMotifs}{$n, r$}
		\State $\alpha_v \gets \ppeel\paren{\binom{n}{r}}$
		\State $\beta_v \gets \flood(\alpha_v)$
		\State execute $\peel(\beta_v)$\label{ln:peel}
		\State $t_v \gets$ round in which $v$ became inactive in the execution of $\peel(\beta_v)$
		\State $E_v \gets$ set of edges incident to $v$ containing active vertices (other than $v$) in round $t_v$\label{ln:active-neighbors}
		\State broadcast $E_v$ to all neighbors\label{ln:edge-broadcast}
		\State \Return set of triangles or simplices that contain $v$ contained in the union of received edges and edges incident to $v$
		\EndFunction
	\end{algorithmic}
\end{algorithm}

\begin{remark}
	As stated, triangles or simplices output by vertices in $\enumeratetriangles$ will generally be output by multiple vertices (contained in the triangle/simplex). We can ensure that each motif is only output by a single vertex by modifying the procedure slightly. Specifically, we add a round in which every vertex $v$ broadcasts $t_v$ to its neighbors. In the case of triangles, $v$ outputs incident triangles $T = (v, e_1, u, e_2, w, e_3)$ for which $t_v = \max\set{t_v, t_u, t_w}$ and has minimal ID in the case where the maximum is not unique. Similarly, for simplices $S = \set{v, v_1, v_2, \ldots, v_r}$, $v$ will only output $S$ if $t_v$ was the latest termination time among the $v_i$, breaking ties by ID.
\end{remark}

\begin{theorem}\label{thm:maximum-density-enumeration}
	The function $\enumeratetriangles$ solves the triangle and simplex enumeration problems in $O(\mu(H) r + \log n)$ rounds of the EDGE BROADCAST (EB) and PRIMAL CONGEST (PC) models in hypergraphs with $n$ vertices, rank $r$, and maximum density $\mu(H)$.
\end{theorem}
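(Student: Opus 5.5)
The proof splits into three parts: the running time of each phase of $\enumeratetriangles$, termination of the nonuniform peeling, and correctness, meaning every triangle or simplex is seen by some vertex that contains it.

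\emph{Running time.} By Lemma~\ref{lem:ppeel-running-time} with $M=\binom{n}{r}$, $\ppeel$ takes $O(r+\log n)$ rounds. $\flood$ takes $\lceil\log n\rceil$ rounds, since each message is an $O(\log M)$-bit power-of-two exponent. The call $\peel(\beta_v)$ on line~\ref{ln:peel} takes $\lceil \log n\rceil$ rounds with one-bit messages. By Lemma~\ref{lem:local-density}, every vertex is inactive when it terminates, so each $t_v$ is well defined.

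By the rule in $\peel$, $\abs{E_v}\le 2\beta_v$. Every $\alpha_u$ satisfies $\alpha_u\le 4\mu(H)$: by Lemma~\ref{lem:peel}, the first power of two $\alpha\ge\mu$ already makes every vertex inactive, so $\alpha_u\le 2\lceil\mu\rceil$ up to rounding. Hence $\beta_v\le 4\mu(H)$ as well. Broadcasting $E_v$ on line~\ref{ln:edge-broadcast} means sending at most $2\beta_v$ vertex lists of length at most $r$. This takes $O(\beta_v r)=O(\mu(H) r)$ rounds in EB, and the same in PC, where $v$ sends the same stream to every neighbour. Vertices with a smaller $\beta_v$ finish early and wait. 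The total is $O(\mu(H)r+\log n)$. The tie-breaking Remark adds one round.

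\emph{Correctness for triangles.} Fix a triangle $T=(v,e_1,u,e_2,w,e_3)$. Let $x$ be a vertex of $T$ with minimal $t_x$, breaking ties arbitrarily, and say $x=u$. At the start of round $t_u$, the vertices $v$ and $w$ are still active, because they become inactive no earlier than round $t_u$.

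\textbf{Main obstacle.} This needs care about the convention that "active in round $t_u$" means the state broadcast at the start of that round. I will fix the convention that $E_u$ is computed from the states broadcast in round $t_u$, before any update. With that convention, $v$ and $w$ are active at the start of round $t_u$.

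Hence $e_1\ni v$ and $e_2\ni w$ both lie in $E_u$, and $u$ broadcasts both to all its neighbours, in particular to $v$. Vertex $v$ knows $e_3$ because it is incident to $v$. So $v$ sees all three edges together with their vertex lists. It can therefore check simplicity and, if required, inducedness ($u\notin e_3$, $w\notin e_1$, $v\notin e_2$) locally, and it outputs $T$. The same holds for $w$.

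\emph{Correctness for simplices.} Let $S=\{v_0,\dots,v_r\}$, and let $v_j$ be a vertex with minimal $t_{v_j}$. Every edge $S\setminus\{v_i\}$ with $i\ne j$ contains $v_j$. Each such edge also contains some other vertex of $S$, which is still active at round $t_{v_j}$, so the edge lies in $E_{v_j}$ and is broadcast to all $v_i$. Each $v_i$ with $i\ne j$ is incident to every face except $S\setminus\{v_i\}$, and receives that face from $v_j$. So $v_i$ sees all $r+1$ faces and outputs $S$.

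Uniqueness of output then follows from the Remark. The vertex with maximal $t$ (smallest ID among ties) is never the unique minimizer unless all $t$ values coincide. If they do, any choice of minimizer $x$ still makes the other vertices see $T$ or $S$. The case $r\ge 2$ and a rank at least $2$ guarantee the designated vertex is not $x$ itself, or, when all $t$ values are equal, that it receives the broadcasts of the others. I expect this to be a short check once the activity convention is fixed.
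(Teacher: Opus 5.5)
Your proof follows the paper's. For correctness, take the vertex of the triangle (resp.\ simplex) that becomes inactive first in $\peel(\beta)$. Its motif edges still contain active vertices, so they are broadcast, and every other vertex of the motif then holds all of its edges. The running time is the sum of the phases. You are more careful than the paper in three places:
\begin{itemize}
\item you fix the convention that $E_x$ is computed from the states broadcast at the start of round $t_x$, which is what handles ties;
\item you invoke Lemma~\ref{lem:local-density} so that $t_v$ is defined;
\item you check the uniqueness rule of the Remark.
\end{itemize}
That last check fits in one line. The designated vertex (maximal $t$, smallest ID among ties) is never the \emph{unique} minimizer of $t$ on the motif. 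So choose a minimizer $x$ different from it, and the designated vertex sees the motif.

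One step of your running-time analysis does not follow from what you cite, though the paper asserts the same thing. Lemma~\ref{lem:ppeel-running-time} gives $O(\log n+\log M)$. For $M=\binom{n}{r}$ we have $\log M\ge r\log(n/r)$, which is $\Theta(r\log n)$ for, say, $r\le\sqrt n$. So you get $O(r\log n)$ for $\ppeel$, not $O(r+\log n)$. When $\mu(H)=O(1)$ but $r$ is large, your total is then $O(r\log n)$ rather than $O(\mu(H)r+\log n)$.

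The missing observation is that $\peel$ is monotone in $\alpha$. By induction on the round, the active set of $\peel(\alpha')$ is contained in that of $\peel(\alpha)$ whenever $\alpha'\ge\alpha$. This holds because inactive vertices never reactivate, and a smaller active set can only lower the count that is compared with the larger threshold $2\alpha'$. Hence, in each round, a vertex's states across all $\lceil\log M\rceil$ parallel instances are determined by a single threshold index of $O(\log\log M)=O(\log n)$ bits. So $\ppeel$ runs in $O(\log n)$ rounds.

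The same applies to $\flood$. What must be sent there is the exponent of $\beta_v$, which has $O(\log\log M)$ bits. Sending $O(\log M)$ bits per step, as you wrote, would again cost a factor of $r$. With this fix, the bound $O(\mu(H)r+\log n)$ goes through.
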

\begin{proof}
	For the correctness of the procedure, first consider the case of triangles. Let $T = (v_1, e_1, v_2, e_2, v_3, e_3)$ be a triangle. Suppose $v_1, v_2$ and $v_3$ become inactive in rounds $t_1$, $t_2$, and $t_3$ (respectively) in the execution of $\peel$ in Line~\ref{ln:peel} of $\enumeratetriangles$. Assume without loss of generality that $t_1 \leq t_2 \leq t_3$. Since $t_1 \leq t_2, t_3$, $v_2$ and $v_3$ were active in round $t_1$, we have $e_1, e_3 \in E_{v_1}$ in Line~\ref{ln:active-neighbors}. Therefore, these edges are broadcast by $v_1$. In particular, $v_2$ receives the contents of $e_3$ and $v_3$ receives the contents of $e_1$. Thus, both $v_2$ and $v_3$ infer the existence of $T$.

	For the case of simplices, suppose $S = \set{v_0, v_1, \ldots, v_r}$ is a simplex in $H$. Suppose each $v_i$ becomes inactive in round $t_i$ for $i = 0, 1, \ldots, r$ with $t_0 \leq t_1 \leq \cdots \leq t_r$. Then in round $t_0$, $v_0$ broadcasts its incident edges, in particular, $e_1, e_2, \ldots, e_r$ where $e_i = S \setminus \set{v_i}$. Thus, each $v_i$ learns $e_i$ for $i \geq 0$, from which $v_i$ can infer that $S$ is a simplex in $H$. On the other hand, if $S$ is not a simplex, then clearly no vertex $v_i$ will output $S$ because either $v_i$ will witness that some $e_j \notin E$ for $j \neq i$ (as $v_i$ would be incident to $e_j$) or $v_i$ will not receive $e_i$.

	To analyze the running time, we first observe that the running time of Lines~2--6 is $O(\log n + r)$. The running time of the broadcast in Line~\ref{ln:edge-broadcast} is at most $2 r \beta_b = O(r \mu(H))$. Therefore, the overall running time is $O(\mu(H) r + \log n)$, as claimed.
\end{proof}


\subsubsection{Near Optimality as a Function of \texorpdfstring{$\mu$}{mu}}

Here we show that the $\mu$-dependence in the upper bound of Theorem~\ref{thm:maximum-density-enumeration} is essentially tight for the PC model.

\begin{theorem}\label{thm:maximum-density-lb}
	For every $\eps > 0$, any algorithm that solves triangle or simplex enumeration with sufficiently high probability in hypergraphs with maximum density $\mu$ requires $\Omega(\mu^{1-\eps}/\log n)$ rounds in the PC model.
\end{theorem}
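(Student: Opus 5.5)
The plan is to reduce to the CLIQUE lower bound of Theorem~\ref{thm:clique-lb} by embedding a small random dense hypergraph, in the same way as in the proof of Theorem~\ref{thm:sparse-lb}. Given $\eps > 0$, fix a constant rank $r = \lceil 2/\eps \rceil$, chosen so that $(r-5/3)/(r-1) \geq 1-\eps$ and also $(r-2+1/(r+1))/(r-1) \geq 1 - \eps$. Given a target density $\mu$, set $n' = \Theta(\mu^{1/(r-1)})$. The instance will consist of $H' \sim H(n', r, 1/2)$ on vertex set $[n']$, together with the padding edges of Theorem~\ref{thm:sparse-lb}: a rank-$2$ edge $\set{v, n'+1}$ for each $v \le n'$, and one edge $\set{n'+1, \ldots, n}$. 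These edges make the hypergraph connected on $n$ vertices while adding essentially no density. We may also take $n' \le n$ arbitrary, so $\mu$ ranges up to roughly $n^{r-1}$.

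The first step is to bound the maximum density of this hypergraph by $O(\mu)$, deterministically and not just with high probability. Take any $U \subseteq V$. Each induced edge $e \cap U$ coming from $H'$ has size at most $r$. The number of such restricted edges is at most the number of edges of $H'$ meeting $U \cap [n']$ in at least two vertices, which is at most $\abs{U \cap [n']}\cdot\binom{n'-1}{r-1}$. Hence the contribution of $H'$ is at most $r\binom{n'-1}{r-1} = O((n')^{r-1})$ per vertex of $U$. The padding edges contribute $O(1)$ per vertex: the star edges give at most $2$ per vertex, and the single large edge gives at most $1$ per vertex. Therefore $\mu(H) = O((n')^{r-1}) = O(\mu)$. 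For the lower bound to be meaningful we also want $\mu(H) = \Omega(\mu)$. Taking $U = [n']$, Corollary~\ref{cor:expected-triangles} style concentration (or simply $\E\abs{E(H')} = \binom{n'}{r}/2$ together with the deterministic upper bound) gives density $\Theta((n')^{r-1})$ with constant probability. Since the statement only asks for hypergraphs with maximum density at most $\mu$, the upper bound is the essential part.

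The second step is the simulation, which is identical to the one in Theorem~\ref{thm:sparse-lb}. A round of PC on $H$ can be simulated in $O(1)$ rounds of CLIQUE on $[n']$: every vertex of $[n']$ broadcasts what it sends to $n'+1$, and the states of the vertices outside $[n']$ are then determined by public information. Hence, by Theorem~\ref{thm:clique-lb}, enumeration on $H$ with sufficiently high probability needs $\Omega((n')^{r-5/3}/\log n')$ rounds for triangles and $\Omega((n')^{r-2+1/(r+1)}/\log n')$ rounds for simplices. Substituting $n' = \Theta(\mu^{1/(r-1)})$ gives $\mu^{(r-5/3)/(r-1)} \ge \mu^{1-\eps}$ and $\mu^{(r-2+1/(r+1))/(r-1)} \geq \mu^{1-\eps}$ by the choice of $r$. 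The $\log n'$ term is at most $\log n$, which yields the claimed $\Omega(\mu^{1-\eps}/\log n)$ bound.

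The main point needing care is the simulation step, since the outside vertices in PC can relay between pairs inside $[n']$. I would handle it exactly as in Theorem~\ref{thm:sparse-lb}: all communication through $n'+1$ is made global by the extra broadcast round, and everything the vertices $n'+2, \ldots, n$ know is determined before they interact. A second, minor point is confirming that the constant exponent loss is absorbed by the choice $r = \lceil 2/\eps\rceil$.
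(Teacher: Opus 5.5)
Your proposal is correct and follows essentially the same route as the paper. Like the paper, you plant $H(n', r, 1/2)$ with $n' = \Theta(\mu^{1/(r-1)})$ inside the padded construction of Theorem~\ref{thm:sparse-lb}, invoke Theorem~\ref{thm:clique-lb}, and take $r$ large enough that the $O(1/r)$ exponent loss is at most $\eps$. Your explicit check that $\mu(H) = O((n')^{r-1})$ for every $U$ and your concrete choice $r = \lceil 2/\eps \rceil$ make precise what the paper's short proof leaves implicit.

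One minor correction: the $1/\log n$ in the final bound comes from each PC message carrying $\Theta(\log n)$ bits, not from weakening $\log n'$ to $\log n$. The intermediate bound $\Omega((n')^{r-5/3}/\log n')$ is therefore not what the simulation gives, but the final conclusion is unaffected.
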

\begin{proof}
	We first observe that a rank-$r$ clique on $n$ vertices has $\mu = \Theta(n^{r-1})$. Thus, for any $\mu$ and $r$, a hypergraph $H$ may contain a clique of size $n' = \mu^{1/(r-1)}$ and still have maximum density $O(\mu)$. By including each edge in such a clique independently with probability $1/2$, we obtain the distribution from the lower bound construction of Theorem~\ref{thm:clique-lb}, for which the implied lower bound is $\Omega((n')^{r - O(1)} / \log n) = \Omega(\mu^{1 - O(1/r)}/\log n)$. Using an identical construction to the proof of Theorem~\ref{thm:sparse-lb}, this bound holds for the PC model for hypergraphs with any number vertices as well. Taking $r$ sufficiently large, we obtain the desired result.
\end{proof}

Again, the lower bound of Theorem~\ref{thm:maximum-density-lb} does not apply to the EC, EB, EU, ES, and EP models because the constructed instances have (expected) maximum 2-degree $\Delta_2 = \Theta(\mu^{(r-2)/(r-1)})$. Thus, applying Proposition~\ref{prop:pc-relationship}, we obtain a lower bound of $\Omega(\mu^{1/3(r-1)})$ for these models. Taking $r = 2$, we obtain the following corollary.

\begin{corollary}\label{cor:maximum-density-lb}
	Any algorithm that solves triangle or simplex enumeration with sufficiently high probability in hypergraphs with maximum density $\mu$ requires $\Omega(\mu^{1/3}/\log n)$ rounds in the EC model.
\end{corollary}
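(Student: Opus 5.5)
The plan is to reduce to the PC lower bound of Theorem~\ref{thm:maximum-density-lb} via the simulation relationship of Proposition~\ref{prop:pc-relationship}, taking $r = 2$ as the text preceding the corollary suggests. We use the same hard distribution as in the proof of Theorem~\ref{thm:maximum-density-lb}. With $r=2$, take a random graph $G(n', 1/2)$ on $n' = \Theta(\mu)$ vertices, where $\mu$ is the target density, so that the rank-$2$ clique has $\mu = \Theta(n')$. Then pad it as in the proof of Theorem~\ref{thm:sparse-lb}: add rank-$2$ edges $\set{v, n'+1}$ and one large edge on the remaining vertices. The maximum density stays $O(\mu)$. Indeed, the padding edges contribute $O(1)$ to the density of any subset, and a subset meeting the large edge in $k \geq 2$ vertices gains at most $k$ to its edge-size sum.

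The key observation is the pair-degree bound. In this instance every pair of vertices lies in at most one common hyperedge, since the clique edges are distinct pairs and the padding edges meet the clique only in single vertices. Hence $\Delta_2 \le 1$. By Proposition~\ref{prop:pc-relationship}, one EC round is then simulable in $\Delta_2 = O(1)$ PC rounds, so $T_{PC} = O(T_{EC})$ on this family.

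By the argument of Theorem~\ref{thm:sparse-lb}, a PC (or EC) execution on the padded instance is simulated by $O(1)$ CLIQUE rounds on the clique part $H'$. Theorem~\ref{thm:clique-lb} with $r=2$ gives $\Omega((n')^{1/3}/\log n') = \Omega(\mu^{1/3}/\log n)$ rounds for triangle enumeration in CLIQUE on $H'$. For simplices in $2$-uniform graphs, $r$-simplices are exactly triangles, and the simplex bound $n^{r-2+1/(r+1)}$ also equals $n^{1/3}$ when $r=2$. Chaining these gives the same bound for EC.

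The main obstacle is to check that the padding preserves both $\mu = \Theta(n')$ and $\Delta_2 = O(1)$. We also need the simulation step to work directly for EC. Here I would argue as in Theorem~\ref{thm:sparse-lb}: vertex $n'+1$'s received messages can be rebroadcast in one extra CLIQUE round. For the large edge, everything its vertices do depends only on public information and on messages from $n'+1$, so each clique vertex can simulate those vertices locally. With these checked, the $\log n$ versus $\log n'$ discrepancy is harmless, and the stated $\Omega(\mu^{1/3}/\log n)$ follows.
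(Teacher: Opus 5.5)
Your proposal is correct and follows the paper's route. The paper observes that the instances from Theorem~\ref{thm:maximum-density-lb} have $\Delta_2 = \Theta(\mu^{(r-2)/(r-1)})$, divides the PC lower bound by $\Delta_2$ using Proposition~\ref{prop:pc-relationship}, and then sets $r=2$; you set $r=2$ from the start and check $\Delta_2 \le 1$ directly on the padded instance, which is the same argument with the verification written out explicitly.
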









%% file: conclusion.tex
\section{Discussion and Future Work}
\label{sec:conclusion}

In this paper, we have initiated the systematic study of distributed sub-hypergraph enumeration in hypergraphs. To this end, we (1)~introduced several computational models for hypergraphs that generalize the CONGEST model for graphs and evaluate their relative computational power, (2)~provided algorithms for distributed triangle and simplex enumeration in our computational models and proved their optimality in two such models, (3)~introduced classes of sparse and ``everywhere sparse'' hypergraphs and described efficient distributed algorithms for triangle and simplex enumeration in these classes.

While we proved that our triangle enumeration algorithms are optimal or near-optimal for the CLIQUE and PRIMAL CONGEST (PC) models, we do not prove optimality for the EDGE CLIQUE (EC) or weaker models. The reason is that these models provide potentially more computational power by communicating across parallel edges in a hypergraph. For example, consider the distribution of hypergraphs $G(n, r, 1/2)$ used to prove the $\Omega(n^{r - 5/3}/\log n)$ lower bound for triangle enumeration in the CLIQUE and PC models. Observe that if $H = (V, E) \sim G(n, r, 1/2)$, then for every pair of vertices $u, v \in V$, there are $\frac 1 2 \binom{n-2}{r-2}$ edges $e$ containing both $u$ and $v$ in expectation. A simple application of Chernoff bounds shows that for constant $r$, every pair of vertices $u$, $v$ is contained in $\Omega(n^{r-2})$ edges with high probability. Thus, in the EC model, this pair can exchange $\Omega(n^{r-2} \log n)$ bits each round. As a result, the messages from our adaptation of Dolev et al.'s algorithm~\cite{Dolev2012Tri} for triangle enumeration in the CLIQUE model can be simulated using only $O(n^{1/3}/\log n)$ communication rounds (with high probability over selection of $H \sim G(n, r, 1/2)$). We suspect that, at least for constant rank hypergraphs, this upper bound is achievable for all hypergraphs in the EC model.

\begin{question}\label{ques:ec-ub}
	It is possible to perform triangle enumeration in $\widetilde{O}(n^{1/3})$ rounds of the EC model in hypergraphs of bounded rank. Similarly, can we perform simplex enumeration in $\widetilde{O}(n^{1/(r+1)})$ rounds of the EC model?
\end{question}

We suspect an affirmative answer to Question~\ref{ques:ec-ub}. Towards proving such an upper bound, it seems possible that the techniques of Chang et al.~\cite{Chang2021Nearoptimal} could still be applied, albeit with more significant modifications than our discussion in Section~\ref{sec:general-hypergraphs}. Indeed, the main ingredients of Chang et al.---triangle enumeration in rapidly mixing (hyper)graphs and expander decompositions---would both need to be generalized in a nontrivial way. Yet some parts of the machinery employed by Chang et al.\ do generalize in a natural way to the EC model in such a way that parallel edges can be exploited for more efficient message routing. We conclude the paper by describing such techniques that we believe will be useful in addressing triangle enumeration and other problems in hypergraphs. We focus on making algorithms efficient in the EDGE UNICAST (EU) model, which is more restrictive (and likely more realistic) than the EC model in hypergraphs with unbounded rank.

One basic idea for generalizing CONGEST algorithms to the EU model is to simulate the execution of a CONGEST algorithm on a graph embedded in a hypergraph network. One way to an embedded graph in a hypergraph $H = (V, E)$ is by forming a graph $G = (V, E')$ where each edge $e \in E$ is replaced by graph with vertex set $e$. We call such an embedded graph a \dft{graph reduction} of $H$.

\begin{definition}\label{dfn:graph-reduction}
	Let $H = (V, E)$ be a hypergraph and let $\mathcal{G} = \set{G_2, G_3, \ldots,}$ be a family of graphs containing a single graph $G_k$ on $k$ vertices for each integer $k \geq 2$. Then a \dft{graph reduction} of $H$ by $\mathcal{G}$ is a multigraph obtained by replacing each edge $e \in E$ with an isomorphic copy of $G_{\abs{e}}$, where parallel edges may occur between pairs of vertices contained in multiple edges in $E$.
\end{definition}

For suitably chosen families $\mathcal{G}$, graph reductions of a hypergraph can be advantageous for simulating CONGEST algorithms in the EC model. We make the following observations. Throughout, we let $H = (V, E)$ be a hypergraph, $\mathcal{G}$ a family of graphs as above, and $G_H$ the graph reduction of $H$ by $\mathcal{G}$.

\begin{enumerate}
	\item Suppose $\mathcal{G}$ has maximum degree at most $\delta$---i.e., every $G_k \in \mathcal{G}$ has maximum degree at most $\delta$. Then each vertex $v$ in $G_H$ has degree at most $\delta \deg(v)$, where $\deg(v)$ is $v$'s degree in $H$.
	\item If $\mathcal{G}$ has maximum degree at most $\delta$, then a single round of CONGEST in $G_H$ can be simulated in $\delta$ rounds of EU in $H$.
	\item Suppose each graph $G_k \in \mathcal{G}$ has diameter $d(k)$. Then $G_H$ has diameter at most $d(r) D$, where $D$ is the diameter of $H$ and $r$ is its rank.
\end{enumerate}

These observations suggest that many communication primitives such as broadcast and convergecast~\cite{Peleg2000Distributed} can be simulated efficiently in the EU model by simulating CONGEST algorithms in graph reductions. All that is required is to choose graph families $\mathcal{G}$ with bounded degree and small diameter (e.g., $\mathrm{diam}(G_k) = O(\log k)$).

For efficient message routing in hypergraphs, we can employ the framework of Ghaffari et al.~\cite{Ghaffari2017Distributed,Ghaffari2018New}, which solves point-to-point message routing parameterized by the network's mixing time (a crucial ingredient to Chang et al.'s triangle enumeration algorithm~\cite{Chang2021Nearoptimal}). For any hypergraph $H$, we can define a random walk on $H$ according to the following process: to step from a vertex $v$, choose an edge $e$ incident to $v$ uniformly at random, then choose a vertex $u \in e$ uniformly at random. That is, the transition probability to move from a vertex $u$ from $v$ is
\begin{equation}\label{eqn:transition}
	P(v, u) = \frac{1}{\deg(v)} \sum_{e \ni u, v} \frac{1}{\abs{e}}.
\end{equation}
In the case of graphs, these transition probabilities give precisely the lazy random walk on the graph. A straightforward calculation shows that in general, this hypergraph random walk has a stationary distribution that is proportional to the degree distribution.

Using transition probabilities~(\ref{eqn:transition}), we can define the conductance $\Phi$ of the hypergraph $H$ to be the conductance of the underlying Markov chain. As shown by Sinclair and Jerrum~\cite{Sinclair1989Approximate}, the conductance $\Phi$ controls the mixing time $\tmix$ of the random walk $H$. Applying this machinery to a graph reduction $G_H$, one can show that if $H$ has conductance $\Phi$ and the graph family $\mathcal{G}$ is a family of  $k$-regular expander graphs with expansion $\eps$, then the graph reduction $G_H$ has conductance at least $\frac{\eps}{k} \Phi$. Thus, if $H$ is rapidly mixing, then so is the graph reduction $G_H$. We can then simulate the message routing procedures of Ghaffari et al.~\cite{Ghaffari2017Distributed,Ghaffari2018New} in $G_H$ to efficiently route messages in the EC model in $H$.

Applying this approach for message routing, we could then try to generalize the triangle enumeration procedure of Chang et al.~\cite{Chang2021Nearoptimal} to perform triangle enumeration in $\widetilde{O}(n^{1/3})$ rounds of EC. Indeed, a natural generalization of the partition of (possible) triangles used by Chang et al.\ ensures that each vertex $v$ is the source of $O(n^{1/3} \deg(v))$ messages. But in order to route all messages in $\widetilde O(n^{1/3} \tmix)$ rounds, each vertex must also be the destination of $O(\deg(v) n^{1/3})$, which does not seem to be guaranteed by Chang et al.'s partition for hypergraphs. Finding a suitable modification of this method seems to be a promising approach to an improved upper bound for triangle enumeration in rapidly mixing hypergraphs in the EC model. Generalizing expander decompositions to hypergraphs could then yield efficient algorithms for arbitrary hypergraphs as well.

\subsection{Further Directions}

Beyond triangle and simplex enumeration, we hope that the definitions and techniques presented in this paper prove valuable in solving other distributed problems in hypergraphs. One natural extension of triangle or simplex enumeration is to consider more general sub-hypergraph enumeration, such as large cliques or other motifs.


